\definecolor{codegreen}{rgb}{0,0.6,0}
\definecolor{codegray}{rgb}{0.5,0.5,0.5}
\definecolor{codepurple}{rgb}{0.58,0,0.82}
\definecolor{backcolour}{rgb}{0.95,0.95,0.92}
\lstdefinestyle{mystyle}{
    backgroundcolor=\color{backcolour},   
    commentstyle=\color{codegreen},
    keywordstyle=\color{magenta},
    numberstyle=\tiny\color{codegray},
    stringstyle=\color{codepurple},
    basicstyle=\ttfamily\footnotesize,
    breakatwhitespace=false,         
    breaklines=true,                 
    captionpos=b,                    
    keepspaces=true,                 
    numbers=left,                    
    numbersep=5pt,                  
    showspaces=false,                
    showstringspaces=false,
    showtabs=false,                  
    tabsize=2
}
\tikzset{mynode/.style={ }}
\tikzset{
	modal/.style={>=stealth’,shorten >=1pt,shorten <=1pt,auto,node distance=2.5cm,
		semithick},
	world/.style={circle,draw,minimum size=0.5cm,fill=gray!15},
	point/.style={circle,draw,inner sep=0.5mm,fill=black},
	reflexive above/.style={->,loop,looseness=7,in=120,out=60},
	reflexive below/.style={->,loop,looseness=7,in=240,out=300},
	reflexive left/.style={->,loop,looseness=7,in=150,out=210},
	reflexive right/.style={->,loop,looseness=7,in=30,out=330}
}
\tikzset{
	->,
	>=stealth,
	node distance=20mm,
	every state/.style={thick, fill=gray!5},
	initial text=$ $,
}
\tikzset{
modal/.style={
>=stealth',shorten >=1pt,shorten <=1pt,auto,
node distance=2.5cm,semithick},
point/.style={circle,draw,fill=black,inner sep=0.70mm}}
\newenvironment{bprooftree}
  {\leavevmode\hbox\bgroup}
  {\DisplayProof\egroup}
\definecolor{lightgreen}{HTML}{90EE90}
  \DeclareSymbolFont{stix@largesymbols}{LS2}{stixex}{m}{n}
  \DeclareMathDelimiter{\lBrace}{\mathopen} {stix@largesymbols}{"E8}%
                                            {stix@largesymbols}{"0E}
  \DeclareMathDelimiter{\rBrace}{\mathclose}{stix@largesymbols}{"E9}%
                                            {stix@largesymbols}{"0F}
\author[L. Fusco]{Ludovico Fusco}\address{L. Fusco: University of Urbino Carlo Bo, Urbino, Italy}\email{l.fusco2@campus.uniurb.it}
\author[A. Aldini]{Alessandro Aldini}\address{A. Aldini: University of Urbino Carlo Bo, Urbino, Italy}\email{alessandro.aldini@uniurb.it}
\keywords{Non-commutative modal logics, Sequent calculi, Structural operators, Cut elimination, Unary residuation, Divisibility in free monoids, Trace-based modelling of concurrent systems}
\subjclass{\sloppy{03B45, 03F05, 03F52, 03G10, 06A15, 06F05, 20M35, 68Q85.}}
\newtheorem{proposition}{Proposition}[section]
\newtheorem{lemma}[proposition]{Lemma}
\newtheorem{theorem}[proposition]{Theorem}
\newtheorem{corollary}[proposition]{Corollary}
\theoremstyle{definition}
\newtheorem{definition}[proposition]{Definition}
\newtheorem{example}[proposition]{Example}
\newtheorem{remark}[proposition]{Remark}
\newtheorem*{rexample}{Running example}
\date{December 5, 2025}
\begin{document}
\title[A Cut-Free Sequent Calculus for the Analysis of Finite-Trace Properties \dots]{A Cut-Free Sequent Calculus for the Analysis of Finite-Trace Properties in Concurrent Systems}
\begin{abstract}
We address the problem of identifying a proof-theoretic framework that enables a compositional analysis of finite-trace properties in concurrent systems, with a particular focus on those specified via prefix-closure. To this end, we investigate the interaction of a prefix-closure operator and its residual (with respect to set-theoretic inclusion) with language intersection, union, and concatenation, and introduce the variety of closure $\ell$-monoids as a minimal algebraic abstraction of finite-trace properties to be conveniently described within an analytic proof system. Closure $\ell$-monoids are division-free reducts of distributive residuated lattices equipped with a forward diamond/backward box residuated pair of unary modal operators, where the diamond is a topological closure operator satisfying $\pos(x \cdot y) \leq \pos\! x \cdot \pos\! y$. As a logical counterpart to these structures, we present $\mathsf{LMC}$, a Gentzen-style system based on the division-free fragment of the Distributive Full Lambek Calculus. In $\mathsf{LMC}$, structural terms are built from formulas using Belnap-style structural operators for monoid multiplication, meet, and diamond. The rules for the modalities and the structural diamond are taken from Moortgat’s system $\mathsf{NL}(\pos)$. We show that the calculus is sound and complete with respect to the variety of closure $\ell$-monoids and that it admits cut elimination.
\end{abstract}
\maketitle
\section{Introduction}
\label{sec:introduction}
Designing computational architectures that exhibit concurrency—such as distributed or reactive systems—requires the constant development of formal techniques to ensure that system executions align with the intended implementations~\cite{Schneider2004}. 

In many algorithmic verification frameworks, notably model checking \cite{BaierKatoen08}, computations are modelled as \emph{execution traces} derived from high-level specifications such as automata or labelled transition systems (LTSs). This approach dates back to the very origins of concurrency theory as a distinct branch of computer science~\cite{Lamport2015}, as it is implicitly adopted in Dijkstra’s solution to the mutual exclusion problem \cite{Dijkstra1965}. Models of computation that admit a trace-based specification may differ considerably from one another\footnote{For an overview of examples, see~\cite[§7.2]{clsc}.}. In each case, a trace is a finite or infinite sequence of objects defined in terms of the underlying model's states and/or actions, possibly equipped with additional data allowing to properly encode system executions. This naturally supports the representation of key system policies in terms of \emph{trace properties} (i.e., sets of execution traces) and provides a foundation for their classification according to the well-established \emph{safety-liveness dichotomy} \cite{AlpSch85,lam}. The strength of this methodology lies in the fact that traces, being essentially strings over alphabets, can be conveniently treated in language-theoretic terms. Modulo a labelling of states with propositions, trace properties thus become simply string properties expressing \emph{patterns} of system behaviours.

In the formal methods literature, it is common to find system executions represented as infinite traces, since programs may not terminate. It follows that most classical definitions and results concerning behavioural properties and their verification are formulated with respect to infinite traces—a choice reflected in the semantics of most temporal logics for computer science \cite{DGL2016}. However, applied research has shown compelling reasons to adopt \emph{finite traces} in many different contexts, and a variety of formalisms has been introduced for specifying and verifying properties of finite computations (see, e.g., \cite{BeLoMuRu18,DeGVardi13,OuWo2007,Rosu2018}). In this setting, a finite-trace property (henceforth, \emph{f-property}) is a formal language in the usual sense, i.e., a subset of (the universe of) a free monoid~$\mathbf{F}_{\Sigma}=\langle\Sigma^*, \cdot, \varepsilon\rangle$. Most interestingly, the fundamental classes of safety and liveness f-properties admit an elegant characterisation in terms of prefix-closure \cite{MaPa17,Rosu2012}. Indeed, $P \subseteq \Sigma^*$ models a safety f-property if it is \emph{prefix-closed}—that is, if~$P$ consists of all strings that are prefixes of some string in~$P$ itself (equivalently, if~$P$ contains every prefix of each of its strings). By contrast,~$P$ expresses a liveness f-property when every string in~$\Sigma^*$ admits an extension belonging to~$P$—that is, if the prefix-closure of~$P$ is~$\Sigma^*$. It is well known that prefix-closure acts as a topological closure operator on~$\wp(\Sigma^*)$, yielding a topology on~$\Sigma^*$ whose closed and dense sets model, respectively, safety and liveness f-properties.
\subsection{Our contribution}
Depending on the operations defined on it, a set of the form $\wp(\Sigma^*)$ can serve as the universe of algebras belonging to several well-known classes, including the varieties of residuated lattices~\cite{yellowbook}, residuated Boolean algebras~\cite{JoTsi93}, Kleene algebras~\cite{Kozen1994}, and action algebras~\cite{Pratt91}. These varieties provide the algebraic semantics for major \emph{substructural logics}~\cite{yellowbook,MPT,primer}, which thus offer natural tools for reasoning about f-properties across different levels of granularity. In this paper, we are interested in identifying a suitable framework within \emph{structural proof theory} for addressing questions of the following kind:
\begin{quote}
Given $P, Q_1, \dots, Q_n \subseteq \Sigma^*$, if $P$ is constructed from $Q_1, \dots, Q_n$ by applying certain operations (e.g. $\cap$, $\cup$, or language concatenation), what can be \emph{proved} about the structure of the prefix-closure of $P$?
\end{quote}
Our purpose is therefore to provide an analytic calculus enabling the structural analysis of many property patterns central to formal verification, while also affording a \emph{compositional account} of safety, liveness, and related notions. 

The design of such a proof system crucially depends on the choice of an underlying \emph{algebraic model} for f-properties. To begin with, for a free monoid~$\mathbf{F}_{\Sigma}=\langle\Sigma^*, \cdot, \varepsilon\rangle$, we endow $\wp(\Sigma^*)$ with the structure of a bounded $\ell$-monoid, taking $\cap$, $\cup$, and language concatenation as fundamental operations. We then expand this algebra, denoted $\wp(\mathbf{F}_{\Sigma})$, by two additional operations:
\begin{itemize}
\item A (forward) diamond $\pos$ corresponding to prefix-closure;
\item A (backward) box $\necv$ mapping a property~$P \subseteq \Sigma^*$ to the property~$\necv\!P$ such that $w \in \necv\!P$ iff every prefix of~$w$ belongs to~$P$.
\end{itemize}
Our box operation is the \emph{residual} \cite{BlythJan72, MPT} of prefix-closure with respect to set-theoretic inclusion: that is, for $P, Q \subseteq \Sigma^*$, $\pos\!P \subseteq Q$ iff $P \subseteq \necv\!Q$. Moreover, as we shall prove, $\necv$ plays a key role in the mathematical foundation for the finite-trace counterpart of a well-established technique for specifying safety properties through the combination of future and past modalities in temporal logic \cite{MannaPnueli90}.

We study the equational definability of this expanded structure, denoted $\wp(\mathbf{F}_{\Sigma})_+$, starting from the observation that, in a free monoid, the prefix order coincides with the \emph{left divisibility} relation. We first show that divisibility preorders on monoids always satisfy the \emph{Riesz Decomposition Property} (\emph{RDP}) \cite{anfe88, Riesz29}. This result allows us to characterise the general behaviour of inverse image operators (like our $\pos$) associated with such preorders. We then proceed with a more fine-grained examination of the algebraic properties of prefix-closure and provide a number of equations describing the interaction of $\pos$ and $\necv$ with the $\ell$-monoid structure. 

Next, we introduce the variety $\mathfrak{LMC}$ of \emph{closure $\ell$-monoids} as a minimal algebraic model for f-properties to be captured within an analytic proof system. Concretely, we work with division-free reducts of bounded distributive residuated lattices equipped with residuated pairs of unary modalities of the form forward diamond/backward box, where the diamond is a topological closure operator satisfying the inequation $\pos(x\cdot y)\leq \pos\!x\cdot \pos\!y$. The algebra $\wp(\mathbf{F}_{\Sigma})_+$ turns out to be a closure $\ell$-monoid, while it does not generate $\mathfrak{LMC}$.

Finally, we introduce $\mathsf{LMC}$, a sound and complete Gentzen-style calculus for closure $\ell$-monoids. In designing it, given the expressiveness of our algebraic framework, we decided to follow the approach established by Belnap’s \emph{Display Logic}~\cite{Belnap82}, where the Gentzen terms occurring in derivations are constructed by combining formulas through \emph{structural operators} mirroring the behaviour of the logical connectives. Unary residuation and, more generally, relationships between connectives arising from the category-theoretic notion of \emph{adjunction}~\cite{kan} can be conveniently treated in display systems \cite{CGPT2022,CRW2014,Gore1998,GJLPT2024,Wansing1998}. However, with a view to future work on the effective implementability of our research, we do not resort to the full framework of Display Logic and introduce only the minimal amount of structural machinery required to establish completeness. As a result, $\mathsf{LMC}$ is a single-conclusion system with structural operators appearing only in the left sides of~sequents.

We design $\mathsf{LMC}$ building on the division-free fragment of the Distributive Full Lambek Calculus \cite{GalJip2017,Kozak09,Restall94}, which includes structural operators for monoid multiplication and meet. In particular, the structural meet enables the derivation of the distributive laws for the lattice operations—an idea independently pioneered by Dunn \cite{Dunn1973} and Minc~\cite{Minc1976} in their work on the positive relevant logic $\mathsf{R}^+$. The syntax of structural terms also includes a structural diamond, whose rules—together with those for the modal operators—are taken from Moortgat’s system $\mathsf{NL}(\pos)$~\cite{Moortgat96}. We show that $\mathsf{LMC}$ enjoys cut elimination. This result is obtained as a corollary of a mix elimination theorem, established using a technique originally devised in the context of hypersequent calculi for fuzzy logics \cite{MOG2009} and later employed in the cut elimination of the hypersequent calculus $\mathsf{CSemFL}$ \cite{MPT} for commutative, semilinear pointed residuated lattices.
\subsection{Outline of the paper}
In Section \ref{sec:background}, we provide the necessary background for our algebraic analysis of f-properties. In Section~\ref{sect: thcomp}, we introduce our reference modelling framework for concurrent systems. In particular, we choose to work within the \emph{interleaving paradigm}, starting from a notion of trace derived from the computational model of labelled transition systems (LTSs). Section~\ref{sect: algsect} is deals with the algebraic theory of f-properties: we begin with our concrete structure $\wp(\mathbf{F}_{\Sigma})_+$ and proceed to the variety of closure $\ell$-monoids. Section~\ref{sect: LMC} presents the Gentzen-style system~$\mathsf{LMC}$ and establishes its completeness, while Section~\ref{sect: cutelim} is entirely devoted to cut elimination. Finally, in Section~\ref{sect: conc}, we discuss related work and outline directions for future research.
\section{Preliminaries}
\label{sec:background}
We begin by recalling some basic algebraic concepts and results that will be used in the subsequent sections. For background and terminology not covered here, we refer the reader to standard sources in universal algebra~\cite{berg,busa,ALV1}, lattice theory~\cite{dp,gra}, residuation theory~\cite{BlythJan72}, and the general theory of ordered algebraic structures~\cite{Fuchs63}. Throughout the paper, $\curlywedge$, $\curlyvee$, $\Rightarrow$, and $\Leftrightarrow$ denote metatheoretical conjunction, disjunction, material implication, and material equivalence, respectively.
\subsection{Universal algebra}
\label{subsec: algebra}
We follow the customary convention of identifying algebraic similarity types~${\nu\colon\mathscr{F}\rightarrow \mathbb{N}}$ with sequences~$\langle \nu(f) \rangle_{f \in \mathscr{F}}$, where the arities of operation symbols are listed in non-increasing order. As usual, algebras and their universes are denoted using boldface and italics, respectively. The symbol~$X$ will always denote a countable set of variables (ranged over by~$x, y, z, \dots$). 

Fix a similarity type~${\nu \colon\mathscr{F}\rightarrow \mathbb{N}}$. We write~$\mathbf{T}_{\nu}(X)$ for the term algebra of type~$\nu$ over~$X$. Let $\mathbf{A}$ be a $\nu$-algebra. $\mathrm{Con}(\mathbf{A})$ is the class of congruences on $\mathbf{A}$. For $t \in T_{\nu}(X)$, we denote by $t^{\mathbf{A}}$ the term operation of $\mathbf{A}$ associated with $t$. If $h \colon \mathbf{A} \to \mathbf{B}$ is a homomorphism of~$\mathbf{A}$ into a $\nu$-algebra~$\mathbf{B}$, then we denote by $\ker h$ its kernel. The Homomorphism Theorem (see, e.g., \cite[Th. 6.12]{busa}) states that $h \colon \mathbf{A} \to \mathbf{B}$ is a surjective homomorphism, then $\mathbf{A}/\ker h$ is isomorphic to $\mathbf{B}$. 

A \emph{variety} is a class of $\nu$-algebras closed under taking (isomorphic copies of) homomorphic images, subalgebras, and direct products of its members.
If~$\mathfrak{V}$ is the least variety extending a given class~$\mathfrak{X}$ of $\nu$-algebras, we say that~$\mathfrak{V}$ is \emph{generated} by~$\mathfrak{X}$ and write~$\mathfrak{V} = \mathbb{V}(\mathfrak{X})$.
When $\mathfrak{X} = \{ \mathbf{A} \}$, we write~$\mathbb{V}(\mathbf{A})$.
As showed by Tarski~\cite{tar46}, $\mathbf{A} \in \mathbb{V}(\mathfrak{X})$ iff~$\mathbf{A}$ is a homomorphic image of a subalgebra of a direct product of members of~$\mathfrak{X}$.

An \emph{equation} (of type $\nu$) over $X$ is a pair $\langle s, t \rangle\in T_{\nu}(X)^2$, denoted $s \approx t$. If $\nu$ is a type for lattice-based structures, equations of the form $s \wedge t \approx s$ and $s \vee t \approx t$ are rewritten as $s \leq t$ and referred to as \emph{inequations}. In the same context, each equation $s \approx t$ is split into inequations $s\leq t$ and $t \leq s$. A $\nu$-algebra $\mathbf{A}$ satisfies an equation $s \approx t$ (written $\mathbf{A}\models s \approx t$) iff $h(s)=h(t)$ for all $h\colon\mathbf{T}_{\nu}(X)\rightarrow \mathbf{A}$ (hence if $s^{\mathbf{A}}=t^{\mathbf{A}}$). Satisfaction of sets of equations by algebras, and of equations or sets of equations by classes of algebras, is defined in the usual way. Let $\mathfrak{K}$ be a class of $\nu$-algebras. The \textit{equational consequence relation} $\vDash_{\mathfrak{K}}$ associated with $\mathfrak{K}$ is defined as follows, for any set $\Phi\cup\{ s\approx t\}$ of equations: $\Phi\vDash_{\mathfrak{K}}s \approx t$ iff $\mathfrak{K}\models \Phi$ implies $\mathfrak{K}\models s \approx t$. An \textit{equational class} of $\nu$-algebras is the class of all models of some set $\Phi$ of equations. By Birkhoff's Theorem~\cite{bi35}, equational classes and varieties coincide. 

A \emph{quasi-equation} (of type $\nu$) over $X$ is a definite Horn clause $\bigcurlywedge \!\Phi \Rightarrow s\approx t$, with $\Phi\cup\{ s\approx t\}$ a \textit{finite} set of equations. A $\nu$-algebra $\mathbf{A}$ satisfies a quasi-equation $\bigcurlywedge \!\Phi \Rightarrow s\approx t$ (written $\mathbf{A}\models \bigcurlywedge \!\Phi \Rightarrow s\approx t$) iff $\Phi\vDash_{\{\mathbf{A}\}}s \approx t$. This lifts naturally to classes of $\nu$-algebras, so we have $\mathfrak{K}\models {\bigcurlywedge \!\Phi \Rightarrow s\approx t}$ iff $\Phi\vDash_{\mathfrak{K}}s \approx t$. 

Now let~$\Theta(\mathfrak{K},X)$ be the smallest congruence on~$\mathbf{T}_{\nu}(X)$ such that the corresponding quotient algebra embeds in some member of~$\mathfrak{K}$. Define $\overline{X} := \{ [x]_{\Theta(\mathfrak{K},X)} \mid x \in X \}$. The quotient algebra $\mathbf{F}_{\mathfrak{K}}(\overline{X})=\mathbf{T}_{\nu}(X)/\Theta(\mathfrak{K},X)$ is \emph{freely generated} by~$\overline{X}$\footnote{That is,~$\mathbf{T}_{\nu}(X)/\Theta(\mathfrak{K})$ is generated by~$\overline{X}$ and has the \emph{universal mapping property} for~$\mathfrak{K}$ over~$\overline{X}$.} and is referred to as the (\emph{$\mathfrak{K}$-})\emph{free algebra} over~$\overline{X}$. Since the construction of~$\mathbf{F}_{\mathfrak{K}}(\overline{X})$ depends on (the cardinality of)~$X$, when no ambiguity arises we denote this structure by~$\mathbf{F}_X$. Recall that a nontrivial variety $\mathfrak{V}$ contains free algebras $\mathbf{F}_X$ for any nonempty $X$ \cite[Cor. 4.119]{ALV1} and, moreover, if~$X$ has denumerably many elements, then $\mathfrak{V} = \mathbb{V}(\mathbf{F}_X)$ \cite[Cor. 4.132]{ALV1}.
\begin{example}\cite[pp. 239--240]{ALV1}
Let~$\Sigma$ be a countable alphabet. The set~$\Sigma^* := \bigcup_{n \in \mathbb{N}} \Sigma^n$ of all finite sequences (strings) over~$\Sigma$ forms a monoid under string concatenation, with the empty string~$\varepsilon$ as the identity element. It is straightforward to verify that this is the free $\mathfrak{M}$-algebra over~$\overline{\Sigma}$, where~$\mathfrak{M}$ is the variety of monoids. Take $\Sigma$ as a set of variables, and let $h$ be the homomorphism of $\mathbf{T}_{\langle 2,0\rangle}(\Sigma)$ onto $ \langle \Sigma^*, \cdot, \varepsilon \rangle$ sending each~$t \in T_{\langle 2,0\rangle}(\Sigma)$ to the string over~$\Sigma$ obtained by deleting all operation symbols in $t$, while preserving order and multiplicity of variable occurrences (e.g., $h((x \cdot (y \cdot 1))\cdot x) = xyx$; clearly {$h(1)=\varepsilon$}). Since $h$ is surjective, by the Homomorphism Theorem, $\mathbf{T}_{\langle 2,0\rangle}(\Sigma)/\ker h \cong \langle \Sigma^*, \cdot, \varepsilon \rangle $. Finally, observe that if two terms~$s$ and~$t$ differ by even a single variable, then they are not equivalent modulo~$\ker h$; hence~$\ker h \subseteq \Theta(\mathfrak{M},\Sigma)$, and consequently~$\ker h = \Theta(\mathfrak{M},\Sigma)$.
\end{example}
It is well known that, for any variety~$\mathfrak{V}$, free algebras~$\mathbf{F}_X \in \mathfrak{V}$ have exactly the same equational theory as~$\mathfrak{V}$ \cite[Thm. 4.127]{ALV1}. That is, for all~$s, t \in T_{\nu}(X)$,
\[
\mathfrak{V} \models s \approx t 
\;\; \Leftrightarrow \;\; 
\langle s, t \rangle \in \Theta(\mathfrak{V},X) 
\;\; \Leftrightarrow \;\; 
s^{\mathbf{F}_X} = t^{\mathbf{F}_X} 
\;\; \Leftrightarrow \;\; 
\mathbf{F}_X \models s \approx t.
\]
Moreover, the equational theories of varieties coincide with those of their generators: indeed, if~$\mathfrak{X} \models s \approx t$, then~$s \approx t$ is satisfied by every homomorphic image of a subalgebra of a direct product of algebras in~$\mathfrak{X}$, and hence $\mathbb{V}(\mathfrak{X}) \models s \approx t$ \cite[Lem. 4.128]{ALV1}.

\subsection{Ordered monoids}\label{sect: monoids}
In what follows, for $ \mathfrak{d} \in \{l, r\}$, we write  $x \cdot_{\mathfrak{d}} y$ to denote monoid multiplication with argument positions determined by $\mathfrak{d} $. Specifically:
\[ x \cdot_{\mathfrak{d}} y=
\begin{cases}
x \cdot y \hfill & \text{if $\mathfrak{d}=l$,}\\
y \cdot x \hfill &\text{if $\mathfrak{d}=r$}
\end{cases}
\]
Consider a pair~$\langle \mathbf{M}, \preceq\rangle$, where~$\mathbf{M} = \langle M, \cdot, 1 \rangle$ is a monoid and~$\preceq$ is a preorder on~$M$. We say that $\preceq$ is \textit{compatible} with multiplication if, for all $a,b,c \in M$ and $\mathfrak{d} \in \lbrace l,r \rbrace$:
\begin{equation}\label{eq: COMP}
\tag{COMP}a\preceq b\; \Rightarrow \; c\cdot_{\mathfrak{d}} a\preceq c \cdot_{\mathfrak{d}} b
\end{equation}
If (\ref{eq: COMP}) holds just for $\mathfrak{d}=l$ (resp. $\mathfrak{d}=r$), then we say that $\preceq$ is \textit{$l$-compatible} (resp. \textit{$r$-compatible}) with multiplication. We call~$\langle \mathbf{M}, \preceq\rangle$ a \emph{preordered} (resp. \textit{partially ordered}) \emph{monoid} if $\preceq$ is a preorder (resp. partial order) satisfying (\ref{eq: COMP}). Analogously, for a fixed $\mathfrak{d}\in \{l,r\}$, we call~$\langle \mathbf{M}, \preceq\rangle$ a \emph{$\mathfrak{d}$-preordered} (resp. \emph{partially $\mathfrak{d}$-ordered}) \emph{monoid} if~$\preceq$ is a preorder (resp. a partial order) that is $\mathfrak{d}$-compatible with multiplication. 

An \textit{$\ell$-monoid} (short for \textit{lattice-ordered monoid}) is a partially ordered monoid~$\langle \mathbf{M}, \preceq\rangle$ where $\preceq$ is a lattice order.\footnote{These structures were introduced by Birkhoff \cite[§XIII]{Bi48} under the name of \textit{lattice-ordered semigroups.}} We denote by $\mathfrak{LM}$ the class of all $\ell$-monoids. This is in fact a variety, as an $\ell$-monoid may be defined as an algebra $\mathbf{M}=\langle M, \cdot, \wedge, \vee, 1\rangle$ of type $\langle 2,2,2,0\rangle$ such that $\langle M, \cdot,1\rangle$ is a monoid, $\langle M, \wedge, \vee \rangle$ is a lattice, and multiplication distributes over join on both sides, i.e., for~$\mathfrak{d} \in \{l, r\}$, the following equation holds:
\begin{equation}\label{eq: DMJ}
\tag{DMJ} x \cdot_{\mathfrak{d}} (y \vee z) \approx (x \cdot_{\mathfrak{d}}  y) \vee (x \cdot_{\mathfrak{d}}  z) 
\end{equation}
One readily verifies that (\ref{eq: DMJ}) implies (\ref{eq: COMP}). Recall that the~$\{\wedge\}$-free reducts of $\ell$-monoids form the variety $\mathfrak{ISR}$ of \textit{idempotent semirings}, which is axiomatised by equations for monoids, join-semilattices, and the distributivity equation (\ref{eq: DMJ}). An $\ell$-monoid is \textit{bounded} if so is its lattice reduct. In this case, we consider a type expansion with constants $\bot$ and $\top$ for the bounds, and assume axioms~$\bot \leq x$~and~$x \leq \top$.

In this paper, we deal with $\ell$-monoids whose underlying lattices are distributive. However, we avoid the term “distributive $\ell$-monoid”, which is typically reserved for inverse-free reducts of $\ell$-groups \cite{CGMS2022}.
\subsection{Divisibility in monoids}\label{sect: div}
In monoids and, more generally, in algebras with a monoid reduct, divisibility preorders provide a natural generalisation of the classical divisibility relation on integers. Let $\mathbf{M}\in\mathfrak{M}$, and let $a,b \in M$. For $\mathfrak{d}\in \{l,r\}$, define:
\[
 a \mid_{\mathfrak{d}} b \Leftrightarrow \exists c\in M\; (b=a \cdot_{\mathfrak{d}} c)
\]
Whenever $a \mid_l b$ (resp. $a \mid_r b$), we say that $b$ is \textit{left-divisible} (resp. \textit{right-divisible}) by $a$ and call $a$ a \textit{left divisor} (resp. \textit{right divisor}) of $b$. We call $\mid_l$ and $\mid_r$ are the \emph{left} and \emph{right divisibility} relations on $\mathbf{M}$, respectively\footnote{Clearly, if $\mathbf{M}$ is commutative, then $\mid_l$ and $\mid_r$ coincide.}.
The following lemma collects some well-known facts about divisibility in monoids.
\begin{lemma}\label{lem: divbasics}
For a monoid $\mathbf{M}=\langle M, \cdot, 1 \rangle$, the structure $\langle M, \mid_{\mathfrak{d}} \rangle$ is a preordered set with least element $1$. Moreover, for all $a,b,c \in M$, the following conditions hold:
\begin{enumerate}
    \item If $a \mid_{\mathfrak{d}} 1$, then $a$ admits a right inverse when $\mathfrak{d}=l$, and a left inverse when $\mathfrak{d}=r$;
    \item If $a \mid_{\mathfrak{d}} b$, then $a \mid_{\mathfrak{d}} b \cdot_{\mathfrak{d}} c$;
    \item If $a \mid_{\mathfrak{d}} b$, then $c \cdot_{\mathfrak{d}}a \mid_{\mathfrak{d}} c\cdot_{\mathfrak{d}} b$ (that is, $\mid_{\mathfrak{d}}$ is $\mathfrak{d}$-compatible with multiplication).
\end{enumerate}
\end{lemma}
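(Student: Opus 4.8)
The plan is to verify every assertion directly from the definition of $\mid_{\mathfrak{d}}$, unwinding the abbreviation $\cdot_{\mathfrak{d}}$ and appealing only to the monoid axioms (associativity and the unit laws). Each divisibility claim $u \mid_{\mathfrak{d}} v$ is established by exhibiting an explicit witness $w$ with $v = u \cdot_{\mathfrak{d}} w$, so the whole argument reduces to bookkeeping with these witnesses. Since everything is stated uniformly in $\mathfrak{d}$, I would treat the two cases $\mathfrak{d} = l$ and $\mathfrak{d} = r$ in parallel, flagging the single point where they genuinely diverge.

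For the preorder structure, reflexivity follows from $a = a \cdot_{\mathfrak{d}} 1$ (both $a \cdot 1 = a$ and $1 \cdot a = a$), so the unit is a witness for $a \mid_{\mathfrak{d}} a$. For transitivity, from $b = a \cdot_{\mathfrak{d}} d$ and $c = b \cdot_{\mathfrak{d}} e$ I substitute and reassociate to obtain a single composite witness; this is where the $r$-case differs, since $a \cdot_{r} d = d \cdot a$ forces the composite witness to be $e \cdot d$ rather than $d \cdot e$, but associativity absorbs the discrepancy. That $1$ is least is just the unit law read as $a = 1 \cdot_{\mathfrak{d}} a$, giving $1 \mid_{\mathfrak{d}} a$ for every $a$.

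The three itemized conditions are each a one-line witness manipulation. For (1), $a \mid_{\mathfrak{d}} 1$ means $1 = a \cdot_{\mathfrak{d}} c$, which reads as $a \cdot c = 1$ when $\mathfrak{d} = l$ and as $c \cdot a = 1$ when $\mathfrak{d} = r$; in either case $c$ is exactly the asserted one-sided inverse. For (2), I append $c$ to the defining equation $b = a \cdot_{\mathfrak{d}} d$ and reassociate, so that a witness for $a \mid_{\mathfrak{d}} (b \cdot_{\mathfrak{d}} c)$ is $d \cdot_{\mathfrak{d}} c$ (up to the order swap in the $r$-case). For (3), I multiply $b = a \cdot_{\mathfrak{d}} d$ by $c$ on the side dictated by $\mathfrak{d}$ and reassociate so that $c \cdot_{\mathfrak{d}} a$ factors out, leaving the original witness $d$ unchanged; this is precisely the statement that $\mid_{\mathfrak{d}}$ is $\mathfrak{d}$-compatible with multiplication.

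There is no genuine obstacle here: the content is entirely routine and the facts are classical. The only point demanding care is the $\mathfrak{d}$-parametrization itself—in the $r$-case the abbreviation $\cdot_{r}$ reverses argument order, so each substitution silently reverses the order in which witnesses compose, and one must reassociate accordingly. Checking that associativity always places the multiplicand on the correct side is the sole thing worth doing slowly.
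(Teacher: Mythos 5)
Your proof is correct. The paper states this lemma without proof, presenting it as a collection of well-known facts, and your direct witness-chasing verification—reflexivity and the least element via the unit laws, transitivity and conditions (1)--(3) via explicit witnesses and associativity, with the order reversal in the $\mathfrak{d}=r$ case handled correctly throughout—is exactly the routine argument the authors intend.
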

We now recall a sufficient condition under which divisibility preorders are partial orders. Consider the following quasi-equations of type $\langle 2,0\rangle$:
\begin{align*}
&\tag{CANC} z \cdot_{\mathfrak{d}} x \approx z\cdot_{\mathfrak{d}} y \Rightarrow x \approx y\qquad\text{(where $\mathfrak{d}\in\{l,r\}$)}\label{eq: CANC}\\
&\tag{CON$_1$} x \cdot y \approx 1 \Rightarrow x \approx 1\label{eq: CON1}\\
&\tag{CON$_2$} x \cdot y \approx 1 \Rightarrow y \approx 1\label{eq: CON2}
\end{align*}
A monoid is said to be \textit{cancellative} \cite{CP61} if it satisfies (\ref{eq: CANC}), and \textit{conical}~\cite{Wehrung96} if it satisfies both (\ref{eq: CON1}) and (\ref{eq: CON2}).
Observe that cancellativity allows to reduce equations in analogy with the simplifications permitted by inverses in groups\footnote{In fact, any group trivially satisfies (\ref{eq: CANC}). For this reason, cancellativity plays a key role in the study of the embeddability of semigroups into groups. In particular, it is a necessary and sufficient condition for embedding a commutative semigroup into a group, whereas in the non-commutative case it is only necessary. For details, see \cite[§1.10]{CP61} and \cite[§12]{CP67}.}. Conicality, in turn, expresses the non-existence of non-trivial invertible elements in a monoid. In other words, a monoid is conical when the only element admitting both a right and a left inverse is the unit (which therefore coincides with its inverses). The following examples show that cancellativity and conicality are independent notions.
\begin{example}\label{ex: ConCanc}
Free monoids, as well as $\langle \mathbb{N}, +, 0 \rangle$, are both cancellative and conical. 
A monoid with an absorbing element cannot be cancellative, but it may still be conical (the most obvious example is $\langle \mathbb{N}, \cdot, 1 \rangle$). A well-known class of conical non-cancellative monoids is the variety of join-semilattices with zero (dually, meet-semilattices with unit). On the other hand, a noteworthy example of a cancellative but non-conical monoid is $\langle \{0,1\}, \oplus, 0 \rangle$, where $\oplus$ is the XOR operator. This monoid is cancellative, since $c \oplus a $ never coincides with $c \oplus b$ when $a\neq b$\footnote{Note that $\langle \{0,1\}, \oplus, 0 \rangle$ is the inverse-free reduct of the Abelian group $\langle \{0,1\}, \oplus, ', 0 \rangle$, where $'$ is defined by $x' \approx x$ (i.e., each element is its own inverse).}. Moreover, $1 \oplus 1 = 0$ although $1 \neq 0$, so conicality does not hold. Finally, the monoid of self-maps $f\colon A\to A$ on a set $A$ is neither cancellative nor conical. Indeed, $h\circ f=h\circ g$ does not generally imply $f=g$, and the identity map $\mathrm{id}_A$ is not the unique invertible element, as every bijection is invertible too.
\end{example}
Let now $\mathbf{M}$ be a cancellative, conical monoid. For $\mathfrak{d}\in\{l,r\}$ and $a,b \in M$, suppose that $a\mid_{\mathfrak{d}} b$ and $b \mid_{\mathfrak{d}} a$. Then there exist $c,d \in M$ such that $b=a \cdot_{\mathfrak{d}} c$ and $a=b \cdot_{\mathfrak{d}} d$. Substituting $a \cdot_{\mathfrak{d}} c$ for $b$ in the second expression gives $a = (a \cdot_{\mathfrak{d}} c)\cdot_{\mathfrak{d}} d = a \cdot_{\mathfrak{d}} (c\cdot_{\mathfrak{d}} d)$. As $a=a \cdot_{\mathfrak{d}} 1$, by cancellativity it follows that $c\cdot_{\mathfrak{d}} d = 1$. Since $\mathbf{M}$ is conical, this implies $c=d=1$. Hence $a=b\cdot_{\mathfrak{d}}1=b$. We just showed that:
\begin{proposition}
If a monoid is both cancellative and conical, then its divisibility preorders are antisymmetric.
\end{proposition}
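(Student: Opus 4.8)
The plan is to establish antisymmetry of both divisibility relations simultaneously, exploiting the uniform notation $\cdot_{\mathfrak{d}}$ to treat the left and right cases at once. Since Lemma~\ref{lem: divbasics} already guarantees that each $\langle M, \mid_{\mathfrak{d}}\rangle$ is a preorder, antisymmetry is the only ingredient missing to upgrade it to a partial order, which is what the statement amounts to.

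Fix $\mathfrak{d} \in \{l,r\}$ and suppose $a \mid_{\mathfrak{d}} b$ and $b \mid_{\mathfrak{d}} a$. Unwinding the definition of divisibility, I would extract witnesses $c, d \in M$ with $b = a \cdot_{\mathfrak{d}} c$ and $a = b \cdot_{\mathfrak{d}} d$. Substituting the first expression into the second and invoking associativity of the monoid operation (which the $\cdot_{\mathfrak{d}}$ notation respects, since merely reordering the two arguments does not interfere with associativity), I obtain $a = a \cdot_{\mathfrak{d}}(c \cdot_{\mathfrak{d}} d)$. Comparing this with the trivial identity $a = a \cdot_{\mathfrak{d}} 1$ and cancelling $a$ by (\ref{eq: CANC}) yields $c \cdot_{\mathfrak{d}} d = 1$.

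At this point conicality finishes the argument: the equation $c \cdot_{\mathfrak{d}} d = 1$ is, depending on $\mathfrak{d}$, either $c \cdot d = 1$ or $d \cdot c = 1$, so applying (\ref{eq: CON1}) and (\ref{eq: CON2}) forces $c = d = 1$ in both cases. Consequently $a = b \cdot_{\mathfrak{d}} 1 = b$, which is precisely the antisymmetry condition.

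The argument is entirely routine, and I do not anticipate a genuine obstacle; the only point demanding care is the bookkeeping for the two values of $\mathfrak{d}$ — one must check that cancellativity is applied on the side matching $\mathfrak{d}$ and that the two conicality quasi-equations jointly collapse both $c$ and $d$ to the unit regardless of whether the resulting product reads $c \cdot d$ or $d \cdot c$. The unified $\cdot_{\mathfrak{d}}$ notation is exactly what allows this to be carried out in a single uniform line rather than by an explicit case split.
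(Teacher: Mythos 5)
Your argument is correct and coincides with the paper's own proof: both extract witnesses $c,d$ from the two divisibility assumptions, substitute and regroup via associativity to get $a = a \cdot_{\mathfrak{d}} (c \cdot_{\mathfrak{d}} d)$, cancel $a$ to obtain $c \cdot_{\mathfrak{d}} d = 1$, and then invoke conicality to conclude $c = d = 1$ and hence $a = b$. The attention you pay to the $\mathfrak{d}$-bookkeeping is sound but adds nothing beyond what the paper's uniform $\cdot_{\mathfrak{d}}$ treatment already handles.
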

In this paper, we focus on divisibility in free monoids. In this setting, it is straightforward to observe that left (resp. right) divisibility corresponds to the \emph{prefix} (resp. \emph{suffix}) \emph{relation} on strings. The following result is a corollary of the preceding discussion.
\begin{proposition}\label{prop: divstrings}
	Let $\mathbf{F}_{\Sigma}$ be a free monoid. 
	Then, for $\mathfrak{d} \in \{l, r\}$, the structure 
	$\langle \mathbf{F}_{\Sigma}, \mid_{\mathfrak{d}} \rangle$ 
	is a partially $\mathfrak{d}$-ordered monoid with least element~$\varepsilon$.
\end{proposition}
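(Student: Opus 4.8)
The plan is to treat this proposition as a direct corollary of the material developed immediately beforehand, assembling three facts. First I would invoke Lemma~\ref{lem: divbasics}, which guarantees that for \emph{any} monoid $\mathbf{M}$—and in particular for $\mathbf{F}_{\Sigma}$—the structure $\langle M, \mid_{\mathfrak{d}} \rangle$ is a preordered set with least element $1$, and that $\mid_{\mathfrak{d}}$ is $\mathfrak{d}$-compatible with multiplication (clause 3 of the lemma). The latter is exactly condition~(\ref{eq: COMP}) restricted to the single side $\mathfrak{d}$. Since the identity of a free monoid is the empty string, the least element is $\varepsilon$.

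It then remains to upgrade the preorder $\mid_{\mathfrak{d}}$ to a partial order, i.e.\ to secure antisymmetry. For this I would appeal to the Proposition stated just above, which asserts that the divisibility preorders of any cancellative, conical monoid are antisymmetric. To apply it, I need only record—as noted in Example~\ref{ex: ConCanc}—that every free monoid is both cancellative and conical. Hence the hypotheses of that Proposition are met by $\mathbf{F}_{\Sigma}$, and $\mid_{\mathfrak{d}}$ is antisymmetric on $\Sigma^*$.

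Putting the pieces together, $\mid_{\mathfrak{d}}$ is reflexive, transitive, and antisymmetric—hence a partial order—and is $\mathfrak{d}$-compatible with multiplication, with $\varepsilon$ as least element. By the definition of a partially $\mathfrak{d}$-ordered monoid, this is precisely the claim, uniformly for $\mathfrak{d}=l$ and $\mathfrak{d}=r$. I do not expect any genuine obstacle here: the proposition is essentially bookkeeping, and the only points requiring attention are the substitution of $\varepsilon$ for the abstract unit $1$ and the explicit verification (via Example~\ref{ex: ConCanc}) that free monoids lie in the intersection of the cancellative and conical classes, so that the antisymmetry Proposition is legitimately invoked.
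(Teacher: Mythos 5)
Your proposal is correct and matches the paper exactly: the paper presents this proposition as "a corollary of the preceding discussion," meaning precisely the assembly you describe—Lemma~\ref{lem: divbasics} for the preorder, least element, and $\mathfrak{d}$-compatibility, plus antisymmetry from the cancellativity and conicality of free monoids noted in Example~\ref{ex: ConCanc}. You have simply made explicit the bookkeeping the paper leaves implicit.
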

\subsection{Unary residuation}
Let $\mathbf{A}=\langle A, \preceq^{\mathbf{A}}\rangle$ and $\mathbf{B}=\langle B, \preceq^{\mathbf{B}}\rangle$ be partially ordered sets (posets). A map $f\colon A\rightarrow B$ is \textit{residuated} if there exists $g\colon B\rightarrow A$ such that
\begin{equation}
\tag{1RES} f(a) \preceq^{\mathbf{B}} b \Leftrightarrow a \preceq^{\mathbf{A}} g(b)\text{,}  \label{eq: R1} 
\end{equation}
for all $a\in A$ and all $b \in B$. When (\ref{eq: R1}) holds, we say that $g$ is the (\textit{right}) \textit{residual} of $f$, and we call $\langle f,g \rangle$ a \textit{residuated pair}. We refer to (\ref{eq: R1}) as the \textit{unary residuation law}. The following result provides a necessary and sufficient condition for (\ref{eq: R1}).
\begin{proposition}\cite[Lem. 3.2]{yellowbook}\label{prop: eqres}
	Let $\mathbf{A}$ and $\mathbf{B}$ be posets, and let 
	$f \colon A \to B$ and $g \colon B \to A$ be maps. 
	Then $\langle f, g \rangle$ is a residuated pair if and only if:
	\begin{enumerate}
		\item both $f$ and $g$ are order-preserving;
		\item $\mathrm{id}_{\mathbf{A}} \preceq^{\mathbf{A}} g \circ f$ and 
		$f \circ g \preceq^{\mathbf{B}} \mathrm{id}_{\mathbf{B}}$,  
		where (with a slight abuse of notation) 
		$\preceq^{\mathbf{A}}$ and $\preceq^{\mathbf{A}}$ denote the pointwise orders on the sets of self-maps on $P$ and $Q$, respectively.
	\end{enumerate}
\end{proposition}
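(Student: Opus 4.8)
The plan is to treat \Cref{prop: eqres} as the standard characterisation of a monotone adjunction between posets, deriving everything from reflexivity and transitivity of the two orders together with the biconditional \eqref{eq: R1}. The two halves of condition~(2) are precisely the \emph{unit} and \emph{counit} inequalities of the adjunction, so I would isolate them first, since both directions of the equivalence ultimately rest on them.

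For the forward direction, assume \eqref{eq: R1}. First I would obtain the unit inequality $\mathrm{id}_{\mathbf{A}} \preceq^{\mathbf{A}} g \circ f$ by instantiating \eqref{eq: R1} at $b = f(a)$: reflexivity gives $f(a) \preceq^{\mathbf{B}} f(a)$, whence $a \preceq^{\mathbf{A}} g(f(a))$ for every $a \in A$. Dually, instantiating \eqref{eq: R1} at $a = g(b)$ and using $g(b) \preceq^{\mathbf{A}} g(b)$ yields the counit inequality $f \circ g \preceq^{\mathbf{B}} \mathrm{id}_{\mathbf{B}}$. Order-preservation of $f$ then follows by combining the unit with \eqref{eq: R1}: from $a_1 \preceq^{\mathbf{A}} a_2$ and $a_2 \preceq^{\mathbf{A}} g(f(a_2))$ transitivity gives $a_1 \preceq^{\mathbf{A}} g(f(a_2))$, and the right-to-left reading of \eqref{eq: R1} (with $a = a_1$ and $b = f(a_2)$) delivers $f(a_1) \preceq^{\mathbf{B}} f(a_2)$. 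Order-preservation of $g$ is entirely symmetric, using the counit together with the left-to-right reading of \eqref{eq: R1}.

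For the converse, assume (1) and (2) and verify the biconditional \eqref{eq: R1} directly. If $f(a) \preceq^{\mathbf{B}} b$, then order-preservation of $g$ gives $g(f(a)) \preceq^{\mathbf{A}} g(b)$, and chaining this with the unit $a \preceq^{\mathbf{A}} g(f(a))$ via transitivity yields $a \preceq^{\mathbf{A}} g(b)$. Conversely, if $a \preceq^{\mathbf{A}} g(b)$, then order-preservation of $f$ gives $f(a) \preceq^{\mathbf{B}} f(g(b))$, and chaining this with the counit $f(g(b)) \preceq^{\mathbf{B}} b$ yields $f(a) \preceq^{\mathbf{B}} b$. This establishes \eqref{eq: R1}, hence that $\langle f, g\rangle$ is a residuated pair.

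The argument is elementary and I anticipate no genuine obstacle; the only thing to watch is the bookkeeping of directions, since each of the four implications consumes exactly one order-preservation hypothesis, one of the unit/counit inequalities, and one transitivity step, applied on the correct side $\mathbf{A}$ or $\mathbf{B}$. The mild abuse of notation flagged in the statement (the symbol $\preceq^{\mathbf{A}}$ reused for the pointwise order on self-maps) is harmless, as condition~(2) unfolds pointwise to exactly the unit and counit inequalities used throughout.
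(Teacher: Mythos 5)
Your proof is correct; the paper states this proposition as a citation to \cite[Lem.~3.2]{yellowbook} and gives no proof of its own, and your argument is precisely the standard unit/counit characterisation of a residuated pair found there. Every step (deriving the unit and counit from \eqref{eq: R1} via reflexivity, then monotonicity via transitivity, and the converse by chaining monotonicity with the unit/counit) checks out.
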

We now recall some basic facts about residuated pairs (for details, see \cite[Lems.~3.1 and 3.3]{yellowbook} and \cite[Lem.~1.2.8]{MPT}).
\begin{proposition}\label{prop: ncres}
Let $\mathbf{A}$, $\mathbf{B}$ be posets, and let $\langle f,g \rangle$ be a residuated pair with $f \colon A \to B$ and $g \colon B \to A$. The following conditions hold:
\begin{enumerate}
    \item for $a\in A$, $b \in B$, $f(a)=\min \{ b\,|\, a \preceq^{\mathbf{A}} g(b) \}$ and $g(b)=\max \{ a \,|\, f(a) \preceq^{\mathbf{B}} b \}$;
    \item $g \circ f$ and $f \circ g$ are, respectively, a closure and an interior operator;
    \item $f \circ g \circ f = f$ and $g \circ f \circ g = g$.
    \item If $\min \mathbf{A}$ (resp. $\max \mathbf{B}$) exists, then $\min \mathbf{B}$ (resp. $\max \mathbf{A}$) exists as well, and $f(\min \mathbf{A}) = \min \mathbf{B}$ (resp. $\max \mathbf{A} = g(\max \mathbf{B})$).
\end{enumerate}
\end{proposition}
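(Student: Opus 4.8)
The plan is to derive all four items directly from the unary residuation law~(\ref{eq: R1}), occasionally invoking the characterisation of residuated pairs in \Cref{prop: eqres}, namely order-preservation of $f,g$ together with the \emph{unit} inequality $\mathrm{id}_{\mathbf{A}} \preceq^{\mathbf{A}} g\circ f$ and the \emph{counit} inequality $f\circ g\preceq^{\mathbf{B}}\mathrm{id}_{\mathbf{B}}$. The only genuine dependency among the items is that the idempotence assertions in~(2) rest on the identities in~(3); accordingly I would prove~(1) and~(3) first, deduce~(2) from them, and handle~(4) independently.

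For~(1), to show $f(a)=\min\{b\mid a\preceq^{\mathbf{A}} g(b)\}$, I would first verify membership by instantiating~(\ref{eq: R1}) with $b:=f(a)$: reflexivity $f(a)\preceq^{\mathbf{B}} f(a)$ gives $a\preceq^{\mathbf{A}} g(f(a))$, so $f(a)$ lies in the set. That $f(a)$ is a lower bound is the forward direction of~(\ref{eq: R1}): any $b$ with $a\preceq^{\mathbf{A}} g(b)$ satisfies $f(a)\preceq^{\mathbf{B}} b$. The dual claim $g(b)=\max\{a\mid f(a)\preceq^{\mathbf{B}} b\}$ follows symmetrically, using $a:=g(b)$ and the backward direction of~(\ref{eq: R1}). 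For~(3), I would combine the unit and counit inequalities: applying the order-preserving $f$ to $a\preceq^{\mathbf{A}} g(f(a))$ yields $f(a)\preceq^{\mathbf{B}} f(g(f(a)))$, while instantiating $f\circ g\preceq^{\mathbf{B}}\mathrm{id}_{\mathbf{B}}$ at $f(a)$ yields $f(g(f(a)))\preceq^{\mathbf{B}} f(a)$; antisymmetry of $\preceq^{\mathbf{B}}$ then gives $f(g(f(a)))=f(a)$ for every $a$, that is, $f\circ g\circ f=f$. The identity $g\circ f\circ g=g$ is obtained dually.

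Part~(2) then follows quickly: extensivity of $g\circ f$ and contractivity of $f\circ g$ are exactly the unit and counit inequalities, monotonicity is just the composition of order-preserving maps, and idempotence is read off from~(3), since $(g\circ f)\circ(g\circ f)=g\circ(f\circ g\circ f)=g\circ f$ and dually for $f\circ g$. For~(4), assuming $\min\mathbf{A}$ exists, I would show $f(\min\mathbf{A})=\min\mathbf{B}$ by observing that for every $b\in B$ the inequality $f(\min\mathbf{A})\preceq^{\mathbf{B}} b$ is equivalent, via~(\ref{eq: R1}), to $\min\mathbf{A}\preceq^{\mathbf{A}} g(b)$, which holds trivially; the claim for $\max\mathbf{B}$ and $\max\mathbf{A}=g(\max\mathbf{B})$ is the order-dual. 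Since every step is a one-line application of~(\ref{eq: R1}) or \Cref{prop: eqres}, there is no substantial obstacle; the only points requiring care are purely bookkeeping—keeping the two posets and their respective orders $\preceq^{\mathbf{A}},\preceq^{\mathbf{B}}$ straight, and ensuring the idempotence argument in~(2) is not circular by deriving~(3) beforehand.
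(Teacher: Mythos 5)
Your proof is correct. The paper itself does not prove this proposition---it is recalled as a standard fact with citations to \cite{yellowbook} and \cite{MPT}---and your argument is exactly the standard one found in those sources: membership-plus-bound for (1), unit/counit plus antisymmetry for (3), reading (2) off the unit, counit, monotonicity and (3), and the trivial instantiation of (\ref{eq: R1}) for (4). (Only a cosmetic slip: the implication $a \preceq^{\mathbf{A}} g(b) \Rightarrow f(a) \preceq^{\mathbf{B}} b$ is the right-to-left direction of (\ref{eq: R1}) as stated, not the forward one.)
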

Unary residuation is naturally realised by certain pairs of operators defined from binary relations (see again, for example, \cite[pp. 143–144]{yellowbook} or \cite[Ex.~1.2.2]{MPT}). We recall this construction in the following example, introducing a notation that will be adopted throughout the remainder of the paper.
\begin{example}\label{ex: 1res}
Let $A$ and $B$ be sets, and let $R \subseteq A \times B$. The direct image operator $\posv_R$ and the inverse image operator $\pos_R$ associated with $R$ are defined as follows:
\begin{align*}
\posv_R\colon &\wp(A)\rightarrow\wp(B) & \pos_R\colon& \wp(B)\rightarrow\wp(A)\\
    &C \mapsto \lbrace b \in B\,|\, \exists c ( cRb \curlywedge c \in C) \rbrace &
	&D \mapsto \lbrace a\in A\,|\, \exists d (aRd \curlywedge d\in D) \rbrace
\end{align*}
where $C\subseteq A$ and $D\subseteq B$. Note that $\pos_R$ can equivalently be defined as the direct image operator associated with the converse relation $R^{-1}$ (i.e. $\pos_R=\posv_{R^{-1}}$). Both $\posv_R$ and $\pos_R$ are residuated w.r.t. the inclusion order $\subseteq$, with residuals respectively defined by:
\begin{align*}
\nec_R\colon &\wp(B)\rightarrow\wp(A) & \necv_R\colon& \wp(A)\rightarrow\wp(B)\\
    &D \mapsto \lbrace a \in A\,|\, \forall d ( aRd \Rightarrow d \in D) \rbrace &
	&C \mapsto \lbrace b\in B\,|\, \forall c (cRb \Rightarrow c\in C) \rbrace
\end{align*}
\end{example}
In the preceding example, when $A = B$, the pair $\langle A, R \rangle$ may be regarded as a Kripke frame for some modal logic, and the operator definitions give rise to the semantic clauses for the usual connectives $\pos, \nec$ and their “backward” counterparts $\posv, \necv$\footnote{For instance, if $R$ is interpreted as a temporal precedence relation, the resulting structure yields a frame for Prior’s Tense Logic \cite{Prior1957}, with $\pos, \nec, \posv,$ and $\necv$ corresponding, respectively, to the 
tense modalities $F, G, P,$ and $H$.}. Clearly, if $R$ is symmetric (as in the case of the modal logic $\mathsf{S5}$), 
then $\pos_R = \posv_R$ and $\nec_R = \necv_R$. 

In this paper, we consider residuated pairs of the form $\langle \pos, \necv \rangle$ defined from preorders. Let $\mathbf{A}$ be a preordered set. A subset $D\subseteq A$ is a \textit{downset} in $\mathbf{A}$ if $a \in D$ and $b\preceq a$ imply $b \in D$, for all $a,b \in A$. The \textit{principal downset} generated by an element $a \in A$ is the set $\downarrow\!\!a \,:=\{ b \in A \mid b\preceq a\}$. For $B \subseteq A$, we define $\downarrow\!\!B \,:=\bigcup_{b \in B}\downarrow\!\!b $. We call $\downarrow\!\!B$ the downset generated by $B$. It is immediate to to check that $\downarrow$ defines a closure operator on $\langle \wp(A), \subseteq \rangle$. Moreover, observe that $\bigcup_{b \in B}\downarrow\!\!b=\{a \in A \mid \exists b(a\preceq b \curlywedge b \in B)\}$. Hence $\downarrow\!\!B=\pos_{\preceq} B$. By (\ref{eq: R1}), Prop. \ref{prop: ncres} (1), and the axioms for closure operators, it follows that $\necv_{\preceq}$ is an interior operator defined by $C \mapsto \bigcup\{ B \mid \pos_{\preceq} B \subseteq C \}$, for any $C \subseteq A$.
\section{Trace-based specification of concurrent systems}
\label{sect: thcomp}
\subsection{Modelling framework}\label{sect: modframework}
The mathematical representation of concurrency can be given within a variety of modelling paradigms \cite{SNW1996,WM1995}. 
We place ourselves within the framework of interleaving concurrency and take labelled transition systems (LTSs) \cite{Keller1976} as our reference model of computation. This is a natural choice, as LTSs provide the operational semantics for process calculi like $\mathsf{CCS}$ \cite{Milner80} or $\mathsf{CSP}$ \cite{Hoare85}; however, the following discussion can be readily adapted to automata, Kripke structures, and similar models. 
\begin{definition}\label{def: LTS}
A \textit{labelled transition system} (\textit{LTS}) is a triple $\mathcal{T}=\langle S, \mathit{Act}, T\rangle$ where: 
\begin{itemize}
    \item $S$ is a nonempty set of \textit{states};
    \item $\mathit{Act}$ is a nonempty set of \textit{actions};
    \item $T\subseteq S \times \mathit{Act} \times S$ is a \textit{transition relation}.
\end{itemize}
A \textit{rooted LTS} is a quadruple $\mathcal{T}=\langle S, \mathit{Act}, T,s_0\rangle$ where $\langle S, \mathit{Act}, T\rangle$ is an LTS and $s_0\in S$ is a distinguished state called the \textit{initial state}.
\end{definition}
All LTSs considered in this work are assumed to be rooted, so we henceforth write “LTS” for “rooted LTS”. 
\begin{definition}
Given an LTS $\mathcal{T}=\langle S, \mathit{Act}, T,s_0\rangle$, we refer to elements of $T$ as \textit{transitions} and, for ${s,s'\in S}$ and ${\mathtt{a}\in \mathit{Act}}$, we write $s \xrightarrow{\mathtt{a}}s'$ if ${\langle s,\mathtt{a},s'\rangle \in T}$. A \textit{path} in $\mathcal{T}$ is a (possibly infinite) sequence of transitions such that whenever $s_i \xrightarrow{\mathtt{a}_i} s_j$ and $s_k \xrightarrow{\mathtt{a}_j} s_h$ occur consecutively, we have $s_j=s_k$. We denote paths using the streamlined notation $\cdots s_i \xrightarrow{\mathtt{a}_i} s_j\xrightarrow{\mathtt{a}_j} s_k \cdots $. A path is \textit{rooted} if its first transition starts from $s_0$.
\end{definition}
\begin{rexample}
We consider an LTS, which we call $\mathcal{E}$, modelling a simplified request-response protocol (system architecture is displayed in Figure~\ref{fig: run_ex}). At the initial state, the~action $\mathtt{conn}$ establishes a connection between client and server. The client then transmits a request via $\mathtt{snd}$. At this stage, the system evolves non-deterministically depending on the server’s response: if the request is correctly received, the server sends an acknowledgement and the requested data (action $\mathtt{ack}$, from $s_2$ to $s_4$); otherwise, the server signals a transmission problem via a negative acknowledgment (action $\mathtt{nack}$, from $s_2$ to $s_3$) and the protocol resets (action $\mathtt{end}$, from $s_3$ to $s_0$). In $s_4$, the client may either terminate the protocol by executing $\mathtt{end}$ and returning to $s_0$ or initiate another request (action $\mathtt{req}$, from $s_4$ to $ s_1$).
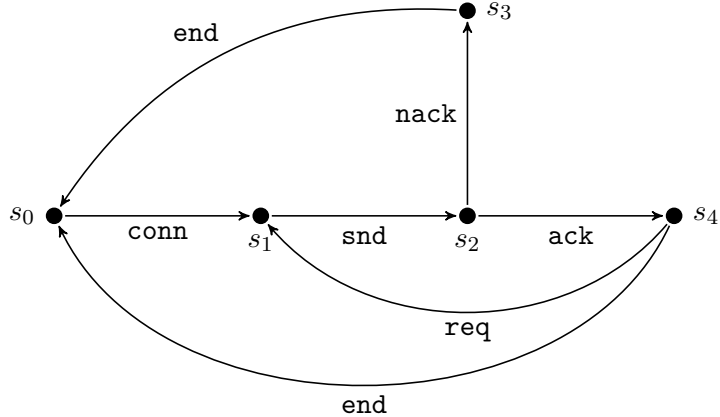
\begin{figure}[ht!]
\begin{tikzpicture}[modal]

\node[point] (root) [label = 180: {$s_0$}]                   {};
\node[point] (p1)   [right = of root, label = 270: {$s_1$}] {};
\node[point] (p2)   [right = of p1, label = 270: {$s_2$}]       {};
\node[point] (p4)   [right = of p2, label = 0: {$s_4$}]  {};
\node[point] (p3)   [above = of p2, label = 0: {$s_3$}]  {};
\path[->] (root) edge node[swap] {\texttt{conn}} (p1);
\path[->] (p1) edge node[swap] {\texttt{snd}}  (p2);
\path[->] (p2) edge node {\texttt{nack}}  (p3);
\path[->] (p2) edge node[swap] {\texttt{ack}}  (p4);
\path[->, bend right=30] (p3) edge node[swap] {\texttt{end}}  (root);
\path[->, bend left=65] (p4) edge node {\texttt{end}}  (root);
\path[->, bend left=50] (p4) edge node {\texttt{req}}  (p1);
\end{tikzpicture}
\caption{Request-response protocol.}
\label{fig: run_ex}
\end{figure}
\end{rexample}

Paths in LTSs provide an immediate representation of the behaviour of the systems being modelled. However, to formalise (temporal) properties of concurrent processes and to study their structure, we require an additional abstraction, which is provided by the notion of an \textit{execution trace}. Since this paper does not address any specific class of concurrent systems, for the general definitions in the present subsection we adopt the perspective of “mixed” executions, following \cite{Rosu2012,MaPa17}. In particular, by slightly adapting the modelling framework proposed in \cite[§7.2.2]{clsc}, we represent traces as finite or infinite sequences of state/action pairs, and classify them as valid or invalid according to a formal notion of compatibility with the transition relation.
\begin{definition}\label{def: trace}
For $\mathcal{T} = \langle S, \mathit{Act}, T, s_0\rangle$ an LTS, let $\Sigma_{\mathcal{T}} := S \times \mathit{Act}$.  
A \textit{$\mathcal{T}$-trace} is any sequence of elements from $\Sigma_{\mathcal{T}}$, possibly empty or infinite.
\end{definition}
For $\langle s, \mathtt{a}\rangle \in \Sigma_{\mathcal{T}}$, we define projections $\mathit{proj}_1\colon \langle s, \mathtt{a}\rangle \mapsto s $ and $\mathit{proj}_2\colon \langle s, \mathtt{a}\rangle \mapsto \mathtt{a} $. Treating $\Sigma_{\mathcal{T}}$ as an alphabet, finite $\mathcal{T}$-traces correspond to words in $\Sigma_{\mathcal{T}}^*$, and infinite ones to $\omega$-words in $\Sigma_{\mathcal{T}}^{\omega}$. Whenever it is necessary to distinguish finite from infinite traces, we use $w, u, v, \dots$ (possibly with annotations) to denote elements of~$\Sigma_{\mathcal{T}}^*$, and $\alpha, \beta, \gamma, \dots$ to denote elements of~$\Sigma_{\mathcal{T}}^{\omega}$. For mixed traces, we adopt the same notation as in the finite case. Let $\Sigma_{\mathcal{T}}^{\infty} :=\Sigma_{\mathcal{T}}^*\cup\Sigma_{\mathcal{T}}^{\omega} $. For $w \in \Sigma_{\mathcal{T}}^{\infty} $ and $i \in\mathbb{N}$, $w[i]$, $w[\triangleright i]$, and $w[i\triangleright ]$ respectively denote the $i$-th pair occurring in $w$, the prefix $w[0],\dots,w[i]$, and the suffix $w[i],w[i+1],\dots$. If $w$ is finite, we denote by $|w|$ its length. We say that $v\in\Sigma_{\mathcal{T}}^*$ is a \textit{prefix} of $w$ (written $v \sqsubseteq w$) if there exists $u\in\Sigma_{\mathcal{T}}^{\infty} $ such that $w=vu$. Finally, we define $\mathit{pref}(w):=\{ v  \in\Sigma_{\mathcal{T}}^*\mid v \sqsubseteq w  \}$.
\begin{definition}\label{def: vtrace}
Let $\mathcal{T}$ be an LTS. A finite $\mathcal{T}$-trace $w \in \Sigma_{\mathcal{T}}^*$ is \textit{valid} if and only if:
\begin{equation}
\tag{VAL$_*$} \text{for all $i<|w|-1$, $\mathit{proj}_1(w[i])\xrightarrow{\mathit{proj}_2(w[i])}\mathit{proj}_1(w[i+1])$}  \label{eq: VAL*} 
\end{equation}
Similarly, an infinite $\mathcal{T}$-trace $\alpha\in \Sigma_{\mathcal{T}}^{\omega}$ is \textit{valid} if and only if:
\begin{equation}
\tag{VAL$_{\omega}$} \text{for all $i\in \mathbb{N}$, $\mathit{proj}_1(\alpha[i])\xrightarrow{\mathit{proj}_2(\alpha[i])}\mathit{proj}_1(\alpha[i+1])$}  \label{eq: VAL1o} 
\end{equation}
A $\mathcal{T}$-trace $w \in \Sigma_{\mathcal{T}}^{\infty}$ is \textit{invalid} if it is not valid.
\end{definition}
Definitions \ref{def: trace} and \ref{def: vtrace} show that the notion of a trace captures and extends that of a path, as each valid trace corresponds to a unique path. From the perspective of formal verification, the distinction between valid and invalid traces may appear redundant, since only traces corresponding to paths are relevant in that context. However, as we will see, it becomes useful when one aims to provide an algebraic characterisation of properties of finite traces as subsets of a free monoid. This in no way affects verification: determining whether a valid trace satisfies a property reduces to checking whether it belongs to a set of strings with certain features, regardless of whether the set contains only valid traces or also includes invalid ones.
\begin{rexample}
Below are examples of valid $\mathcal{E}$-traces with their associated paths.
\begin{table}[H]
\begin{tabular}{|c|c|c|}
\hline
Class & Example $\mathcal{E}$-trace & Path \\
\hline
\multirow{2}{*}[-0.83em]{\centering$\Sigma_{\mathcal{E}}^*$} & \makecell{$\langle s_0, \mathtt{conn}\rangle \langle s_1, \mathtt{snd}\rangle \langle s_2, \mathtt{ack}\rangle \langle s_4, \mathtt{end}\rangle$} &
\makecell{$s_0 \xrightarrow{\mathtt{conn}} s_1 \xrightarrow{\mathtt{snd}} s_2 \xrightarrow{\mathtt{ack}} s_4$} \\
\cline{2-3}
& \makecell{$(\langle s_1, \mathtt{snd}\rangle \langle s_2, \mathtt{nack}\rangle \langle s_3, \mathtt{end}\rangle \langle s_0, \mathtt{conn}\rangle)^3$} &
\makecell{$\pi \xrightarrow{\mathtt{conn}} \pi \xrightarrow{\mathtt{conn}} \pi$, where \\ $\pi = s_1 \xrightarrow{\mathtt{snd}} s_2 \xrightarrow{\mathtt{nack}} s_3 \xrightarrow{\mathtt{end}} s_0$} \\
\hline
\multirow{2}{*}{$\Sigma_{\mathcal{E}}^{\omega}$} & \makecell{$(\langle s_0, \mathtt{conn}\rangle \langle s_1, \mathtt{snd}\rangle \langle s_2, \mathtt{nack}\rangle \langle s_3, \mathtt{end}\rangle)^{\omega}$} &
\makecell{$s_0 \xrightarrow{\mathtt{conn}} s_1 \xrightarrow{\mathtt{snd}} s_2 \xrightarrow{\mathtt{nack}} s_3 \xrightarrow{\mathtt{end}} s_0 \cdots$} \\
\cline{2-3}
& \makecell{$(\langle s_1, \mathtt{snd}\rangle \langle s_2, \mathtt{ack}\rangle \langle s_4, \mathtt{req}\rangle)^{\omega}$} &
\makecell{$s_1 \xrightarrow{\mathtt{snd}} s_2 \xrightarrow{\mathtt{ack}} s_4 \xrightarrow{\mathtt{req}} s_1 \cdots$} \\
\hline
\end{tabular}
\end{table}
The following are instead examples of invalid $\mathcal{E}$-traces:
\[\text{any element of $\Sigma_{\mathcal{E}}$} \qquad \langle s_1, \mathtt{snd}\rangle(\langle s_2, \mathtt{ack}\rangle\langle s_2, \mathtt{nack}\rangle)^*\qquad \langle s_0, \mathtt{conn}\rangle^{\omega}\]
\end{rexample}
\subsection{Trace properties for LTSs (mixed case)}
We now discuss trace properties for LTSs, beginning with the general case of mixed executions and then turning to finite-trace models.
\begin{definition}
Let $\mathcal{T}$ be an LTS. A \textit{trace property} for $\mathcal{T}$ is any set $P\subseteq \Sigma_{\mathcal{T}}^{\infty}$. A $\mathcal{T}$-trace $w$ \textit{satisfies} a property $P$ if $w \in P$. 
\end{definition}
\begin{remark}\label{rem: tbspec}
Note that an LTS $\mathcal{T}$ may be identified with a special trace property, namely the set $V_{\mathcal{T}}$ of its valid traces. In our setting, let $\mathit{FV}_{\mathcal{T}}:=\{ w \in \Sigma_{\mathcal{T}}^*\mid \text{$w$ satisfies (\ref{eq: VAL*})} \}$ and $\mathit{IV}_{\mathcal{T}}:=\{ w \in \Sigma_{\mathcal{T}}^{\omega}\mid \text{$w$ satisfies (\ref{eq: VAL1o})} \}$. Then $V_{\mathcal{T}}:=\mathit{FV}_{\mathcal{T}} \cup \mathit{IV}_{\mathcal{T}}$.
\end{remark}
\begin{definition}\label{def: SLm}
For $\mathcal{T}$ an LST, let $P\subseteq \Sigma_{\mathcal{T}}^{\infty}$. Consider the following two conditions:
\begin{align}
\tag{SP$_{\infty}$} &\forall w \in \Sigma_{\mathcal{T}}^{\infty}\;(w \in P \Leftrightarrow \mathit{pref}(w)\subseteq P)  \label{eq: SP}\\ 
\tag{LP$_{\infty}$} &\forall w\in \Sigma_{\mathcal{T}}^*\;\exists w'\in P \; (w \in \mathit{pref}(w'))  \label{eq: LP} 
\end{align}
Then $P$ is a \textit{safety property} \cite{Rosu2012} if it satisfies (\ref{eq: SP}) and is a \textit{liveness property} \cite{MaPa17} if it satisfies (\ref{eq: LP}).
\end{definition}
\begin{remark}
$\emptyset$ is a safety property, while $\Sigma_{\mathcal{T}}^*$ and $\Sigma_{\mathcal{T}}^{\infty}$ are both safety and liveness properties.
\end{remark}
The computational interpretation of conditions (\ref{eq: SP}) and (\ref{eq: LP}) is well known \cite{AlpSch85}. A safety property $S$ is satisfied by a trace $w$ if $S$ is preserved in each prefix of $w$. If $w$ is valid, this amounts to saying that no event violating $S$ occurs during the computation; if it did, it would be observable in some finite subcomputation—possibly in the last transition of the path associated with $w$, when $w$ is finite.
In turn, an execution satisfies a liveness property $L$ if some event witnessing $L$ eventually occurs during the computation—possibly in the limit, in the case of infinite behaviours. Formally, any finite trace $w \in \Sigma_{\mathcal{T}}^*$ admits an extension, finite or infinite, to a trace $w'\in \Sigma_{\mathcal{T}}^{\infty}$ that satisfies $L$.
\begin{remark}
A classical theorem by Alpern and Schneider \cite[Thm. 1]{AlpSch85} states that in infinite-trace systems, every property is the intersection of a safety and a liveness property. This is established via a topological argument, as in that context safety and liveness correspond, respectively, to closure and density in the Plotkin topology of observable properties\footnote{For further details, see Smyth \cite{Smyth1993}.}. Pasqua and Mastroeni \cite[Prop. 3]{MaPa17} generalised this result to mixed traces, showing that (\ref{eq: SP}) and (\ref{eq: LP}) define the closed and dense sets in the topology on $\Sigma_{\mathcal{T}}^{\infty}$ induced by the closure operator $P\mapsto \lim(\bigcup_{w \in P}\mathit{pref}(w))$, where $\lim$ is the Eilenberg-limit operator \cite{Eilenberg74}.
\end{remark}
\begin{rexample}
Consider the following system policies for $\mathcal{E}$:
\begin{enumerate}
    \item The client never receives a message from the server before having sent a request.
    \item Whenever the client forwards a request to the server, it receives a response.
\end{enumerate}
Following the terminology of Manna and Pnueli~\cite{MannaPnueli90}, policies~(1) and~(2) are instances of the \textit{causal dependence} and \textit{responsiveness} property patterns, respectively.

Policy~(1) prescribes that neither state~$s_3$ nor state~$s_4$ can be reached before the system has entered state~$s_2$. In other words, $\mathcal{E}$ cannot enable actions~$\mathtt{ack}$ or~$\mathtt{nack}$ without having previously enabled action~$\mathtt{snd}$.
Abstractly, this policy corresponds to the set $P_1$ of $w \in\Sigma_{\mathcal{E}}^{\infty}$ such that, for every substring~$w[i-1]w[i]$ with $\mathit{proj}_2(w[i-1]) \in \{\mathtt{ack}, \mathtt{nack}\}$, there exists~$j < i$ with $\mathit{proj}_2(w[j-1]) = \mathtt{snd}$. Observe that, for any $w \in P_1$ and $v \sqsubseteq w$, the prefix~$v$ also satisfies~$P_1$. If $v$ contains a pair of the form $\langle s, \mathtt{ack}\rangle$ or $\langle s, \mathtt{nack}\rangle$, it must be preceded by some pair $\langle s', \mathtt{snd}\rangle$. If $v$ contains no $\langle s, \mathtt{ack}\rangle$ or $\langle s, \mathtt{nack}\rangle$ pair, then the property holds vacuously. Conversely, let $w \in \Sigma_{\mathcal{T}}^{\infty}$ be such that $\mathit{pref}(w) \subseteq P_1$. If $w$ is finite, then trivially $w \in P_1$; if $w$ is infinite, suppose for a contradiction that $w \notin P_1$. Then there exists $i \in \mathbb{N}$ such that $w[\triangleright i] \notin P_1$, but this is impossible since all prefixes of $w$ are in $P_1$.
Therefore,~$P_1$ is a safety property and can be formalised both in classical~$\mathsf{LTL}$~\cite{pn} and in its finite-trace variant~$\mathsf{LTL}_f$~\cite{DeGVardi13} by means of the following formula \cite{Schnoebelen2003} (where $U$ is the \textit{until} operator):
\[ \neg (\neg\mathit{request}\mathrel{U} (\mathit{response}\wedge \neg \mathit{request})) \]
Policy~(2) states that once the system reaches~$s_2$, it is guaranteed to subsequently reach either state~$s_3$ or state~$s_4$. Therefore, enabling~$\mathtt{snd}$ implies the eventual execution of~$\mathtt{ack}$ or~$\mathtt{nack}$. 
We can thus define the abstract property corresponding to this policy as the set~$P_2$ of $\mathcal{E}$-traces $w \in \Sigma_{\mathcal{E}}^{\infty}$ such that, for every substring~$w[i-1]w[i]$ with $\mathit{proj}_2(w[i-1]) = \mathtt{snd}$, there exists~$j > i$ with $\mathit{proj}_2(w[j-1]) \in \{\mathtt{ack}, \mathtt{nack}\}$.
$P_2$ is a liveness property: any $v \in \Sigma_{\mathcal{E}}^*$ is indeed a prefix of some~$w\in P_2$. The formalisation of~$P_2$ in~$\mathsf{LTL}$ or~$\mathsf{LTL}_f$ is given by the following formula (where $G$ and $F$ are the \textit{globally} and \textit{eventually} operators, respectively):
\[G (\mathit{request}\rightarrow F(\mathit{response}))\]
Observe that all valid $\mathcal{E}$-traces satisfy both~$P_1$ and~$P_2$, that is, $V_{\mathcal{E}} \subseteq P_1$ and~$V_{\mathcal{E}} \subseteq P_2$.
\end{rexample}
\subsection{Trace properties for LTSs (finite case)}\label{sect: ftraces}
We proceed to examine f-properties. In this case, for an LTS $\mathcal{T}$, our domain of reference is simply $\wp(\Sigma_{\mathcal{T}}^*)$, while $\mathcal{T}$ itself may be specified as the f-property $\mathit{FV}_{\mathcal{T}}$ defined in Remark~\ref{rem: tbspec}. As shown in Prop. \ref{prop: divstrings}, the prefix relation $\sqsubseteq$ is the left divisibility order on $\Sigma_{\mathcal{T}}^*$; hence, for $w \in \Sigma_{\mathcal{T}}^*$, $\mathit{pref}(w)$ is the principal downset $\downarrow\!w$ in $\langle \Sigma_{\mathcal{T}}^*,\sqsubseteq\rangle$. The inverse image operator $\pos=\pos_{\sqsubseteq}$ on $\wp(\Sigma_{\mathcal{T}}^*)$ defined~by
\[ \pos\! P:=\bigcup_{w\in P}\mathit{pref}(w)=\{v \in \Sigma_{\mathcal{T}}^*\mid \exists w(v\sqsubseteq w \curlywedge w\in P)\}\]
is called the \textit{prefix-closure operator}. Since $\sqsubseteq$ is a partial order, $\pos$ is a topological closure operator on $\langle\wp(\Sigma_{\mathcal{T}}^*),\subseteq\rangle$, i.e. for $ P,Q\subseteq \Sigma_{\mathcal{T}}^*$, it satisfies the Kuratowski axioms:
\begin{itemize}
    \item[(D$_1$)] $P \subseteq \pos\! P$;
    \item[(D$_2$)] $\pos\!\pos\!P= \pos\!P$;
    \item[(D$_3$)] $\pos(P \cup Q) = \pos\!P \cup \pos\!Q$;
    \item[(D$_4$)] $\pos\emptyset = \emptyset$
\end{itemize}
Ro\c{s}u \cite{Rosu2012} observed that the notion of safety for f-properties coincides precisely with being a fixpoint of $\pos$, while Pasqua and Mastroeni \cite{MaPa17} showed that restricting (\ref{eq: LP}) to finite traces characterises liveness f-properties as those whose image under $\pos$ is the entire $\Sigma_{\mathcal{T}}^*$. We can thus restate Definition \ref{def: SLm} as follows:
\begin{definition}
For $\mathcal{T}$ an LST, let $P\subseteq \Sigma_{\mathcal{T}}^*$. Consider the following two conditions:
\begin{align}
\tag{SP$_*$} &  \pos\! P =P \label{eq: SP*}\\ 
\tag{LP$_*$} & \pos\! P= \Sigma_{\mathcal{T}}^* \label{eq: LP*} 
\end{align}
Then $P$ is a \textit{safety f-property} \cite{Rosu2012} if it satisfies (\ref{eq: SP*}) and is a \textit{liveness f-property} \cite{MaPa17} if it satisfies (\ref{eq: LP*}).
\end{definition}
One easily checks that restricting (\ref{eq: SP}) to $\Sigma_{\mathcal{T}}^*$ yields a condition equivalent to~(\ref{eq: SP*}).
\begin{proposition}\label{prop: sbox}
For all $P\subseteq \Sigma_{\mathcal{T}}^*$, $\pos\!P =P$ iff $P=\{w \in \Sigma_{\mathcal{T}}^* \mid \mathit{pref}(w)\subseteq P\}$.
\begin{proof}
Suppose that $P$ is a fixpoint of $\pos$. Since $\pos\! P=\bigcup_{w \in P}\mathit{pref}(w)$, it follows that $\mathit{pref}(v)\subseteq \pos\! P=P$ for all $v \in P$. This establishes the left-to-right implication. For the converse, assume \textit{ad absurdum} that $P\neq\pos\! P$ and that $P=\{w \in \Sigma_{\mathcal{T}}^* \mid \mathit{pref}(w)\subseteq P\}$. Since $\pos$ is a closure operator, we then have $P\subsetneq\pos\!P$, hence there exists $v\in \pos\! P$ such that $v \not\in P$. But then, for some $u \in P$, we must have $v \in \mathit{pref}(u)$, and as $\mathit{pref}(u)\subseteq P$, it follows that $v \in P$—a contradiction.
\end{proof}
\end{proposition}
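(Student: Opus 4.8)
The plan is to prove the two inclusions of the claimed set equality in each direction, leaning on two facts that require no hypothesis. First, the prefix relation $\sqsubseteq$ is reflexive, so $w \in \mathit{pref}(w)$ for every $w \in \Sigma_{\mathcal{T}}^*$; consequently $\{w \mid \mathit{pref}(w) \subseteq P\} \subseteq P$ holds for arbitrary $P$. Second, since $\pos$ is a closure operator, axiom (D$_1$) gives $P \subseteq \pos\!P$ for arbitrary $P$. These two observations, combined with the definition $\pos\!P = \bigcup_{w \in P}\mathit{pref}(w)$, will carry the whole argument; the fixpoint hypothesis is needed only to supply the one remaining inclusion in each direction.

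For the left-to-right implication, I would assume $\pos\!P = P$ and verify $P = \{w \mid \mathit{pref}(w) \subseteq P\}$. The inclusion $\supseteq$ is the unconditional one noted above. For $\subseteq$, take $v \in P$: by definition $\mathit{pref}(v) \subseteq \bigcup_{w \in P}\mathit{pref}(w) = \pos\!P$, and the hypothesis $\pos\!P = P$ turns this into $\mathit{pref}(v) \subseteq P$, placing $v$ in the right-hand set. For the converse, I would assume $P = \{w \mid \mathit{pref}(w) \subseteq P\}$ and show $\pos\!P = P$. The inclusion $P \subseteq \pos\!P$ is again (D$_1$); for $\pos\!P \subseteq P$, an element $v \in \pos\!P$ comes with some $u \in P$ such that $v \in \mathit{pref}(u)$, and since $u \in P$ means exactly $\mathit{pref}(u) \subseteq P$, we obtain $v \in P$.

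There is essentially no obstacle here: the real content lies in recognising that the right-hand set is precisely the interior operator $\necv\!P$ residual to $\pos$ (cf.\ Example~\ref{ex: 1res}), so that the statement merely asserts that the closed elements of $\pos$ coincide with the open elements of $\necv$. Indeed, one could alternatively derive the whole proposition in one line from the residuation law (\ref{eq: R1}) instantiated at $a = b = P$, which reads $\pos\!P \subseteq P \Leftrightarrow P \subseteq \necv\!P$, combined with extensivity of $\pos$ and contractivity of $\necv$. The only point demanding a little care is to keep track of which inclusions hold unconditionally and to invoke the fixpoint hypothesis only where it is genuinely required.
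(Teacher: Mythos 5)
Your proof is correct and follows essentially the same route as the paper's: both arguments unwind the definition $\pos\!P=\bigcup_{w\in P}\mathit{pref}(w)$ and use elementary set reasoning, the only differences being that you prove the converse direction directly where the paper argues by contradiction, and you make explicit the reflexivity-based inclusion $\{w \mid \mathit{pref}(w)\subseteq P\}\subseteq P$ that the paper leaves implicit. Your closing observation that the right-hand set is $\necv\!P$ and that the whole statement follows from the residuation law is also sound—indeed the paper itself gives exactly this residuation-theoretic argument as an alternative proof of the proposition in the subsection on $\necv$.
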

\begin{rexample}
The properties~$P_1$ and~$P_2$ introduced in~§\ref{sect: modframework} for mixed traces can be restricted to the finite case by defining
$P_{1,f} := P_1 \cap \Sigma_{\mathcal{E}}^*$ and $P_{2,f} := P_2 \cap \Sigma_{\mathcal{E}}^*$. In this setting, the temporal logic specification naturally reduces to~$\mathsf{LTL}_f$ alone.
\end{rexample}
\begin{remark}
Pasqua and Mastroeni~\cite[§3.2]{MaPa17} define liveness f-properties as follows:
\[
L \subseteq \Sigma_{\mathcal{T}}^* \text{ is liveness}
\;\;\Leftrightarrow\;\;
\forall w \in \Sigma_{\mathcal{T}}^* \; \exists w'\in L\;
(w \sqsubseteq w').
\]
Observe that the existentially quantified subformula is precisely the requirement for $w$ to belong to $\pos\!L$. Hence we can rewrite:
\[
L \subseteq \Sigma_{\mathcal{T}}^* \text{ is liveness}
\;\;\Leftrightarrow\;\;
\forall w \in \Sigma_{\mathcal{T}}^* (w \in \pos\!L).
\]
Therefore $L$ is a liveness f-property if and only if
$\Sigma_{\mathcal{T}}^* \subseteq \pos\!L$, and thus $\Sigma_{\mathcal{T}}^* = \pos\!L$.
\end{remark}
The following result is an immediate consequence of the above discussion.
\begin{proposition}\textnormal{\cite[Prop. 1]{MaPa17}}
Safety and liveness f-properties are, respectively, the closed and dense sets in the topology over $\Sigma_{\mathcal{T}}^*$ induced by $\pos$.
\end{proposition}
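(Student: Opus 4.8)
The plan is to reduce the statement to the classical Kuratowski correspondence between closure operators and topologies, after which both halves follow simply by matching the definitions already recorded above. First I would invoke the fact that, since $\pos$ satisfies the axioms (D$_1$)--(D$_4$), it is the closure operator of a (unique) topology $\tau$ on $\Sigma_{\mathcal{T}}^*$ whose closed sets are precisely the fixpoints of $\pos$, and in which the topological closure of any $P\subseteq\Sigma_{\mathcal{T}}^*$ coincides with $\pos\!P$. I would accompany this with a one-line verification that the fixpoints of $\pos$ do form the closed sets of a topology: $\emptyset$ and $\Sigma_{\mathcal{T}}^*$ are fixpoints by (D$_4$) and (D$_1$) respectively; (D$_3$) gives stability of fixpoints under finite unions; and monotonicity of $\pos$---itself an immediate consequence of (D$_3$), since $P\subseteq Q$ yields $\pos\!Q=\pos(P\cup Q)=\pos\!P\cup\pos\!Q$---together with (D$_1$) yields stability under arbitrary intersections.

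With this in place, the safety half is immediate: by definition~(\ref{eq: SP*}), a set $P$ is a safety f-property exactly when $\pos\!P=P$, i.e.\ exactly when $P$ is a fixpoint of $\pos$, i.e.\ exactly when $P$ is closed in $\tau$. For the liveness half I would recall the standard characterisation of density, namely that a subset $D$ of a topological space is \emph{dense} precisely when its closure is the whole space. Since the closure operator of $\tau$ is $\pos$, density of $D$ amounts to $\pos\!D=\Sigma_{\mathcal{T}}^*$, which by definition~(\ref{eq: LP*}) is exactly the condition for $D$ to be a liveness f-property (this reformulation of liveness was already spelled out in the Remark preceding the statement).

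There is no substantive obstacle here: the whole mathematical content is the Kuratowski theorem, whose only mildly delicate point is deriving the monotonicity of $\pos$ from the union axiom (D$_3$) so as to check that closed sets are stable under arbitrary intersections. Everything else is a matter of aligning the defining conditions (\ref{eq: SP*}) and (\ref{eq: LP*}) against, respectively, the topological notions of closed set and dense set, both of which are expressed directly in terms of $\pos$.
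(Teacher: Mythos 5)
Your proof is correct and follows essentially the same route as the paper, which presents this proposition as an immediate consequence of the preceding discussion: $\pos$ satisfies the Kuratowski axioms (D$_1$)--(D$_4$), safety f-properties are exactly its fixpoints (the closed sets), and liveness f-properties are exactly the sets whose $\pos$-closure is all of $\Sigma_{\mathcal{T}}^*$ (the dense sets). The only difference is that you spell out the closure-operator-to-topology correspondence explicitly, which the paper leaves implicit.
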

\section{An algebraic model for f-properties}\label{sect: algsect}
In this section, we develop an algebraic framework for representing f-properties, with a focus on those specified via the prefix-closure operator $\pos$ and its residual $\necv=\necv_{\sqsubseteq}$.
As noted in the introduction, for a free monoid $\mathbf{F}_{\Sigma}=\langle \Sigma^*, \cdot, \varepsilon\rangle$, several algebras can be defined on $\wp(\Sigma^*)$. In what follows, we adopt a “minimal” approach, taking as our base structure the bounded $\ell$-monoid $\wp(\mathbf{F}_{\Sigma})=\langle \wp(\Sigma^*), \cap, \cup, \cdot, \{\varepsilon\}, \emptyset, \Sigma^*\rangle$, where multiplication in the monoid reduct $\langle \wp(\Sigma^*), \cdot, \{\varepsilon\}\rangle$ is given by the standard language concatenation defined as $P\cdot Q := \{vw \mid v \in P \curlywedge w \in Q\}$ for $P,Q\subseteq \Sigma^*$. 

Let $\wp(\mathbf{F}_{\Sigma})_+$ be the algebra obtained by expanding $\wp(\mathbf{F}_{\Sigma})$ with the residuated pair $\langle \pos, \necv\rangle$. The algebraic model we are going to introduce, which we refer to as a \textit{closure $\ell$-monoid}, arises as an abstraction of $\wp(\mathbf{F}_{\Sigma})_+$ (see Lemma \ref{lem: VnGEN}). The reasons for this is twofold. First, to the best of our knowledge, there is no complete axiomatisation of $\wp(\mathbf{F}_{\Sigma})$, nor is it known whether such an axiomatisation could be finitary. Second, as we shall see, the proof-theoretic framework adopted in Section~\ref{sect: LMC} is too restrictive to permit the derivation of (the sequent counterparts of) some conditions holding in~$\wp(\mathbf{F}_{\Sigma})_+$.
\subsection{Prefix-closure}
We begin by examining how $\pos$ interacts with the operations of $\wp(\mathbf{F}_{\Sigma})$. In the case of lattice operations, $\pos$ behaves as a standard $\mathsf{S4}$ diamond. Hence, using conditions (D$_1$)–(D$_4$) in §\ref{sect: ftraces}, one readily shows, for instance, that
\[\begin{array}{ccc}
 \text{$P\subseteq Q \Rightarrow \pos\!P\subseteq\pos\!Q$ }&  \pos(P \cap Q)\subseteq \pos\!P\cap\pos\!Q  & \pos(P \cap Q)= \pos(\pos\!P\cap\pos\!Q)   \\
\end{array}\]
for all $P,Q\subseteq \Sigma^*$. As for constants, we already know that $\pos$ preserves $\emptyset$. Of course, $\pos\! \Sigma^* = \Sigma^*$, and since $\varepsilon = \min \langle \Sigma^*, \sqsubseteq \rangle$, we also have $\pos\{\varepsilon\} = \{\varepsilon\}$. What is less immediate, however, is how $\pos$ behaves with respect to language concatenation. First, observe that for $P, Q \subseteq \Sigma^*$, a word $w$ belongs to $\pos(P \cdot Q)$ if there exists $uv\in P\cdot Q$ (with $u \in P$ and $v \in Q$) such that exactly one of the cases in the following table applies.
\begin{table}[H]
    \centering
    \begin{tabular}{|c|c|}
    \hline
    \makecell{Case 1. $w\sqsubseteq u$, $w\neq u$} & \makecell{Case 2. $w = u$}\\
\hline
\makecell{\begin{tikzpicture}
\draw[|-|,dotted]  (0,0)--node[above] {$u$} (2,0);
\draw[|-|,dotted]  (2,0)--node[above] {$v$} (4,0);
\draw[|-|,dotted]  (0,-0.7)--node[above] {$w$} (1,-0.7);
\end{tikzpicture}}  & \makecell{\begin{tikzpicture}
\draw[|-|,dotted]  (0,0)--node[above] {$u$} (2,0);
\draw[|-|,dotted]  (2,0)--node[above] {$v$} (4,0);
\draw[|-|,dotted]  (0,-0.7)--node[above] {$w$} (2,-0.7);
\end{tikzpicture}}  \\
\hline
\makecell{Case 3. $w= uv'$, $v'\sqsubseteq v$, $v'\neq v$} & \makecell{Case 4. $w=uv$}\\
\hline
\makecell{\begin{tikzpicture}
\draw[|-|,dotted]  (0,0)--node[above] {$u$} (2,0);
\draw[|-|,dotted]  (2,0)--node[above] {$v$} (4,0);
\draw[|-|,dotted]  (0,-0.7)--node[above] {$w$} (3,-0.7);
\end{tikzpicture}}   &  \makecell{\begin{tikzpicture}
\draw[|-|,dotted]  (0,0)--node[above] {$u$} (2,0);
\draw[|-|,dotted]  (2,0)--node[above] {$v$} (4,0);
\draw[|-|,dotted]  (0,-0.7)--node[above] {$w$} (4,-0.7);
\end{tikzpicture}}\\
\hline
    \end{tabular}
    \caption{The structure of elements of $\pos(P\cdot Q)$.}
    \label{tab: diamprod}
\end{table}
From the cases in Table~\ref{tab: diamprod}, it follows immediately that the following property holds in $\langle \mathbf{F}_{\Sigma}, \sqsubseteq \rangle$, for all $u, v, w \in \Sigma^*$:
\begin{equation}
\tag{RDP$_{*}$} \text{$w\sqsubseteq uv$ implies $w=u'v'$, with $u'\sqsubseteq u$ and $v'\sqsubseteq v$}  \label{eq: RDP2} 
\end{equation}
Readers having some familiarity with the theory of $\ell$-groups will have recognised in (\ref{eq: RDP2}) a particular instance of the well-known \textit{Riesz Decomposition Property} (\textit{RDP})\footnote{Originally introduced in the context of functional analysis \cite{Riesz29}, specifically in connection with vector lattices, the RDP is defined for $\ell$-groups as follows \cite[p.~3]{anfe88}. Let $\langle \mathbf{G}, \preceq \rangle$ be an (additive) $\ell$-group. Define $G_+ := \{ a \in G \mid 0 \prec a \}$ (this set is called the \textit{positive cone} of $\langle \mathbf{G}, \preceq \rangle$). Let $a_1, \dots, a_n \in G_+$. Then $\langle \mathbf{G}, \preceq \rangle$ has the RDP if $0 \preceq b \preceq \sum_{1 \leq i \leq n} a_i$ implies $b = \sum_{1 \leq i \leq n} b_i$ with $0 \preceq b_i \preceq a_i$, for all $i \in \{1, \dots, n\}$.}. It is a standard fact that the RDP is satisfied by some structures with weaker requirements than $\ell$-groups (e.g. Riesz groups \cite{Fuchs65}). We show that any monoid endowed with a divisibility preorder has the RDP. In the following lemma, for a monoid~$\mathbf{M}$, $\mathfrak{d}\in\{l,r\}$, and elements $a_1, \dots, a_n \in M$, we denote by~$[a_1 \cdots a_n]_{\mathfrak{d}}$ the product $a_1 \cdots a_n$ if~$\mathfrak{d}=l$, and $a_n \cdots a_1$ if~$\mathfrak{d}=r$.

\begin{lemma}
Let $\mathbf{M}$ be a monoid. Then, for $\mathfrak{d}\in\{l,r\}$, $\langle \mathbf{M},\mid_{\mathfrak{d}}\rangle$ has the RDP.
\begin{proof}
We prove by induction on $n\in \mathbb{Z}_+$ that, for all $a_1,\dots,a_n, b\in M$:
\begin{equation}
\tag{RDP} \text{$b \mid_{\mathfrak{d}} [a_1 \cdots  a_n]_{\mathfrak{d}}$ implies $b =[b_1 \cdots  b_n]_{\mathfrak{d}}$, with $b_i \mid_{\mathfrak{d}} a_i$ for $i \in \{1,\dots,n\}$}  \label{eq: RDP}
\end{equation}
The base case ($n = 1$) is immediate. Suppose now that (\ref{eq: RDP}) holds for $n=j$ ($j>1$); that is, whenever $b \mid_{\mathfrak{d}} [a_1 \cdots  a_j]_{\mathfrak{d}}$, we have $b =[b_1 \cdots  b_j]_{\mathfrak{d}}$ with $b_i \mid_{\mathfrak{d}} a_i$ for all $i \leq j$. Let $b \mid_{\mathfrak{d}} [a_1 \cdots  a_j]_{\mathfrak{d}}$, and take $a_{j+1} \in M$.  
By Lem. \ref{lem: divbasics}(2), we have $b \mid_{\mathfrak{d}} [a_1 \cdots  a_j a_{j+1}]_{\mathfrak{d}}$ and $b_j \mid_{\mathfrak{d}} a_j \cdot_{\mathfrak{d}} a_{j+1}$. By the induction hypothesis, we thus have $b_j=b_j' \cdot_{\mathfrak{d}}b_{j+1}$ with $b_j'\mid_{\mathfrak{d}} a_j$ and $b_{j+1}\mid_{\mathfrak{d}} a_{j+1}$. Hence $b =[b_1 \cdots  b_{j-1}b_j'b_{j+1}]_{\mathfrak{d}}$, and (\ref{eq: RDP}) holds for $n = j + 1$.
\end{proof}
\end{lemma}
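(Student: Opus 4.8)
The plan is to reduce the whole statement to the two-factor case and then telescope. Since right divisibility in $\mathbf{M}$ is precisely left divisibility in the opposite monoid $\mathbf{M}^{\mathrm{op}}$, and the bracket $[a_1\cdots a_n]_{\mathfrak{d}}$ reverses the order of factors accordingly, it suffices to establish (\ref{eq: RDP}) for $\mathfrak{d}=l$; the case $\mathfrak{d}=r$ then follows by applying the result in $\mathbf{M}^{\mathrm{op}}$ (formally, $a\mid_r b$ in $\mathbf{M}$ iff $a\mid_l b$ in $\mathbf{M}^{\mathrm{op}}$, and $[a_1\cdots a_n]_r$ is $[a_1\cdots a_n]_l$ computed there). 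So I would fix $\mathfrak{d}=l$ and work with ordinary left divisibility $\mid_l$ throughout.

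Next I would induct on $n$. The base case $n=1$ is immediate, taking $b_1:=b$. For the inductive step I would not pass directly from $n=j$ to $n=j+1$; instead I would peel off a single factor using the two-factor instance of (\ref{eq: RDP}). Writing $a_1\cdots a_n=a_1\cdot(a_2\cdots a_n)$ and assuming $b\mid_l a_1\cdots a_n$, the two-factor case would yield a decomposition $b=b_1\cdot c$ with $b_1\mid_l a_1$ and $c\mid_l a_2\cdots a_n$; the induction hypothesis applied to $c$ then gives $c=b_2\cdots b_n$ with $b_i\mid_l a_i$, whence $b=b_1b_2\cdots b_n$ is the required decomposition. In this way the entire argument collapses to the single statement
\[
b\mid_l a_1a_2 \implies b=b_1b_2 \text{ with } b_1\mid_l a_1 \text{ and } b_2\mid_l a_2.
\]

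The two-factor case is where I expect the real difficulty to lie, and it is exactly the abstract form of (\ref{eq: RDP2}). The natural argument is the prefix-splitting recorded in Table~\ref{tab: diamprod}: from a witnessing identity $a_1a_2=bc$ one compares $b$ with $a_1$ and cuts $b$ at the boundary between $a_1$ and $a_2$; if $b\sqsubseteq a_1$ one takes $(b_1,b_2)=(b,\varepsilon)$, and if instead $a_1\sqsubseteq b$ one sets $b_1:=a_1$ and lets $b_2$ be the remaining overhang, which is then forced to be a prefix of $a_2$. The main obstacle is that this cut has no evident counterpart in an arbitrary monoid: $\mid_l$ need not be a total preorder, so $b$ and $a_1$ need not be $\mid_l$-comparable, and there is no canonical point inside the product $bc$ at which to divide $b$ between $a_1$ and $a_2$. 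Consequently I would expect the two-factor case---and hence the stated generality ``for any monoid''---to be the crux, provable via the concrete prefix structure in the free-monoid setting of (\ref{eq: RDP2}) but requiring genuine care (or an additional hypothesis) in general. The remaining ingredient I would draw on is Lemma~\ref{lem: divbasics}, whose clauses (2)--(3) license the passage from $b_i\mid_l a_i$ to $b_i\mid_l a_i\cdot a_{i+1}$ and guarantee compatibility of $\mid_l$ with multiplication when reassembling the factors.
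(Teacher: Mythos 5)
Your inductive skeleton is the right one, and it is in fact sounder than the paper's: to prove the case $n=j+1$ one must start from an \emph{arbitrary} $b$ with $b \mid_{\mathfrak{d}} [a_1\cdots a_{j+1}]_{\mathfrak{d}}$, peel off one factor via the two-factor case, and apply the induction hypothesis to the remainder, exactly as you propose. The paper instead assumes $b \mid_{\mathfrak{d}} [a_1\cdots a_j]_{\mathfrak{d}}$ and then observes (via Lemma~\ref{lem: divbasics}(2)) that such a $b$ also divides the longer product; this only decomposes those divisors of $[a_1\cdots a_{j+1}]_{\mathfrak{d}}$ that already happen to divide $[a_1\cdots a_j]_{\mathfrak{d}}$, and since the base case is $n=1$ while the step is stated for $j>1$, the two-factor case is never established there at all. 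So both you and the paper leave the same statement unproved---the two-factor case---but you are the one who flags it explicitly as the crux.

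Your suspicion about that crux is justified, and this is the substantive point: the two-factor RDP is \emph{false} for arbitrary monoids, so no argument could have closed the gap without an additional hypothesis. Take the additive submonoid $M=\{0,2,3,4,\dots\}$ of $\mathbb{N}$, a commutative, cancellative, conical monoid. Here $2$ divides $3+3=6$, since $6=2+4$ with $4\in M$, but the only divisors of $3$ in $M$ are $0$ and $3$, whose pairwise sums are $0$, $3$, and $6$; hence $2$ admits no decomposition $b_1+b_2$ with $b_1\mid 3$ and $b_2\mid 3$. (The multiplicative monoid of $\mathbb{Z}[\sqrt{-5}]\setminus\{0\}$, with $1+\sqrt{-5}$ dividing $2\cdot 3$, is another counterexample.) The lemma therefore has to be restricted to monoids in which the two-factor splitting is available: free monoids, where it follows from Levi's lemma exactly as in your prefix-cutting sketch and as in (RDP$_{*}$), or more generally equidivisible monoids, where $a_1a_2=bc$ forces either $b\mid_l a_1$ or $b=a_1f$ with $f\mid_l a_2$. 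Since the paper only ever invokes the corollary for $\langle \mathbf{F}_{\Sigma},\sqsubseteq\rangle$, such a restriction is harmless for the rest of the development; but as stated the lemma does not hold, and your refusal to claim the two-factor case in full generality is the correct call rather than a defect of your proposal.
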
 
\begin{corollary}
$\langle \mathbf{F}_{\Sigma}, \sqsubseteq \rangle$ has the RDP.
\end{corollary}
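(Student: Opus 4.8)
The plan is to obtain the corollary as a direct instantiation of the preceding lemma. First I would recall that the free monoid $\mathbf{F}_{\Sigma}=\langle \Sigma^*, \cdot, \varepsilon\rangle$ is, in particular, a monoid, so the lemma applies to it verbatim. The only point requiring care is the identification of the relation: as observed just before Proposition~\ref{prop: divstrings}, left divisibility $\mid_l$ on $\Sigma^*$ coincides with the prefix relation $\sqsubseteq$. Indeed, unfolding the definitions, $u \mid_l w$ means $w = u \cdot_l c = u \cdot c$ for some $c \in \Sigma^*$, which is exactly the defining condition for $u \sqsubseteq w$. Consequently the structures $\langle \mathbf{F}_{\Sigma}, \mid_l\rangle$ and $\langle \mathbf{F}_{\Sigma}, \sqsubseteq\rangle$ are literally the same preordered set.

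Next I would apply the preceding lemma with $\mathbf{M} := \mathbf{F}_{\Sigma}$ and $\mathfrak{d} := l$. This yields at once that $\langle \mathbf{F}_{\Sigma}, \sqsubseteq\rangle$ satisfies (\ref{eq: RDP}) for every $n \in \mathbb{Z}_+$, which is precisely the assertion that the structure enjoys the RDP. As a consistency check motivating the whole discussion, I would note that specialising (\ref{eq: RDP}) to $n = 2$ recovers exactly the string-theoretic decomposition (\ref{eq: RDP2}): $w \sqsubseteq uv$ forces $w = u'v'$ with $u' \sqsubseteq u$ and $v' \sqsubseteq v$.

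I do not anticipate any genuine obstacle here, as the corollary is a pure specialisation of an already-proven general result; the entire content of the argument is the bookkeeping observation that $\mid_l$—and not $\mid_r$—is the divisibility relation corresponding to prefixes, a fact that has already been established in \S\ref{sect: div}. The one thing I would be explicit about is this choice of side, so that the reader is not left to wonder whether the suffix order is intended.
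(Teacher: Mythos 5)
Your proposal is correct and matches the paper's (implicit) argument exactly: the corollary is obtained by instantiating the preceding lemma with $\mathbf{M}=\mathbf{F}_{\Sigma}$ and $\mathfrak{d}=l$, using the fact recorded in \S\ref{sect: div} that left divisibility on a free monoid is the prefix relation. The explicit remark about choosing $\mid_l$ rather than $\mid_r$ is a reasonable bit of bookkeeping the paper leaves tacit.
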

From the above considerations, we obtain three fundamental laws linking $\cdot$ and $\pos$.
\begin{proposition}
For all $P,Q \subseteq \Sigma^*$, the following conditions hold:
\begin{align}
\tag{K}   \pos(P\cdot Q)&\subseteq\pos\!P \cdot \pos\!Q \label{eq: K}\\ 
\tag{C}  \pos(P\cdot Q)&=\pos\!P \cup (P \cdot\pos\!Q) \label{eq: C} \\
\tag{I}  P \cap Q&\subseteq \pos(P\cdot Q)\label{eq: I}
\end{align}
\begin{proof}
It is straightforward to check that (\ref{eq: K}) is equivalent to (\ref{eq: RDP2}). We further observe that $\langle \mathbf{F}_{\Sigma}, \sqsubseteq \rangle$ may be viewed as an expanded ternary frame in which string concatenation and the prefix order serve, respectively, as accessibility relations for language concatenation and prefix closure on $\wp(\Sigma^*)$\footnote{Abstracting away from the specific structure under consideration, this is nothing but the frame semantics employed for modal Lambek calcului within the framework of categorial type logics \cite{Moortgat96,Kurtonina95,ArBerMo03,ArecesBernardi04}.}. In such a setting, (\ref{eq: RDP2}) is precisely the first-order frame condition corresponding to (\ref{eq: K}).

For the left-to-right inclusion of (\ref{eq: C}), suppose that $w \in \pos(P \cdot Q)$. Then $w \sqsubseteq uv$ for some $u \in P$ and $v \in Q$. Let us reconsider Table~\ref{tab: diamprod}. In Case~1, we have $w = u'$, for $u'$ a proper prefix of $u$, and $v' = \varepsilon$. Case~2 is analogous, except that $w = u = u'$. In either situation, we observe that $w \in \pos\! P$.
In Case~3, we have $w = u v'$ for $v'$ a proper, non-empty prefix of $v$. Since $u \in P$ and $v' \in \pos Q$, it follows that $w \in P \cdot \pos\! Q$.
Finally, in Case~4, where $u' = u$ and $v' = v$, since $u \in P$ and $v \in Q \subseteq \pos Q$, we again obtain $w \in P \cdot \pos Q$.
Conversely, suppose that $w \in \pos\!P \cup (P \cdot \pos\! Q)$. Therefore, $w$ is either a (possibly improper) prefix of some string in $P$, or it is obtained by concatenating a string in $P$ with a (possibly improper) prefix of some string in $Q$.
Observe that these conditions are precisely those expressed in Table~\ref{tab: diamprod}; hence $w \in \pos(P \cdot Q)$.

Finally, suppose that $w \in P \cap Q$. Then $w^2 \in P \cdot Q$, and since $w \sqsubseteq w^2$, it follows that $w \in \pos(P \cdot Q)$. This establishes (\ref{eq: I}).
\end{proof}
\end{proposition}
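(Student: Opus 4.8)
The plan is to treat the three claims separately, taking the decomposition property (\ref{eq: RDP2}) as the central structural tool, together with the elementary facts that $\sqsubseteq$ is the left-divisibility order on $\mathbf{F}_{\Sigma}$ and that free monoids are cancellative.

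For (\ref{eq: K}) I would argue directly rather than through frame semantics. Take $w\in\pos(P\cdot Q)$, so that $w\sqsubseteq uv$ for some $u\in P$ and $v\in Q$. Applying (\ref{eq: RDP2}) yields a factorisation $w=u'v'$ with $u'\sqsubseteq u$ and $v'\sqsubseteq v$; since $u\in P$ and $u'\sqsubseteq u$ we have $u'\in\pos\!P$ by the definition of $\pos$, and likewise $v'\in\pos\!Q$, whence $w=u'v'\in\pos\!P\cdot\pos\!Q$. For (\ref{eq: I}) the argument is even shorter: if $w\in P\cap Q$ then $w^2=w\cdot w\in P\cdot Q$, and as $w$ left-divides $w^2$ we have $w\sqsubseteq w^2$, so $w\in\pos(P\cdot Q)$.

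The substance of the proposition is (\ref{eq: C}), which I would split into its two inclusions. For the forward inclusion, take $w\sqsubseteq uv$ with $u\in P$ and $v\in Q$; since any two prefixes of a common word are $\sqsubseteq$-comparable, either $w\sqsubseteq u$, giving $w\in\pos\!P$, or $u\sqsubseteq w$, in which case cancellativity lets me write $w=u\cdot t$ with $t\sqsubseteq v$, so that $w\in P\cdot\pos\!Q$. These two alternatives correspond exactly to Cases~1--2 and Cases~3--4 of Table~\ref{tab: diamprod}. For the reverse inclusion, if $w\in\pos\!P$ then $w\sqsubseteq u$ for some $u\in P$, and taking any $v\in Q$ yields $w\sqsubseteq u\sqsubseteq uv\in P\cdot Q$; if instead $w=uv'\in P\cdot\pos\!Q$ with $v'\sqsubseteq v\in Q$, then compatibility of $\sqsubseteq$ with multiplication gives $w=uv'\sqsubseteq uv\in P\cdot Q$. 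In either case $w\in\pos(P\cdot Q)$.

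The step I expect to demand the most care is precisely the reverse inclusion of (\ref{eq: C}): its first subcase needs a witness $v\in Q$ in order to build an element of $P\cdot Q$ lying above $w$, and so it tacitly assumes $Q\neq\emptyset$. Indeed, for $Q=\emptyset$ and $P\neq\emptyset$ the left-hand side of (\ref{eq: C}) collapses to $\emptyset$ while the right-hand side equals $\pos\!P$, so the identity fails; I would therefore either stipulate $Q\neq\emptyset$ or dispatch the empty case separately (both sides reducing to $\emptyset$ only when $P=\emptyset$ as well). Apart from this boundary phenomenon, the entire proposition reduces to (\ref{eq: RDP2}) and routine manipulation of prefixes.
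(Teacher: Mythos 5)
Your proposal is correct and follows essentially the same route as the paper: (\ref{eq: K}) via the decomposition property (\ref{eq: RDP2}), (\ref{eq: C}) via a case split on the position of $w$ relative to $u$ and $uv$ (your two alternatives are exactly Cases 1--2 versus Cases 3--4 of Table~\ref{tab: diamprod}), and (\ref{eq: I}) via $w \sqsubseteq w^2$.

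Your boundary observation, however, is a genuine catch rather than a pedantic aside. For $Q=\emptyset$ and $P\neq\emptyset$ we have $\pos(P\cdot Q)=\pos\emptyset=\emptyset$ while $\pos\!P\cup(P\cdot\pos\!Q)=\pos\!P\supseteq P\neq\emptyset$, so the right-to-left inclusion of (\ref{eq: C}) fails and the equality as stated is false. The paper's own proof of that inclusion passes over this: it asserts that membership in $\pos\!P$ matches one of the cases of Table~\ref{tab: diamprod}, but every case in that table presupposes a witness $uv\in P\cdot Q$ with $v\in Q$, which is exactly the witness your argument makes explicit and which does not exist when $Q=\emptyset$. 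The defect does not propagate—Lemma~\ref{lem: VnGEN} uses only the left-to-right inclusion $\pos(x\cdot y)\leq\pos\!x\vee(x\cdot\pos\!y)$, which holds unconditionally—but the proposition should either assume $Q\neq\emptyset$ or weaken (\ref{eq: C}) to an inclusion, as you suggest.
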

\begin{remark}
One easily verifies that, for all $w, u, v \in \Sigma^*$, 
the first-order frame conditions corresponding to (\ref{eq: C}) and (\ref{eq: I}) are, respectively,
\[w \sqsubseteq uv \Leftrightarrow (w \sqsubseteq u \curlyvee \exists v' \, (v' \sqsubseteq v \curlywedge w \sqsubseteq uv')) \quad\text{and}\quad w \sqsubseteq w^2.\]
Observe that~(\ref{eq: K}) and~(\ref{eq: I}) hold for every operator~$\pos_{\mid_{\mathfrak{d}}}$ defined from a divisibility preorder.
\end{remark}
\subsection{On the operator \texorpdfstring{$\necv$}{◫}}
We now briefly discuss some properties of the residual~$\necv$ of $\pos$, showing how it naturally arises in the trace-based specification of concurrent systems and why it provides a useful addition to our algebraic model for~f-properties. 

We first review some basic algebraic laws involving~$\necv$ (complete proofs will be given in the next subsection, together with proofs of further equations). We have seen that, in the $\{\cap, \cup, \pos, \emptyset, \Sigma^*\}$-reduct of~$\wp(\mathbf{F}_{\Sigma})_+$, prefix closure is essentially an $\mathsf{S4}$ diamond. When examining the interaction between~$\necv$ and~$\pos$, however, the residuation law leads us to a setting that is instead reminiscent of~$\mathsf{S5}$, as well as of tense logics with both future and past modalities. In particular, by ~Prop.~\ref{prop: eqres}(2), we have
\[
\begin{array}{ccc}
	P \subseteq \necv\!\pos\!P & \text{and} &
	\pos\!\necv\!P \subseteq P 
\end{array}
\]
for all $P\subseteq \Sigma^*$. Since the prefix relation is a partial order, by residuation, $\necv$ is a topological interior operator on~$\langle \wp(\Sigma^*), \subseteq\rangle$, that is, it satisfies the \emph{duals} of the four Kuratowski axioms, namely
\begin{itemize}
	\item[(B$_1$)] $\necv\!P \subseteq  P$;
	\item[(B$_2$)] $\necv\!\necv\!P = \necv\!P$;
	\item[(B$_3$)] $\necv(P \cap Q) = \necv\!P \cap \necv\!Q$;
	\item[(B$_4$)] $\necv\!\Sigma^* = \Sigma^*$.
\end{itemize}
for all~$P,Q \subseteq \Sigma^*$. Analogously to the case of~$\pos$, in the $\{\cap,\cup,\necv\}$-reduct of~$\wp(\mathbf{F}_{\Sigma})_+$, $\necv$ is an $\mathsf{S4}$ box. Indeed, from~(B$_1$)--(B$_4$), one derives, for example:
\[
\begin{array}{ccc}
	\necv(P \cup Q)\subseteq \necv\!P\cup\necv\!Q  & \text{and} & 
	\necv(P \cup Q)= \necv(\necv\!P\cup\necv\!Q)
\end{array}
\]
Moreover, by~(B$_1$), the empty set is a fixed point of~$\necv$. By~Prop.~\ref{prop: ncres}(2), we also know that the compositions $\necv\!\pos$ and $\pos\!\necv$ are, respectively, a closure and an interior operator. These modalities in fact coincide with~$\pos$ and~$\necv$ themselves: applying~(\ref{eq: R1}) to $\pos\!\pos\! P = \pos\! P$ and $\necv\!\necv\!P = \necv\! P$ yields $\pos\! P = \necv\!\pos\! P$ and $\necv\! P = \pos\!\necv\! P$, respectively. As for the interaction with language concatenation, one can show, for example, a condition dual to~(\ref{eq: K}): since $\pos(\necv\!P \cdot \necv\!Q) \subseteq \pos\!\necv\!P \cdot \pos\!\necv\!Q$, $\pos\!\necv P \subseteq P$, and $\pos\!\necv\!Q \subseteq Q$, we obtain $\pos(\necv\!P \cdot \necv\!Q) \subseteq P \cdot Q$; hence, by~(\ref{eq: R1}), $\necv\!P \cdot \necv\!Q \subseteq \necv(P \cdot Q)$.

Most importantly, residuation, together with the closure and interior properties of our operators, implies that the fixed points of~$\pos$ coincide with those of~$\necv$. This not only entails that~$\necv$ preserves~$\{\varepsilon\}$, but also allows for an alternative characterisation of safety~f-properties. This is what we are about to discuss.

Manna and Pnueli \cite{MannaPnueli90} showed that, in the context of infinite-trace systems, safety properties can be specified by appropriately combining past and future operators in temporal logic. In what follows, we denote by $\mathsf{LTLB}$~\cite{Emerson1990} the standard extension of classical $\mathsf{LTL}$ (with \emph{next} and \emph{until}) with the past connectives \emph{previous} and \emph{since}. Two $\mathsf{LTLB}$-formulas $\varphi,\psi$ are \emph{initially equivalent} if $\alpha, 0 \Vdash \varphi$ iff $\alpha, 0 \Vdash \psi$, for all $\alpha\in \Sigma_{\mathcal{T}}^{\omega}$. It is well known that $\mathsf{LTL}$ and $\mathsf{LTLB}$ are equally expressive with respect to initial equivalence~\cite{gpss}.

For an LTS~$\mathcal{T}$, let $\mathtt{A}\colon\wp(\Sigma_{\mathcal{T}}^*) \rightarrow \wp(\Sigma_{\mathcal{T}}^{\omega})$ be defined by $Q \mapsto \{ \alpha \in \Sigma_{\mathcal{T}}^{\omega} \mid \mathit{pref}(\alpha) \subseteq Q \}$. This operator is used in \cite{MannaPnueli90} to define safety for infinite traces: a set $P \subseteq \Sigma_{\mathcal{T}}^{\omega}$ is a safety property if and only if $P = \mathtt{A}(Q)$ for some $Q \subseteq \Sigma_{\mathcal{T}}^*$. Let now $\varphi$ be an $\mathsf{LTLB}$-formula. We define $\mathrm{Sat}(\varphi) := \{ \alpha\in \Sigma_{\mathcal{T}}^{\omega} \mid \alpha \Vdash \varphi \}$, that is, the property corresponding to~$\varphi$.

We say that $\varphi$ is a \emph{past formula} if it is either an atomic proposition or if it contains only past operators. A finite trace $w \in \Sigma_{\mathcal{T}}^*$ \emph{end-satisfies} a past formula $\varphi$ (written $w \Vdash_{\mathrm{e}} \varphi$) if there exists $\alpha \in \Sigma_{\mathcal{T}}^{\omega}$ such that $w \sqsubseteq \alpha$ and $\alpha, |w| - 1 \Vdash \varphi$. In other words, $w \Vdash_{\mathrm{e}} \varphi$ if there exists an infinite extension~$\alpha$ of~$w$ in which the satisfaction of~$\varphi$ depends on its truth at the last position of~$w$, and therefore, since $\varphi$ is a “backward looking” formula, on the whole~$w$. We write $\mathrm{eSat}(\varphi) := \{ w \in \Sigma_{\mathcal{T}}^* \mid w \Vdash_{\mathrm{e}} \varphi \}$ for the f-property given by all finite traces end-satisfying $\varphi$. An f-property $Q \subseteq \Sigma_{\mathcal{T}}^*$ is \emph{expressible} in $\mathsf{LTLB}$ if $Q=\mathrm{eSat}(\varphi)$, for some past formula $\varphi$. Observe that safety properties~$\mathtt{A}(\mathrm{eSat}(\varphi))$ coincide with the properties~$\mathrm{Sat}(G\varphi)$, where~$G$ is the \emph{globally} operator \cite[p. 392]{MannaPnueli90}. Indeed, we have:
\begin{align*}
	\alpha \in \mathrm{Sat}(G\varphi) &\Leftrightarrow \alpha \Vdash G\varphi \\
	&\Leftrightarrow \text{$w \Vdash_{\mathrm{e}} \varphi$ for all $w \in \mathit{pref}(\alpha)$}\\
	&\Leftrightarrow \mathit{pref}(\alpha) \subseteq \mathrm{eSat}(\varphi)\\
	&\Leftrightarrow \alpha \in \mathtt{A}(\mathrm{eSat}(\varphi)).
\end{align*}
Formulas of the form~$G\varphi$ are called \emph{safety formulas}. It is therefore clear that every safety formula specifies a safety property expressible in~$\mathsf{LTL}$ by means of a formula~$\psi$ that is initially equivalent to~$G\varphi$.
\begin{rexample}
	Let us consider the safety property~$P_1$ introduced in the previous section. We have seen that, for infinite traces, it can be specified by the $\mathsf{LTL}$ formula $\neg (\neg\mathit{request}\mathrel{U} (\mathit{response}\wedge \neg \mathit{request}))$, which is initially equivalent to the $\mathsf{LTLB}$ formula $G(\mathit{request} \rightarrow P(\mathit{response}))$, where~$P$ is the \emph{once} operator (see \cite{Schnoebelen2003}).
\end{rexample}
We show how the above outlined framework readily adapts to the case of finite traces. Let $\mathsf{PLTL}_f$ be the \emph{pure-past} counterpart of ~$\mathsf{LTL}_f$. It is well-known that these two logics are expressively equivalent~\cite{DDFR2021}. For a $\mathsf{PLTL}_f$-formula $\varphi$, we denote by $\varphi_{\triangleright}$ an~$\mathsf{LTL}_f$-formula that is equivalent to~$\varphi$. Then, in~$\mathsf{LTL}_f$, we have:
\begin{align*}
	w \in \mathrm{Sat}(G\varphi_{\triangleright})&\Leftrightarrow \text{$v \Vdash \varphi$ for all $v \in \mathit{pref}(w)$} \\
	&\Leftrightarrow \mathit{pref}(w) \subseteq \mathrm{Sat}(\varphi_{\triangleright})\\
	&\Leftrightarrow w \in \necv\!\mathrm{Sat}(\varphi_{\triangleright}).
\end{align*}
Clearly, $G\varphi_{\triangleright}$ specifies a safety property. Indeed, $\mathrm{Sat}(G\varphi_{\triangleright}) = \necv\!\mathrm{Sat}(\varphi_{\triangleright}) = \pos\!\necv\!\mathrm{Sat}(\varphi_{\triangleright})$. This allows us to provide an alternative, residuation-theoretic proof of Prop. \ref{prop: sbox}:
\begin{align*}
    \text{$P\subseteq \Sigma_{\mathcal{T}}^*$ is a safety property} & \Leftrightarrow P = \pos \! P\\
    &\Leftrightarrow P = \necv\! P\\
    & \Leftrightarrow P = \{ w \in \Sigma_{\mathcal{T}}^* \mid \forall v (v \sqsubseteq w\Rightarrow v \in P)\}\\
    & \Leftrightarrow P = \{ w \in \Sigma_{\mathcal{T}}^* \mid \mathit{pref}(w)\subseteq P\}
\end{align*}
We conclude that the residuation relation between $\pos$ and $\necv$ underlies the finite-trace variant of the Manna-Pnueli method for specifying safety properties by means of past formulas. To the best of our knowledge, this connection between residuation and the formal specification of properties has not been explicitly recognised in the model-checking and temporal-logic communities, and it deserves further investigation.
\subsection{\texorpdfstring{Closure $\ell$-monoids}{Closure l-monoids}}
In this subsection, we introduce and study the variety of closure $\ell$-monoids, conceived as an algebraic abstraction—amenable to a characterisation in structural proof-theoretic terms—of f-properties specified by means of $\pos$ and $\necv$.
\begin{definition}\label{def: clm}
A \emph{closure $\ell$-monoid} is an algebra $\mathbf{A}=\langle A, \cdot, \wedge,\vee, \pos, \necv , 1, \bot,\top\rangle$ such that $\langle A, \wedge,\vee,\bot,\top\rangle$ is a bounded distributive lattice, $\langle A, \cdot,\wedge,\vee\rangle$ is an $\ell$-monoid, $\bot$ is an absorbing element for monoid multiplication, and $\pos,\necv$ are unary operations defined by the following identities:
\begin{enumerate}
    \item $x \leq \pos x$
    \item $\pos\!\pos x \approx \pos x$
    \item $\pos (x \vee y) \approx \pos x \vee \pos y$
    \item $\necv (x \wedge y) \approx \necv x \wedge \necv y$
    \item $\pos(x \cdot y)\leq \pos\!x \cdot \pos\!y$
    \item $\pos\!\necv x \leq x$
    \item $x \leq \necv\! \pos x$
\end{enumerate}
\noindent We denote by $\mathfrak{LMC}$ the variety of closure $\ell$-monoids.
\end{definition}
\begin{example}
The algebra $\wp(\mathbf{F}_{\Sigma})_+$ is a closure $\ell$-monoid. In general, let $\langle \mathbf{M}, \preceq\rangle$ be a monoid endowed with a preorder satisfying the RDP. Then 
\[\wp(\mathbf{M})_+=\langle \wp(M), \cdot, \cap, \cup, \pos_{\preceq}, \necv_{\preceq}, \{1\}, \emptyset, M \rangle \]
is a closure $\ell$-monoid. It follows that, for $\mathfrak{d}\in\{l,r\}$, any algebra $\wp(\mathbf{M})_+$ arising from $\langle \mathbf{M}, \mid_{\mathfrak{d}}\rangle$ is in $\mathfrak{LMC}$. 
\end{example}
Recall that an interior operator $\iota$ on an $\ell$-monoid $\mathbf{M}$ is a \emph{co-nucleus} \cite[§3.14.16]{yellowbook} if the following conditions are satisfied, for all $a,b \in M$:
\begin{align*}
&\tag{CN$_1$} \iota(\iota(a)\cdot\iota(b))=\iota(a)\cdot \iota(b) \label{eq: CN1}\\
&\tag{CN$_2$} \iota(a)\cdot\iota(1)=\iota(1)\cdot\iota(a)=\iota(a)\label{eq: CN2}
\end{align*}
If $\iota$ satisfies just (\ref{eq: CN1}), then it is called a \emph{weak co-nucleus} \cite{BGM2022}. Note that, in $\wp(\mathbf{F}_{\Sigma})_+$, $\necv$ satisfies both (\ref{eq: CN1}) and (\ref{eq: CN2}).
\begin{lemma}\label{lem: dereq1}
In any closure $\ell$-monoid, $\langle \pos, \necv\rangle$ is a residuated pair where $\pos$ is a topological closure operator and $\necv$ is a weak co-nucleus\footnote{We thank Simon Santschi for pointing out the weak co-nuclearity of $\necv$.}. Moreover, the two operators preserve the bounds.
\end{lemma}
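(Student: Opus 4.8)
The plan is to derive everything from the seven defining identities of Definition \ref{def: clm} together with the general facts about residuated pairs in Propositions \ref{prop: eqres} and \ref{prop: ncres}. First I would establish that $\langle \pos, \necv \rangle$ is a residuated pair by checking the two clauses of Proposition \ref{prop: eqres}. Order-preservation of $\pos$ follows from identity (3): if $x \leq y$, then $\pos y = \pos(x \vee y) = \pos x \vee \pos y$, so $\pos x \leq \pos y$; dually, order-preservation of $\necv$ follows from identity (4). The unit and counit inequalities $\mathrm{id} \leq \necv \circ \pos$ and $\pos \circ \necv \leq \mathrm{id}$ are exactly identities (7) and (6). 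Hence Proposition \ref{prop: eqres} yields the residuation law (\ref{eq: R1}), and the fact that $\pos$ is a topological closure operator is then immediate: (1), (2), (3) are three of the Kuratowski axioms, and $\pos\bot = \bot$ follows from Proposition \ref{prop: ncres}(4) applied with $\bot = \min$.

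The core of the argument is a pair of \emph{collapse} identities showing that $\pos$ and $\necv$ share the same fixed points. From (1) and (6) I first obtain $\necv x \leq \pos\necv x \leq x$, that is $\necv x \leq x$, so $\necv$ is contractive. Instantiating (7) at $\pos x$ and using (2) gives $\pos x \leq \necv\pos\pos x = \necv\pos x$, while contractivity gives $\necv\pos x \leq \pos x$; hence $\necv\pos x = \pos x$. Combining this with $\necv\pos\necv = \necv$ (Proposition \ref{prop: ncres}(3)) yields $\pos\necv x = \necv\pos\necv x = \necv x$. These two identities, together with contractivity, make $\necv$ an interior operator: idempotence follows since $\necv\necv x = \necv\pos\necv x = \necv x$, and monotonicity has already been noted.

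With $\pos\necv x = \necv x$ in hand, weak co-nuclearity is short. To establish (\ref{eq: CN1}), I would show that $\necv a \cdot \necv b$ is a fixed point of $\pos$: identity (5) gives $\pos(\necv a \cdot \necv b) \leq \pos\necv a \cdot \pos\necv b = \necv a \cdot \necv b$, while (1) gives the reverse inequality, so $\pos(\necv a \cdot \necv b) = \necv a \cdot \necv b$. Since the collapse identities show that a fixed point of $\pos$ is automatically a fixed point of $\necv$ (if $\pos y = y$, then $\necv y = \necv\pos y = \pos y = y$), it follows that $\necv(\necv a \cdot \necv b) = \necv a \cdot \necv b$, which is (\ref{eq: CN1}). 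Finally, bound preservation is collected from the same tools: $\pos\bot = \bot$ and $\necv\top = \top$ come from Proposition \ref{prop: ncres}(4), $\pos\top = \top$ from (1) with $\top = \max$, and $\necv\bot = \bot$ from $\necv\bot \leq \pos\necv\bot \leq \bot$ via (1) and (6).

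The step I expect to carry all the content is the derivation of the collapse identity $\pos\necv x = \necv x$, since it is what forces the fixed points of the two modalities to coincide (the $\mathsf{S5}$-like behaviour flagged in the preceding discussion) and it is the lever that turns the merely inequational hypothesis (5) into the equality required for (\ref{eq: CN1}). Everything downstream, including weak co-nuclearity and the interior-operator properties of $\necv$, is a direct consequence once this identity is available.
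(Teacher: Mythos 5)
Your proposal is correct and follows essentially the same route as the paper: residuation via Proposition \ref{prop: eqres}, the Kuratowski axioms plus Proposition \ref{prop: ncres}(4) for the closure-operator and bound claims, contractivity of $\necv$ from (1) and (6), the identity $\pos\necv x = \necv x$ as the key lever, and then (5) plus residuation for weak co-nuclearity. The only (immaterial) difference is bookkeeping: you reach $\pos\necv x=\necv x$ by first proving $\necv\pos x=\pos x$ and invoking $\necv\pos\necv=\necv$ from Proposition \ref{prop: ncres}(3), whereas the paper first establishes idempotence of $\necv$ directly via (\ref{eq: R1}) and derives $\pos\necv x=\necv x$ from $\necv x\leq\necv\necv x$.
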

\begin{proof}
\normalfont Let $\mathbf{A}\in \mathfrak{LMC}$. By axioms (3) and (4), both $\pos$ and $\necv$ are order-preserving. Together with (6) and (7), this implies that $\langle \pos, \necv\rangle$ is a residuated pair (Prop.~\ref{prop: eqres}). Note that $\pos\!\bot = \bot$ by Prop.~\ref{prop: ncres}(4). It then follows from axioms~(1),~(2), and~(3) that $\pos$ is a topological closure operator. Moreover, since $\top \leq \pos \!\top$ by~(1), $\pos$ is bound-preserving.

We now prove that $\necv$ is a weak co-nucleus. Let $a,b\in A$. First, observe that from (1) and (6) we obtain $\necv\! a \leq \pos\! \necv\! a \leq a$, and therefore, by transitivity, $\necv\! a \leq a$. Hence $\necv$ is contractive, and in particular $\necv\! \necv\!a \leq \necv\! a$. Next, since (6) and (2) yield $\pos\!\pos\! \necv\! a \leq a$, repeated application of (\ref{eq: R1}) gives $\necv\! a \leq \necv\!\necv\! a$. Thus $\necv$ is idempotent. Since it is also meet-preserving, we conclude that $\necv$ is an interior operator. It is in fact a topological, bound-preserving one, as $\necv\!\top=\top$ by Prop.~\ref{prop: ncres}(4) and $\necv\!\bot\leq \bot$ by contractiveness. It remains to verify that $\necv\! a \cdot \necv\! b = \necv(\necv\! a \cdot \necv \!b)$. By (5), $\pos(\necv\! a \cdot \necv\! b) \leq \pos\! \necv\! a \cdot \pos\! \necv\! b$. Note that $\pos\! \necv\! a = \necv\! a$: the left-to-right inequality follows from (1), while the converse is obtained from $\necv\! a \leq \necv\! \necv\! a$ using (\ref{eq: R1}). Hence $\pos(\necv\! a \cdot \necv\! b) \leq \pos\! \necv\! a \cdot \pos\! \necv\! b$ can be rewritten as $\pos(\necv\! a \cdot \necv\! b) \leq \necv\! a \cdot \necv\! b$. Applying (\ref{eq: R1}), we conclude that $\necv\! a \cdot \necv\! b \leq \necv(\necv\! a \cdot \necv\! b)$. Since the converse inequality holds by contractiveness, we obtain $\necv(\necv\! a \cdot \necv\! b) = \necv\! a \cdot \necv\! b$.
\end{proof}
In the proof of the preceding lemma we derived some important laws from the axioms in Definition \ref{def: clm}. We report them below.
\begin{multicols}{2}
\begin{enumerate}\setlength{\itemindent}{0.5cm}
\item[(i)] $\necv\!x \leq x$;
    \item[(ii)] $\necv\!\necv\! x \approx \necv\! x$;
    \item[(iii)] $\necv(\necv\! x \cdot \necv\! y) \approx \necv\! x \cdot \necv\! y$;
    \item[(iv)] $\pos\! \necv\! x \approx \necv\! x$
    \item[(v)] $\pos\!\bot \approx \bot$;
    \item[(vi)] $\pos\!\top \approx \top$;
    \item[(vii)] $\necv\!\bot \approx \bot$;
    \item[(viii)] $\necv\!\top \approx \top$;
\end{enumerate}
\end{multicols}
In the next result we extend the above list with further noteworthy equations.
\begin{lemma}\label{lem: dereq2}
The following identities hold in any closure $\ell$-monoid:
\begin{multicols}{2}
\begin{enumerate}\setlength{\itemindent}{0.5cm}
\item[\textup{(ix)}] $\pos(x \wedge y)\leq\pos\!x \wedge \pos\! y$;
\item[\textup{(x)}] $\necv\!x \vee \necv\! y \leq \necv(x \vee y)$;
\item[\textup{(xi)}] $\necv\!\pos\!x \approx \pos\! x$;
    \item[\textup{(xii)}] $\pos(\pos\! x \cdot \pos\! y) \approx \pos\! x \cdot \pos\!y$;
    \item[\textup{(xiii)}] $\pos(\pos\!x \wedge \pos\!y) \approx \pos\!x \wedge \pos \!y $;
    \item[\textup{(xiv)}] $\necv(\necv\!x\vee \necv\! y) \approx \necv\! x \vee \necv\! y$.
    \item[\textup{(xv)}] $\necv \!x \cdot \necv\!y \leq \necv(x\cdot y)$.
\end{enumerate}
\end{multicols}
Moreover, $x \leq \pos\! y$ is equivalent to $\pos\! x \leq \pos \!y$.
\end{lemma}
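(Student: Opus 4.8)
The plan is to derive each of (ix)--(xv) directly from the axioms of Definition~\ref{def: clm}, using the residuation structure and the auxiliary laws (i)--(viii) already established in Lemma~\ref{lem: dereq1}. The recurring tools will be: monotonicity of $\pos$ and $\necv$ (immediate from axioms (3) and (4)); the idempotency laws $\pos\pos x \approx \pos x$ (axiom (2)) and $\necv\necv x \approx \necv x$ (law (ii)); contractiveness $\necv x \leq x$ (law (i)) and extensivity $x \leq \pos x$ (axiom (1)); the fixpoint law $\pos\necv x \approx \necv x$ (law (iv)); and the residuation equivalence (\ref{eq: R1}). The final equivalence $x \leq \pos y \Leftrightarrow \pos x \leq \pos y$ is a standard Galois-type fact: the backward direction uses $x \leq \pos x$ and transitivity, while the forward direction applies the monotone $\pos$ and then axiom (2).

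First I would dispatch (ix) and (x) as pure monotonicity arguments, since $x \wedge y \leq x,y$ gives $\pos(x\wedge y)\leq \pos x \wedge \pos y$, and dually $x,y \leq x \vee y$ gives (x). Identity (xi) I would prove by noting that $\necv\pos x \leq \pos x$ is an instance of (i), while the converse $\pos x \leq \necv\pos x$ follows by applying (\ref{eq: R1}) to the inequality $\pos\pos x \leq \pos x$ supplied by axiom (2). Identities (xii)--(xiv) all fit a single two-step template: one inequality is furnished by extensivity (axiom (1), for (xii) and (xiii)) or contractiveness (law (i), for (xiv)); the reverse inequality combines monotonicity of the operator applied to the subterms with the relevant idempotency law. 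Concretely, for (xii) one additionally invokes axiom (5) to push $\pos$ through the product and then cancels the double $\pos$ by (2); for (xiii) one reuses (ix) and (2); and for (xiv) one uses (x) together with (ii).

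The main work, and the step I expect to be the only genuine obstacle, is (xv). Here I would first use the residuation law (\ref{eq: R1}) to reduce the target $\necv x \cdot \necv y \leq \necv(x\cdot y)$ to the equivalent inequality $\pos(\necv x \cdot \necv y) \leq x \cdot y$. This reduced inequality is then obtained by a short chain: axiom (5) gives $\pos(\necv x \cdot \necv y) \leq \pos\necv x \cdot \pos\necv y$, axiom (6) gives $\pos\necv x \leq x$ and $\pos\necv y \leq y$, and monotonicity of multiplication yields $\pos\necv x \cdot \pos\necv y \leq x \cdot y$; composing these and re-applying (\ref{eq: R1}) closes the argument.

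Overall I anticipate no serious difficulty, as every identity collapses to a combination of monotonicity, one idempotency/extensivity/contractivity axiom, and at most one use of residuation. The care required is essentially bookkeeping: setting up the reduction in (xv) so that axioms (5) and (6) chain correctly with the monotonicity of $\cdot$, and keeping track of which direction of each equivalence is being used when invoking (\ref{eq: R1}).
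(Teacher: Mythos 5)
Your proof is correct and, for (ix)--(xiv) and the final equivalence, it follows essentially the same route as the paper: monotonicity for (ix)--(x), residuation applied to $\pos\pos x \leq \pos x$ for (xi), and the ``extensivity/contractiveness in one direction, monotonicity plus idempotency in the other'' template for (xii)--(xiv). The one place you genuinely diverge is (xv): you reduce $\necv x \cdot \necv y \leq \necv(x\cdot y)$ via (\ref{eq: R1}) to $\pos(\necv x \cdot \necv y) \leq x\cdot y$ and then chain axiom (5), axiom (6), and compatibility of $\cdot$ with the order. The paper's Lemma~\ref{lem: dereq2} instead asserts the chain $\necv x \cdot \necv y \leq \pos(\necv x \cdot \necv y) \leq \pos\necv x \cdot \pos\necv y$ and appeals to (iv) and transitivity --- which, read literally, only returns $\necv x \cdot \necv y \leq \necv x \cdot \necv y$ and does not reach the target; your residuation argument (which is in fact the one the paper itself sketches for the concrete algebra $\wp(\mathbf{F}_{\Sigma})_+$ in \S4.2) is the complete and correct version of this step, so no change is needed on your side.
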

\begin{proof} Again, let $\mathbf{A}\in \mathfrak{LMC}$ and let $a,b\in A$.
\begin{itemize}\setlength{\itemindent}{0.3cm}
    \item[(ix)] Since $\pos$ is order-preserving, we have $\pos(x \wedge y) \leq \pos\!x$ and $\pos(x \wedge y) \leq \pos\!y$. The desired inequality then follows by lattice-theoretic properties.
    \item[(x)] Dual argument to (ix).
    \item[(xi)] By (i), we have $\necv\!\pos\!a \leq \pos\! a$. Moreover, by (2), $\pos\!\pos\!a \leq \pos\!a$, and hence by (\ref{eq: R1}) we conclude that $\pos\! a \leq \necv\!\pos\!a$.
    \item[(xii)] By (1), we have $\pos\!a \cdot \pos\!b \leq \pos(\pos\!a \cdot \pos\!b)$. As for the converse, (5) gives $\pos(\pos\!a \cdot \pos\!b) \leq \pos\!\pos\!a \cdot \pos\!\pos\!b$, and an application of (2) then yields the desired inequality.
    \item[(xiii)] Analogous argument to (xii), using (ix) in place of (5).
    \item[(xiv)] By (i), we have $\necv(\necv\!a \vee \necv\!b) \leq \necv\!a \vee \necv\!b$. 
Now, since $\necv$ is order-preserving, $\necv\!\necv\!a \leq \necv(\necv\!a \vee \necv\!b)$ and $\necv\!\necv\!b \leq \necv(\necv\!a \vee \necv\!b)$. 
It follows, by (ii) and lattice-theoretic properties, that the right-to-left inequality holds. 
    \item[(xv)] By (5) and (1), we have 
$\necv\!x \cdot \necv\!y \leq \pos(\necv\!x \cdot \necv\!y) \leq \pos\!\necv\!x \cdot \pos\!\necv\!y$. 
Hence, the desired inequality follows by (iv) and transitivity.
\end{itemize}
Finally, suppose that $a \leq \pos\!b$. 
Then $\pos\!a \leq \pos\!\pos\!b$, and hence, by (2), $\pos\!a \leq \pos\!b$. 
Conversely, if $\pos\!a \leq \pos\!b$, then, by (1) and transitivity, it follows that $a \leq \pos\!b$.
\end{proof}
\begin{lemma}\label{lem: VnGEN}
$\mathfrak{LMC}\neq \mathbb{V}(\wp(\mathbf{F}_{\Sigma})_+)$.
\begin{proof}
We aim to show that there exists an equation satisfied by~$\wp(\mathbf{F}_{\Sigma})_+$ but not by~$\mathfrak{LMC}$. Our candidate is $\pos(x \cdot y) \leq \pos\!x \,\vee\, (x \cdot \pos\!y)$, which abstracts the left-to-right direction of~(\ref{eq: C}). Let $\mathbf{End}_A = \langle \mathrm{End}_A, \circ, \mathrm{id}_A \rangle$ be the monoid of self-maps on a set~$A$. Consider the closure $\ell$-monoid $\wp(\mathbf{End}_A)_+ = \langle \wp(\mathrm{End}_A), \circ, \cap, \cup, \pos, \necv, \{\mathrm{id}_A\}, \emptyset, \mathrm{End}_A \rangle$, where $\pos = \pos_{\mid_l}$, $\necv = \necv_{\mid_l}$, and for all $F,G \subseteq \mathrm{End}_A$, $F \circ G := \{ f \circ g \mid f \in F,\ g \in G \}$.
	
We begin by observing that, for $f,g \in \mathrm{End}_A$, $f \mid_l g$ iff there exists $f'\in \mathrm{End}_A$ such that $g=f \circ f'$, i.e., the diagram in Figure \ref{fig: diag1} is commutative.
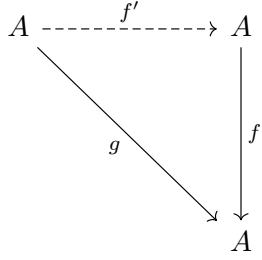
\begin{figure}[H]
	\centering
\begin{tikzcd}[row sep=6em, column sep=6em]
	A \arrow[r, "f'",dashed]\arrow[dr, "g",swap] & A\arrow[d, "f"]\\
	 & A
\end{tikzcd}
\caption{Left divisibility in $\mathbf{End}_A$.}
\label{fig: diag1}
\end{figure}
We now establish that $\wp(\mathbf{End}_A)_+\not\models \pos(x \cdot y)\leq \pos\!x \vee (x \cdot \pos\!y)$. If the opposite were true, then, in particular $\pos(\{f\circ g\})\subseteq \pos\{f\} \cup (\{f\} \circ \pos\{g\})$, for all $f,g\in \mathrm{End}_A$\footnote{Note that $\pos(\{f\}\circ\{g\})=\pos(\{f\circ g\})$.}. We show that, for a suitable choice of $A$ and of $f,g \in \mathrm{End}_A$, there exists $k \in \mathrm{End}_A$ such that $k \mid_l f \circ g$ and both of the following conditions hold:
\begin{enumerate}
	\item $k \nmid_l f$;
	\item for all $m \in \mathrm{End}_A$ such that $m \mid_l g$, it holds that $k \neq f \circ m$.
\end{enumerate}
Let $A = \mathbb{Z}$, and let $f, g \in \mathrm{End}_{\mathbb{Z}}$ be defined as follows, for $n \in \mathbb{Z}$:
\[
f(n):=
\begin{cases}
1 \hfill & \text{if $n \leq 0$,}\\
2 \hfill &\text{if $n>0$}
\end{cases}\qquad\qquad
g(n):=-1
\]
Then $f \circ g$ is a constant map, as $f(g(n))=1$ for all $n \in \mathbb{Z}$. We now seek a left divisor $k$ of $f \circ g$ (with respect to~$\circ$) such that there is no $h \in \mathrm{End}_{\mathbb{Z}}$ satisfying $k \circ h = f$. Define
\[
k(n):=
\begin{cases}
0 \hfill & \text{if $n \leq 0$,}\\
1 \hfill &\text{if $n>0$}
\end{cases}
\]
and let $k':n \mapsto 2$. Clearly, for all $n \in \mathbb{Z}$, $k(k'(n)) = 1 = f(g(n))$ and therefore $k\mid_l f \circ g$. The functions $f$ and $k$ have, respectively, ranges $\{1,2\}$ and $\{0,1\}$, hence $k \circ h \neq f$ for all $h \in \mathrm{End}_{\mathbb{Z}}$. This establishes~(1). The proof of (2) is immediate: as $f$ takes only the values $1$ and $2$, it is clear that $k \neq f \circ m$ for all $m \in \mathrm{End}_{\mathbb{Z}}$, and hence (2) holds.
\end{proof}
\end{lemma}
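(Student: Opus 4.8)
The plan is to separate the two classes by an equation. Since the Example after Definition~\ref{def: clm} already gives $\wp(\mathbf{F}_{\Sigma})_+ \in \mathfrak{LMC}$, and hence $\mathbb{V}(\wp(\mathbf{F}_{\Sigma})_+) \subseteq \mathfrak{LMC}$, it suffices to show that this inclusion is strict: I will produce an inequation that holds in $\wp(\mathbf{F}_{\Sigma})_+$—and therefore, since a variety and any of its generating algebras share the same equational theory, in all of $\mathbb{V}(\wp(\mathbf{F}_{\Sigma})_+)$—while failing in some closure $\ell$-monoid. The natural candidate is the left-to-right half of~(\ref{eq: C}), namely
\[
\pos(x \cdot y) \leq \pos\!x \vee (x \cdot \pos\!y),
\]
whose validity in $\wp(\mathbf{F}_{\Sigma})_+$ is immediate from~(\ref{eq: C}). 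The whole task then reduces to exhibiting a single member of $\mathfrak{LMC}$ in which this inequation breaks down.

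To find such a member I would exploit the Example following Definition~\ref{def: clm}: for \emph{any} monoid $\mathbf{M}$, the structure $\wp(\mathbf{M})_+$ built from left divisibility lies in $\mathfrak{LMC}$, because $\langle \mathbf{M}, \mid_l\rangle$ satisfies the RDP. The reason~(\ref{eq: C}) holds over a free monoid is that free monoids are cancellative and conical, so a left divisor of a concatenation $uv$ must be either a prefix of $u$ or $u$ followed by a prefix of $v$. To break the inequation I would therefore pass to a monoid that is neither cancellative nor conical—concretely, the monoid $\mathbf{End}_A$ of self-maps on a set $A$, in which left divisibility can genuinely \enquote{forget} structure. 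Unwinding the definitions of $\pos$ and $\circ$ on singletons, the failure of the inequation at the formulas $\{f\},\{g\}$ amounts to finding $k \in \mathrm{End}_A$ with $k \mid_l f \circ g$ but (1) $k \nmid_l f$, so that $k \notin \pos\{f\}$, and (2) $k \neq f \circ m$ for every $m \mid_l g$, so that $k \notin \{f\} \circ \pos\{g\}$.

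The concrete construction I would carry out uses a range/image argument, which cleanly blocks both disjuncts at once. Taking $A = \mathbb{Z}$, I would choose $f$ with image $\{1,2\}$ and $g$ constant so that $f \circ g$ is itself a constant map, and then pick $k$ with image $\{0,1\}$ that still divides this constant map on the left (composing $k$ with a suitable constant $k'$ recovers $f \circ g$). Condition~(1) then holds because $k \circ h$ always has image contained in $\{0,1\}$, which can never equal the image $\{1,2\}$ of $f$; condition~(2) holds because $f \circ m$ always has image contained in $\{1,2\}$, whereas $0$ lies in the image of $k$. Both failure conditions are witnessed simultaneously, yielding $k \in \pos(\{f\}\circ\{g\}) \setminus \bigl(\pos\{f\} \cup (\{f\}\circ\pos\{g\})\bigr)$, as required.

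The routine parts are the validity of the inequation in $\wp(\mathbf{F}_{\Sigma})_+$ and the transfer of the equational theory from a generator to the variety it generates. The genuine content—and the step I expect to be the main obstacle—is the choice of the witnessing self-maps: one must simultaneously arrange that $k$ divides $f \circ g$ (a positive, existential requirement) while $k$ escapes both components of the right-hand side (two negative, universally quantified requirements). Reducing all three conditions to incompatibilities between the images of $f$, $k$, and their composites is what makes the argument go through, and recognising that the separation is driven precisely by the failure of cancellativity and conicality in $\mathbf{End}_{\mathbb{Z}}$ is the key conceptual insight.
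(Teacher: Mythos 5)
Your proposal is correct and follows essentially the same route as the paper's own proof: the same separating inequation $\pos(x \cdot y) \leq \pos\!x \vee (x \cdot \pos\!y)$, the same counterexample algebra $\wp(\mathbf{End}_{\mathbb{Z}})_+$ built from left divisibility, and the same range-based witnesses $f$, $g$, $k$. Your explicit mention of the transfer of equational theories from a generator to the variety it generates is a small but welcome addition of rigor that the paper leaves implicit.
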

\section{\texorpdfstring{The Gentzen-style System $\mathsf{LMC}$}{The Gentzen-style System LMC}}\label{sect: LMC}
In this section, we introduce the Gentzen-style calculus~$\mathsf{LMC}$ for closure~$\ell$-monoids. Let $\nu_1$ be the similarity type $\langle2,2,2,1,1,0,0,0\rangle$. The set $\mathit{Fm}$ of \emph{$\mathsf{LMC}$-formulas} is the set of $\nu_1$-terms generated over a denumerable set $X$ of variables according to the BNF:
\[\varphi \,\, ::= \,\, x \:| \: 1 \:| \: \bot \:| \: \top \:| \: \psi_1 \cdot \psi_2 \: | \: \psi_1 \wedge \psi_2 \: | \: \psi_1\vee \psi_2 \: | \:\! \pos\!\psi \: | \:\! \necv\! \psi\]
where $x \in X$ and $\psi,\psi_1,\psi_2 \in \mathit{Fm}$.
Set $\mathbf{Fm}:=\mathbf{T}_{\nu}(X)$. We introduce the following measure of the complexity of formulas. Let $\mathrm{cp}\colon\mathit{Fm}\rightarrow \mathbb{N}$ be the map defined as follows:
\[
\begin{array}{ll}
   \mathrm{cp}(\varphi)=0  & \;\;\;\text{if $\varphi \in X\cup \{ 1,\bot, \top \}$} \\
   \mathrm{cp}(\varphi \divideontimes \psi) = \mathrm{cp}(\varphi)+\mathrm{cp}(\psi)+1  & \;\;\;\text{for $\divideontimes \in \{ \cdot, \wedge,\vee \}$}\\
   \mathrm{cp}(\deo\! \varphi) = \mathrm{cp}(\varphi)+1  & \;\;\;\text{for $\deo \in \{ \pos, \necv \}$} \\
\end{array}
\] 
Let $\nu_2$ be the similarity type $\langle2,2,1,0\rangle$. The set $\mathit{Struct}$ of \emph{structural terms} for $\mathsf{LMC}$ is the set of $\nu_2$-terms generated over $\mathit{Fm}$ according to the BNF:
\[\Gamma \,\, ::= \,\, \varphi \:| \: \epsilon \:|\: \Delta_1 \circ \Delta_2 \:| \: \Delta_1 \sqcap \Delta_2 \:| \: \langle \Delta \rangle\]
where $\varphi \in \mathit{Fm}$, $\Delta,\Delta_1,\Delta_2\in\mathit{Struct}$, $\epsilon$ is a structural constant, $\circ$ and $\sqcap$ are binary structural operators, and $\langle - \rangle$ is a unary structural operator.  In particular, $\langle\circ,\epsilon \rangle$ and $\sqcap$ correspond, respectively, to monoid- and meet-semilattice-like term grouping, while $\langle - \rangle$ is intended to express $\pos$-like term modalisation. We define $\mathbf{Struct}:=\mathbf{T}_{\nu_2}(\mathit{Fm})$.
\begin{remark}
It is important to observe that the syntax of structural terms is not two-layered, in the sense that the construction of a formula does not constitute a special case of constructing a structural term. From an algebraic perspective, each $\Gamma\in\mathit{Struct}$ is just a polynomial symbol generated over $\mathit{Fm}$, and therefore $\mathsf{LMC}$-formulas (of any complexity) should be regarded as \textit{atomic structural terms}. This means that, e.g., $x$ is a subformula (but \textit{not} a structural subterm) of $x \wedge y$.
\end{remark}
\begin{table}[ht!]
\small
	\begin{framed}
            \begin{minipage}{11.6cm}
            \begin{center}
                \underline{\textbf{Axioms and structural rules}}\vspace{0.2cm}
            \end{center}
            \end{minipage}
\[
\begin{bprooftree}
	\def\fCenter{\mbox{\ $\Yright$\ }}
	\RightLabel{$\mathsf{init}$}
	\AxiomC{\phantom{$\Delta \Yright \varphi$}}
	\UnaryInfC{$\varphi \Yright \varphi$}
\end{bprooftree}
\begin{bprooftree}
	\RightLabel{\textsf{cut}}
	\AxiomC{$\Delta \Yright \varphi$}
	\AxiomC{$\Gamma\{ \varphi \}\Yright \psi $}
	\BinaryInfC{$\Gamma \{\Delta\}\Yright \psi$}
\end{bprooftree}
\begin{bprooftree}
	\def\fCenter{\mbox{\ $\Yright$\ }}
	\RightLabel{$\circ$\textsf{A}}
	\AxiomC{$\Gamma\lbrace((\Delta_1 \circ \Delta_2)\circ\Delta_3)\rbrace \Yright \varphi$}
	\doubleLine
	\UnaryInfC{$\Gamma\lbrace(\Delta_1 \circ(\Delta_2\circ \Delta_3))\rbrace \Yright \varphi$}
\end{bprooftree}  
\]\vspace{0.1cm}
\[
\begin{bprooftree}
	\def\fCenter{\mbox{\ $\Yright$\ }}
	\RightLabel{$\circ\, \epsilon$}
	\AxiomC{$\Gamma\{\Delta\}\Yright \varphi$}
	\UnaryInfC{$\Gamma\{\Delta \circ \epsilon \}\Yright \varphi$}
\end{bprooftree}
\begin{bprooftree}
	\def\fCenter{\mbox{\ $\Yright$\ }}
	\RightLabel{$\epsilon\,\circ$}
	\AxiomC{$\Gamma\{\Delta\}\Yright \varphi$}
	\UnaryInfC{$\Gamma\{\epsilon \circ \Delta \}\Yright \varphi$}
\end{bprooftree} 
\begin{bprooftree}
	\def\fCenter{\mbox{\ $\Yright$\ }}
	\RightLabel{$\sqcap$\textsf{A}}
	\AxiomC{$\Gamma\lbrace((\Delta_1 \sqcap \Delta_2)\sqcap\Delta_3)\rbrace \Yright \varphi$}
	\doubleLine
	\UnaryInfC{$\Gamma\lbrace(\Delta_1 \sqcap(\Delta_2\sqcap\Delta_3))\rbrace \Yright \varphi$}
\end{bprooftree} 
\]\vspace{0.1cm}
\[
\begin{bprooftree}
    \def\fCenter{\mbox{\ $\Yright$\ }}
	\RightLabel{$\sqcap\mathsf{W}_1$}
	\AxiomC{$\Gamma\{\Delta_1\}\Yright \varphi$}
	\UnaryInfC{$\Gamma\{\Delta_1 \sqcap \Delta_2 \}\Yright \varphi$}
\end{bprooftree}
\begin{bprooftree}
	\def\fCenter{\mbox{\ $\Yright$\ }}
	\RightLabel{$\sqcap$\textsf{E}}
	\AxiomC{$\Gamma\{\Delta_1 \sqcap \Delta_2\}\Yright \varphi$}
	\UnaryInfC{$\Gamma\{\Delta_2 \sqcap \Delta_1\}\Yright \varphi$}
\end{bprooftree}
\begin{bprooftree}
	\def\fCenter{\mbox{\ $\Yright$\ }}
	\RightLabel{$\sqcap$\textsf{C}}
	\AxiomC{$\Gamma\{ \Delta \sqcap \Delta \} \Yright \varphi$}
	\UnaryInfC{$\Gamma\{ \Delta\} \Yright \varphi$}
\end{bprooftree}
\]\vspace{0.1cm}
\[
\begin{bprooftree}
    \def\fCenter{\mbox{\ $\Yright$\ }}
    \RightLabel{\textsf{K}}
	\AxiomC{$\Gamma \{ \langle\Delta_1\rangle \circ \langle\Delta_2\rangle \} \Yright \varphi$}
	\UnaryInfC{$\Gamma \{ \langle\Delta_1 \circ \Delta_2\rangle \} \Yright \varphi$}
\end{bprooftree}
\begin{bprooftree}
    \def\fCenter{\mbox{\ $\Yright$\ }}
    \RightLabel{\textsf{T}}
    \AxiomC{$\Gamma\{\langle \Delta\rangle \} \Yright \varphi$}
    \UnaryInfC{$\Gamma\{ \Delta\} \Yright \varphi$}
\end{bprooftree}
\begin{bprooftree}
    \def\fCenter{\mbox{\ $\Yright$\ }}
    \RightLabel{\textsf{4}}
    \AxiomC{$\Gamma\{\langle \Delta\rangle \} \Yright \varphi$}
    \UnaryInfC{$\Gamma\{ \langle\langle\Delta \rangle\rangle\} \Yright \varphi$}
\end{bprooftree}\vspace{0.3cm}
\]
\begin{minipage}{11.6cm}
\begin{center}
    \underline{\textbf{Logical rules}}\vspace{0.2cm}
\end{center}
\end{minipage}
\[
\begin{bprooftree}
	\def\fCenter{\mbox{\ $\Yright$\ }}
	\RightLabel{$\cdot\,$\textsf{L}}
	\AxiomC{$\Gamma\{ \varphi \circ \psi\} \Yright \chi$}
	\UnaryInfC{$\Gamma\{ \varphi \cdot \psi\} \Yright \chi$}
\end{bprooftree}
\begin{bprooftree}
	\def\fCenter{\mbox{\ $\Yright$\ }}
	\RightLabel{$\cdot\,$\textsf{R}}
	\AxiomC{$\Gamma_1 \Yright \varphi$}
	\AxiomC{$\Gamma_2 \Yright \psi$}
    \BinaryInfC{$\Gamma_1 \circ \Gamma_2 \Yright \varphi \cdot \psi$}
\end{bprooftree}
\begin{bprooftree}
	\def\fCenter{\mbox{\ $\Yright$\ }}
	\RightLabel{$\wedge$\textsf{L}}
	\AxiomC{$\Gamma\{\varphi \sqcap \psi\}\Yright \chi$}
	\UnaryInfC{$\Gamma\{\varphi \wedge \psi\}\Yright \chi$}
\end{bprooftree}
\begin{bprooftree}
	\def\fCenter{\mbox{\ $\Yright$\ }}
	\RightLabel{$\wedge$\textsf{R}}
	\AxiomC{$\Gamma \Yright \varphi$}
    \AxiomC{$\Gamma \Yright \psi$}
	\BinaryInfC{$\Gamma\Yright \varphi \wedge \psi$}
\end{bprooftree}  
\]  
\vspace{0.1cm}
\[
\begin{bprooftree}
	\def\fCenter{\mbox{\ $\Yright$\ }}
	\RightLabel{$\vee$\textsf{L}}
	\AxiomC{$\Gamma\{\varphi\} \Yright \chi$}
    \AxiomC{$\Gamma\{\psi\} \Yright \chi$}
	\BinaryInfC{$\Gamma\{\varphi \vee \psi\} \Yright \chi$}
\end{bprooftree}
\begin{bprooftree}
	\def\fCenter{\mbox{\ $\Yright$\ }}
	\RightLabel{$\vee$\textsf{R}$_1$}
	\AxiomC{$\Gamma \Yright \varphi$}
	\UnaryInfC{$\Gamma \Yright \varphi \vee \psi$}
\end{bprooftree}
\begin{bprooftree}
    \def\fCenter{\mbox{\ $\Yright$\ }}
	\RightLabel{$\vee$\textsf{R}$_2$}
	\AxiomC{$\Gamma \Yright \psi$}
	\UnaryInfC{$\Gamma \Yright \varphi \vee \psi$}
\end{bprooftree}
\]\vspace{0.1cm}
\[
\begin{bprooftree}
	\def\fCenter{\mbox{\ $\Yright$\ }}
    \RightLabel{$\pos$\textsf{L}}
	\AxiomC{$\Gamma \{ \langle \varphi \rangle\} \Yright \psi$}
	\UnaryInfC{$\Gamma \{ \pos\!\varphi\} \Yright \psi$}
\end{bprooftree}
\begin{bprooftree}
	\def\fCenter{\mbox{\ $\Yright$\ }}
	\RightLabel{$\pos$\textsf{R}}
	\AxiomC{$\Gamma \Yright \varphi$}
	\UnaryInfC{$\langle \Gamma\rangle \Yright \pos\!\varphi$}
\end{bprooftree}
\begin{bprooftree}
	\def\fCenter{\mbox{\ $\Yright$\ }}
	\RightLabel{$\necv$\textsf{L}}
	\AxiomC{$\Gamma \{ \varphi\} \Yright \psi$}
	\UnaryInfC{$\Gamma \{ \langle\necv\!\varphi\rangle \} \Yright \psi$}
\end{bprooftree}
\begin{bprooftree}
	\def\fCenter{\mbox{\ $\Yright$\ }}
	\RightLabel{$\necv$\textsf{R}}
	\AxiomC{$\langle \Gamma \rangle \Yright \psi$}
	\UnaryInfC{$\Gamma \Yright \necv\!\psi$}
\end{bprooftree}
\]\vspace{0.1cm}
\[
\begin{bprooftree}
	\def\fCenter{\mbox{\ $\Yright$\ }}
	\RightLabel{$1$\textsf{L}}
	\AxiomC{$\Gamma\{\epsilon \} \Yright \varphi$}
	\UnaryInfC{$\Gamma\{1 \} \Yright \varphi$}
\end{bprooftree}
\begin{bprooftree}
    \def\fCenter{\mbox{\ $\Yright$\ }}
	\RightLabel{$1\mathsf{R}$}
	\AxiomC{\phantom{$\Gamma\{\}$}}
	\UnaryInfC{$ \phantom{\{}\!\!\!\epsilon \Yright 1$}
\end{bprooftree}
\begin{bprooftree}
	\def\fCenter{\mbox{\ $\Yright$\ }}
	\RightLabel{$\cdot1$}
	\AxiomC{$\Gamma \Yright \varphi$}
	\UnaryInfC{$\Gamma\Yright \varphi \cdot 1$}
\end{bprooftree}
\begin{bprooftree}
	\def\fCenter{\mbox{\ $\Yright$\ }}
	\RightLabel{$1\cdot$}
	\AxiomC{$\Gamma \Yright \varphi$}
	\UnaryInfC{$\Gamma\Yright 1 \cdot \varphi$}
\end{bprooftree}
\begin{bprooftree}
    \def\fCenter{\mbox{\ $\Yright$\ }}
	\RightLabel{$\bot\mathsf{L}$}
	\AxiomC{\phantom{$\Gamma\{\}$}}
    \UnaryInfC{$ \Gamma\{ \bot\} \Yright \varphi$}
\end{bprooftree}
 \begin{bprooftree}
	\def\fCenter{\mbox{\ $\Yright$\ }}
	\RightLabel{$\top\mathsf{R}$}
	\AxiomC{\phantom{$\Gamma\{\}$}}
	\UnaryInfC{$ \Gamma \Yright \top$}
\end{bprooftree}
\]
		\end{framed}
    \caption{The Gentzen-style system $\mathsf{LMC}$.}
    \label{tab: LMC}
	\end{table}
Complexity of structural terms is measured by the map $\mathrm{cp}_s:\mathit{Struct}\rightarrow \mathbb{N}$ defined by:
\[
\begin{array}{ll}
   \mathrm{cp}_s(\Gamma)=0  & \;\;\;\text{if $\Gamma \in \mathit{Fm}\cup \{ \epsilon\}$} \\
   \mathrm{cp}_s(\Gamma \divideontimes \Delta) = \mathrm{cp}_s(\Gamma)+\mathrm{cp}_s(\Delta)+1  &  \;\;\;\text{for $\divideontimes \in \{ \circ, \sqcap \}$}\\
   \mathrm{cp}_s(\langle\Gamma\rangle) = \mathrm{cp}_s(\Gamma)+1  &  \\
\end{array}
\]
An \textit{$\mathsf{LMC}$-sequent} is a pair $\langle \Gamma, \varphi\rangle$, denoted $\Gamma \Yright \varphi$, where $\varphi\in \mathit{Fm}$ and $\Gamma$ is a structural term. Axioms and inference rules of $\mathsf{LMC}$ are provided in Table \ref{tab: LMC} where:
\begin{itemize}
    \item For $\Gamma,\Delta \in \mathit{Struct}$, we write $\Gamma\lbrace\Delta\rbrace$ to indicate a distinguished occurrence of $\Delta$ as a structural subterm of $\Gamma$. We also stipulate that, in any structural term containing multiple occurrences of expressions of the form $\Gamma\lbrace\Delta\rbrace$, each instance refers to the same specific occurrence of $\Delta$ within $\Gamma$.
    \item A rule of the form{\def\fCenter{\mbox{\ $\Yright$\ }}\AxiomC{$\Gamma\Yright \varphi$}\doubleLine\UnaryInfC{$\Delta\Yright \psi$}\DisplayProof}is shorthand for the two rules{\def\fCenter{\mbox{\ $\Yright$\ }}\AxiomC{$\Gamma\Yright \varphi$}\UnaryInfC{$\Delta\Yright \psi$}\DisplayProof}and{\def\fCenter{\mbox{\ $\Yright$\ }}\AxiomC{$\Delta\Yright \psi$}\UnaryInfC{$\Gamma\Yright \varphi$}\DisplayProof}.
\end{itemize}
In the subsequent discussion, we will often need to easily switch between the proof-theoretic language of $\mathsf{LMC}$ the and the algebraic language of closure $\ell$-monoids, and \textit{vice versa}. By the \textit{formula translation} of a structural term we mean a map $^{\natural}\colon\mathit{Struct}\to\mathit{Fm}$ recursively defined as follows:
\[
\begin{array}{ll}
   \Gamma^{\natural}=\Gamma \;\text{(if $\Gamma \in \mathit{Fm}$)} &  \\
   (\Gamma \circ \Delta)^{\natural}=\Gamma^{\natural}\cdot \Delta^{\natural} & (\Gamma \sqcap \Delta)^{\natural}=\Gamma^{\natural}\wedge \Delta^{\natural}\\
   \langle\Gamma\rangle^{\natural}=\pos\!\Gamma^{\natural} & \epsilon^{\natural}=1
\end{array}
\]
The reader can easily check that $^{\natural}$ is a surjective homomorphism of $\mathbf{Struct}$ onto the $\{\vee,\necv,\bot \}$-free reduct of $\mathbf{Fm}$. Formula translations of structural terms enable us to give an \textit{(in)equational translation} for $\mathsf{LMC}$-sequents, i.e. a map $^{\sharp}\colon\mathit{Struct}\times\mathit{Fm}\rightarrow \mathit{Fm}^2$ defined by $(\Gamma \Yright \varphi)^{\sharp}=\Gamma^{\natural}\leq \varphi$. Now we want to go the other way round and introduce a \textit{sequent translation} for equations of type $\nu$. Since every equation $\varphi\approx \psi$ may be seen as a pair of inequations $\{\varphi \leq \psi, \psi \leq \varphi\}$, we define a function 
\[^{\flat}\colon\mathit{Fm}^2\uplus\mathit{Fm}^2\rightarrow (\mathit{Struct}\times\mathit{Fm})\cup \wp(\mathit{Struct}\times\mathit{Fm})\] such that: $(\varphi \leq \psi)^{\flat}=\varphi\Yright \psi$, for every inequation $\varphi\leq \psi$; $(\varphi \approx \psi)^{\flat}=\{\varphi\Yright \psi, \psi\Yright \varphi\}$, for every equation $\varphi\approx \psi$.
\begin{proposition}\label{prop: deriv-rules}
The following inference rules are derivable in $\mathsf{LMC}$.
\small{\[
\begin{bprooftree}
	   \def\fCenter{\mbox{\ $\Yright$\ }}
	   \RightLabel{$\sqcap\mathsf{W}_2$}
	   \AxiomC{$\Gamma\{\Delta_2\}\Yright \varphi$}
	   \UnaryInfC{$\Gamma\{\Delta_1 \sqcap \Delta_2 \}\Yright \varphi$}
\end{bprooftree}
\begin{bprooftree}
	   \def\fCenter{\mbox{\ $\Yright$\ }}
	   \RightLabel{$\wedge\mathsf{L}_1$}
	   \AxiomC{$\Gamma\{\varphi_1\}\Yright \chi$}
	   \UnaryInfC{$\Gamma\{\varphi_1 \wedge \varphi_2\}\Yright \chi$}
	\end{bprooftree}
    \begin{bprooftree}
	   \def\fCenter{\mbox{\ $\Yright$\ }}
	   \RightLabel{$\wedge\mathsf{L}_2$}
	   \AxiomC{$\Gamma\{\varphi_2\}\Yright \chi$}
	   \UnaryInfC{$\Gamma\{\varphi_1 \wedge \varphi_2 \}\Yright \chi$}
	\end{bprooftree}
\]
\[
\begin{bprooftree}
	\def\fCenter{\mbox{\ $\Yright$\ }}
	\RightLabel{$\cdot\mathsf{iso}$}
	\AxiomC{$\varphi_1 \Yright \psi_1$}
    \AxiomC{$\varphi_2 \Yright \psi_2$}
	\BinaryInfC{$\varphi_1 \cdot \varphi_2 \Yright \psi_1 \cdot \psi_2$}
	\end{bprooftree}
    \begin{bprooftree}
	\def\fCenter{\mbox{\ $\Yright$\ }}
	\RightLabel{$\wedge\mathsf{iso}$}
	\AxiomC{$\varphi_1 \Yright \psi_1$}
    \AxiomC{$\varphi_2 \Yright \psi_2$}
	\BinaryInfC{$\varphi_1 \wedge\varphi_2 \Yright \psi_1 \wedge \psi_2$}
	\end{bprooftree}
    \begin{bprooftree}
	\def\fCenter{\mbox{\ $\Yright$\ }}
	\RightLabel{$\vee\mathsf{iso}$}
	\AxiomC{$\varphi_1 \Yright \psi_1$}
    \AxiomC{$\varphi_2 \Yright \psi_2$}
	\BinaryInfC{$\varphi_1 \vee \varphi_2 \Yright \psi_1 \vee \psi_2$}
	\end{bprooftree}
\]}\normalsize
\end{proposition}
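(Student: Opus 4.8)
The plan is to derive each of the six rules by a short, direct proof figure built only from the primitive rules of Table~\ref{tab: LMC}; notably, none of the derivations requires \textsf{cut}, which is reassuring in view of the cut-elimination result to follow. The six rules fall into two families---structural projection rules ($\sqcap\mathsf{W}_2$, $\wedge\mathsf{L}_1$, $\wedge\mathsf{L}_2$) and isotonicity rules ($\cdot\mathsf{iso}$, $\wedge\mathsf{iso}$, $\vee\mathsf{iso}$)---and I would treat them in a dependency order, since the second family for $\wedge$ reuses the first.

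First I would dispatch the projection rules. $\sqcap\mathsf{W}_2$ follows by applying $\sqcap\mathsf{W}_1$ to the premise $\Gamma\{\Delta_2\}\Yright\varphi$, yielding $\Gamma\{\Delta_2\sqcap\Delta_1\}\Yright\varphi$, and then exchanging with $\sqcap\mathsf{E}$ to obtain $\Gamma\{\Delta_1\sqcap\Delta_2\}\Yright\varphi$. With this in hand, $\wedge\mathsf{L}_1$ is obtained from $\Gamma\{\varphi_1\}\Yright\chi$ by $\sqcap\mathsf{W}_1$ (introducing $\varphi_2$ to the right of a $\sqcap$) followed by $\wedge\mathsf{L}$, and $\wedge\mathsf{L}_2$ is obtained from $\Gamma\{\varphi_2\}\Yright\chi$ in exactly the same way but using the freshly derived $\sqcap\mathsf{W}_2$ in place of $\sqcap\mathsf{W}_1$.

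Next I would handle the isotonicity rules. For $\cdot\mathsf{iso}$, apply $\cdot\mathsf{R}$ to the two premises $\varphi_1\Yright\psi_1$ and $\varphi_2\Yright\psi_2$ to get $\varphi_1\circ\varphi_2\Yright\psi_1\cdot\psi_2$, then collapse the structural $\circ$ into the connective with $\cdot\mathsf{L}$. For $\wedge\mathsf{iso}$, apply $\wedge\mathsf{L}_1$ to $\varphi_1\Yright\psi_1$ and $\wedge\mathsf{L}_2$ to $\varphi_2\Yright\psi_2$, producing the two sequents $\varphi_1\wedge\varphi_2\Yright\psi_1$ and $\varphi_1\wedge\varphi_2\Yright\psi_2$ with a common antecedent, to which $\wedge\mathsf{R}$ then applies. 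Dually, for $\vee\mathsf{iso}$, apply $\vee\mathsf{R}_1$ and $\vee\mathsf{R}_2$ to the two premises to get $\varphi_1\Yright\psi_1\vee\psi_2$ and $\varphi_2\Yright\psi_1\vee\psi_2$ with a common succedent, and conclude by $\vee\mathsf{L}$.

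There is no genuine obstacle here: each derivation is only two or three steps long, and the single point requiring a moment's care is the bookkeeping of structural contexts---specifically, recognising that $\wedge\mathsf{L}_1$, $\wedge\mathsf{L}_2$, $\wedge\mathsf{R}$, and $\vee\mathsf{L}$ are being instantiated with the trivial context $\Gamma\{-\}=-$ (so that $\Gamma\{\Delta\}=\Delta$) in the isotonicity derivations, while respecting the distinguished-occurrence convention throughout. The most substantive point is structural rather than computational: since all six rules are derivable without \textsf{cut}, their presence as admissible rules will not interfere with the later mix/cut-elimination argument.
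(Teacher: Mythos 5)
Your derivations are all correct: each rule is obtained from the primitives of Table~\ref{tab: LMC} exactly as intended ($\sqcap\mathsf{W}_1$ followed by $\sqcap\mathsf{E}$ for $\sqcap\mathsf{W}_2$; weakening then $\wedge\mathsf{L}$ for the two $\wedge\mathsf{L}_i$; and $\cdot\mathsf{R}/\cdot\mathsf{L}$, $\wedge\mathsf{L}_i/\wedge\mathsf{R}$, $\vee\mathsf{R}_i/\vee\mathsf{L}$ with trivial contexts for the isotonicity rules), and your use of trivial contexts $\Gamma\{-\}=-$ is consistent with how the paper instantiates context rules elsewhere. The paper leaves this proof to the reader, and your cut-free, two-to-three-step derivations are precisely the standard ones the authors had in mind.
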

\begin{proof}
    Left to the reader.
\end{proof}
In what follows, for $\mathsf{r}$ a rule of $\mathsf{LMC}$, $(\mathsf{r})\!\!\downarrow_n$ denotes $n$ successive applications of $\mathsf{r}$.
\begin{lemma}\label{lem: id-deriv}
The sequent translations of all equations in \textnormal{Def. \ref{def: clm}} are derivable in $\mathsf{LMC}$.
\end{lemma}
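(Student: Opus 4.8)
The plan is to unfold Definition \ref{def: clm} through the sequent translation $^{\flat}$ and then exhibit a cut-free $\mathsf{LMC}$-derivation for each resulting sequent. Recall that $^{\flat}$ sends an inequation $\varphi\leq\psi$ to the single sequent $\varphi\Yright\psi$ and an equation $\varphi\approx\psi$ to the pair $\{\varphi\Yright\psi,\ \psi\Yright\varphi\}$, so concretely I must derive: $x\Yright\pos x$ for (1); both $\pos\pos x\Yright\pos x$ and $\pos x\Yright\pos\pos x$ for (2); both directions of $\pos(x\vee y)\approx\pos x\vee\pos y$ for (3); both directions of $\necv(x\wedge y)\approx\necv x\wedge\necv y$ for (4); $\pos(x\cdot y)\Yright\pos x\cdot\pos y$ for (5); $\pos\necv x\Yright x$ for (6); and $x\Yright\necv\pos x$ for (7). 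Each derivation is schematic in the variables, and the guiding principle is that the left logical rules ($\pos\mathsf{L}$, $\necv\mathsf{L}$, $\cdot\mathsf{L}$, $\wedge\mathsf{L}_i$, $\vee\mathsf{L}$) rewrite a principal formula in the antecedent into its structural counterpart, the right rules introduce the corresponding connective in the succedent, and the three structural rules $\mathsf{T}$, $\mathsf{4}$, $\mathsf{K}$ are exactly the proof-theoretic shadows of axioms (1), (2), and (5).

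I would first dispatch the operator-free scaffolding. For (1), from $x\Yright x$ apply $\pos\mathsf{R}$ to get $\langle x\rangle\Yright\pos x$ and then $\mathsf{T}$ to strip the structural diamond, yielding $x\Yright\pos x$; dually (7) follows by applying $\necv\mathsf{R}$ to $\langle x\rangle\Yright\pos x$. The two residuation axioms are equally short: $\necv\mathsf{L}$ turns $x\Yright x$ into $\langle\necv x\rangle\Yright x$, and $\pos\mathsf{L}$ then gives $\pos\necv x\Yright x$, which is (6). For (2), the $\Yright$-direction uses $\mathsf{4}$: from $\langle x\rangle\Yright\pos x$ one obtains $\langle\langle x\rangle\rangle\Yright\pos x$ by $\mathsf{4}$, and two applications of $\pos\mathsf{L}$ (first on the inner, then on the outer diamond) produce $\pos\pos x\Yright\pos x$; the reverse $\pos x\Yright\pos\pos x$ is just instance (1) read with $\pos x$ in place of $x$. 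Axioms (3) and (4) are handled by the interaction between the additive rules and the admissible rules $\wedge\mathsf{L}_1,\wedge\mathsf{L}_2$ of Prop.~\ref{prop: deriv-rules}: for the less immediate halves, $\vee\mathsf{L}$ (resp.\ $\wedge\mathsf{R}$) splits the goal into two branches, each closed by a $\pos\mathsf{R}/\pos\mathsf{L}$ (resp.\ $\necv\mathsf{L}/\necv\mathsf{R}$) pass together with a single $\vee\mathsf{R}_i$ or $\wedge\mathsf{L}_i$ step; the converse halves are symmetric.

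The key derivation, and the one I expect to be the real content of the lemma, is (5). Starting from $x\Yright x$ and $y\Yright y$, I apply $\pos\mathsf{R}$ to each to obtain $\langle x\rangle\Yright\pos x$ and $\langle y\rangle\Yright\pos y$, combine them with $\cdot\mathsf{R}$ into $\langle x\rangle\circ\langle y\rangle\Yright\pos x\cdot\pos y$, and then fire $\mathsf{K}$ to merge the two adjacent structural diamonds into one, giving $\langle x\circ y\rangle\Yright\pos x\cdot\pos y$; finally $\cdot\mathsf{L}$ replaces $x\circ y$ by $x\cdot y$ and $\pos\mathsf{L}$ replaces $\langle x\cdot y\rangle$ by $\pos(x\cdot y)$, yielding $\pos(x\cdot y)\Yright\pos x\cdot\pos y$. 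This is the only place where $\mathsf{K}$ is needed, and the subtlety---hence the main obstacle---is one of discipline rather than difficulty: since $\langle-\rangle$ interacts with $\circ$ only through $\mathsf{K}$ (and with itself through $\mathsf{T}$ and $\mathsf{4}$), one must introduce the structural diamonds by $\pos\mathsf{R}$ \emph{before}, not after, forming the product, so that the $\mathsf{K}$-pattern $\langle\Delta_1\rangle\circ\langle\Delta_2\rangle$ is literally present in the antecedent. Everything else is a routine bottom-up search, and since none of the derivations invokes \textsf{cut}, the lemma also records in passing that these defining axioms are already provable in the cut-free fragment, in anticipation of Section~\ref{sect: cutelim}.
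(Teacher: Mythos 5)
Your proposal is correct and follows essentially the same route as the paper: the same cut-free derivations, with (1), (2), (6), (7) obtained via $\mathsf{T}$, $\mathsf{4}$ and the $\pos\mathsf{R}/\pos\mathsf{L}/\necv\mathsf{R}/\necv\mathsf{L}$ pairs, (3) and (4) via $\vee\mathsf{L}$/$\wedge\mathsf{R}$ branching with $\wedge\mathsf{L}_i$, and the derivation of (5) is literally the one in the paper ($\pos\mathsf{R}$ on each premiss, $\cdot\mathsf{R}$, $\mathsf{K}$, $\cdot\mathsf{L}$, $\pos\mathsf{L}$). The only cosmetic difference is that for $\pos x \Yright \pos\pos x$ you invoke a substitution instance of (1) where the paper re-derives it explicitly; both are fine.
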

\begin{proof}
We first give derivations for equations (1), (2), (5), (6), and (7).
{\small\[
\begin{bprooftree}
	\def\fCenter{\mbox{\ $\Yright$\ }}
	\AxiomC{}
    \RightLabel{$\mathsf{init}$}
    \UnaryInfC{$x \Yright x$}
    \RightLabel{$\pos\!\mathsf{R}$}
	\UnaryInfC{$\langle x\rangle  \Yright \pos\!x$}
    \RightLabel{$\mathsf{T}$}
    \UnaryInfC{$x \Yright \pos\! x$}
\end{bprooftree}
    \begin{bprooftree}
	\def\fCenter{\mbox{\ $\Yright$\ }}
	\AxiomC{}
    \RightLabel{$\mathsf{init}$}
    \UnaryInfC{$x \Yright x$}
    \RightLabel{$\necv\mathsf{L}$}
	\UnaryInfC{$\langle \necv\!x \rangle \Yright x$}
    \RightLabel{$\pos\!\mathsf{L}$}
    \UnaryInfC{$ \pos\!\necv\!x \Yright x$}
\end{bprooftree}
\begin{bprooftree}
	\def\fCenter{\mbox{\ $\Yright$\ }}
	\AxiomC{}
    \RightLabel{$\mathsf{init}$}
    \UnaryInfC{$x \Yright x$}
    \RightLabel{$\pos\!\mathsf{R}$}
	\UnaryInfC{$\langle x \rangle \Yright \pos\!x$}
    \RightLabel{$\necv\!\mathsf{R}$}
    \UnaryInfC{$ x \Yright \necv\!\pos\!x$}
\end{bprooftree}
\]}\vspace{0.1cm}
{\small \[
\begin{bprooftree}
	\def\fCenter{\mbox{\ $\Yright$\ }}
	\AxiomC{}
    \RightLabel{$\mathsf{init}$}
    \UnaryInfC{$x \Yright x$}
    \RightLabel{$\pos\!\mathsf{R}$}
	\UnaryInfC{$\langle x \rangle \Yright \pos\!x$}
    \RightLabel{$\mathsf{4}$}
    \UnaryInfC{$ \langle \langle x \rangle \rangle \Yright \pos\!x$}
    \RightLabel{$(\pos\!\mathsf{L})\!\!\downarrow_2$}
    \UnaryInfC{$ \pos\!\pos\! x \Yright \pos\!x$}
\end{bprooftree}
    \begin{bprooftree}
	\def\fCenter{\mbox{\ $\Yright$\ }}
	\AxiomC{}
    \RightLabel{$\mathsf{init}$}
    \UnaryInfC{$x \Yright x$}
    \RightLabel{$\pos\!\mathsf{R}$}
	\UnaryInfC{$\langle x \rangle \Yright \pos\!x$}
    \AxiomC{}
    \RightLabel{$\mathsf{init}$}
    \UnaryInfC{$y \Yright y$}
    \RightLabel{$\pos\!\mathsf{R}$}
	\UnaryInfC{$\langle y \rangle \Yright \pos\!y$}
    \RightLabel{$\cdot\,\mathsf{R}$}
    \BinaryInfC{$ \langle x \rangle \circ \langle y \rangle\Yright \pos\!x \cdot \pos\!y$}
    \RightLabel{$\mathsf{K}$}
    \UnaryInfC{$ \langle x \circ y \rangle\Yright \pos\!x \cdot \pos\!y$}
    \RightLabel{$\cdot\,\mathsf{L}$}
    \UnaryInfC{$ \langle x \cdot y \rangle\Yright \pos\!x \cdot \pos\!y$}
    \RightLabel{$\pos\!\mathsf{L}$}
    \UnaryInfC{$ \pos(x \cdot y)\Yright \pos\!x \cdot \pos\!y$}
\end{bprooftree}
\begin{bprooftree}
	\def\fCenter{\mbox{\ $\Yright$\ }}
	\AxiomC{}
    \RightLabel{$\mathsf{init}$}
    \UnaryInfC{$x \Yright x$}
    \RightLabel{$\pos\!\mathsf{R}$}
	\UnaryInfC{$\langle x \rangle \Yright \pos\!x$}
    \RightLabel{$\mathsf{T}$}
    \UnaryInfC{$ x \Yright \pos\!x$}
    \RightLabel{$\pos\!\mathsf{R}$}
	\UnaryInfC{$\langle x\rangle \Yright \pos\!\pos\!x$}
    \RightLabel{$\pos\!\mathsf{L}$}
    \UnaryInfC{$ \pos\!x \Yright \pos\!\pos\!x$}
\end{bprooftree}
\]}
Note that, since $(\pos\!\pos x \approx \pos x)^{\flat}=\{ \pos\!\pos x \Yright\pos x, \pos x \Yright \pos\!\pos x\}$ the equation (2) corresponds to two derivations (on the bottom right and on the bottom left). The same applies to (3) and (4). By way of example, we provide the derivations of $\pos(x \vee y)\Yright \pos\!x \vee \pos\!y$ and $ \necv \!x \wedge \necv\!y \Yright \necv(x\wedge y)$.
{\small
\[
\begin{bprooftree}
	\def\fCenter{\mbox{\ $\Yright$\ }}
	\AxiomC{}
	\RightLabel{$\mathsf{init}$}
	\UnaryInfC{$x \Yright x$}	
	\RightLabel{$\pos \mathsf{R}$}
	\UnaryInfC{$\langle x \rangle \Yright \pos\!x$}
    \RightLabel{$\vee\mathsf{R}_1$}
	\UnaryInfC{$\langle x \rangle \Yright \pos\!x \vee \pos\!y$ }
	\AxiomC{}
    \RightLabel{$\mathsf{init}$}
	\UnaryInfC{$y \Yright y$}	
	\RightLabel{$\pos \mathsf{R}$}
	\UnaryInfC{$\langle y \rangle \Yright \pos\!y$}
    \RightLabel{$\vee\mathsf{R}_2$}
	\UnaryInfC{$\langle y \rangle \Yright \pos\!x\vee \pos\!y$ }
	\RightLabel{$\vee \mathsf{L}$}
	\BinaryInfC{$\langle x \vee y \rangle \Yright \pos\!x \vee \pos\!y$ }
    \RightLabel{$\pos\mathsf{L}$}
	\UnaryInfC{$\pos(x \vee y)\Yright \pos\!x \vee \pos\!y$}
\end{bprooftree}
\begin{bprooftree}
	\def\fCenter{\mbox{\ $\Yright$\ }}
	\AxiomC{}
	\RightLabel{$\mathsf{init}$}
	\UnaryInfC{$x \Yright x$}	
    \RightLabel{$\necv\mathsf{L}$}
	\UnaryInfC{$\langle \necv \!x \rangle \Yright x$ }
    \RightLabel{$\wedge\mathsf{L}_1$}
    \UnaryInfC{$\langle \necv \!x \wedge \necv\!y \rangle \Yright x$}
	\AxiomC{}
	\RightLabel{$\mathsf{init}$}
	\UnaryInfC{$y \Yright y$}	
    \RightLabel{$\necv\mathsf{L}$}
	\UnaryInfC{$\langle \necv \!y\rangle \Yright y$ }
    \RightLabel{$\wedge\mathsf{L}_2$}
    \UnaryInfC{$\langle \necv \!x \wedge \necv\!y \rangle \Yright y$}
    \RightLabel{$\wedge\mathsf{R}$}
	\BinaryInfC{$\langle \necv \!x \wedge \necv\!y \rangle \Yright x\wedge y$}
    \RightLabel{$\necv\mathsf{R}$}
    \UnaryInfC{$ \necv \!x \wedge \necv\!y \Yright \necv(x\wedge y)$}
\end{bprooftree}\]}
The converse cases are left to the reader.
\end{proof}
It is not difficult to exhibit derivations in~$\mathsf{LMC}$ for the equations proved in Lemmas~\ref{lem: dereq1} and~\ref{lem: dereq2}. For example, here are proofs for the weak co-nuclearity of~$\necv$.
{\small
\[
\begin{bprooftree}
	\def\fCenter{\mbox{\ $\Yright$\ }}
	\AxiomC{}
	\RightLabel{$\mathsf{init}$}
	\UnaryInfC{$x \Yright x$}	
	\RightLabel{$\necv\! \mathsf{L}$}
	\UnaryInfC{$\langle \necv\!x \rangle \Yright x$}
    \RightLabel{$\mathsf{4}$}
	\UnaryInfC{$\langle\langle \necv\!x \rangle\rangle \Yright x$}
    \RightLabel{$\necv\! \mathsf{R}$}
	\UnaryInfC{$\langle \necv\!x \rangle \Yright \necv\!x$}
	\AxiomC{}
    \RightLabel{$\mathsf{init}$}
	\UnaryInfC{$y \Yright y$}	
	\RightLabel{$\necv\!\mathsf{L}$}
	\UnaryInfC{$\langle \necv\!y \rangle \Yright y$}
    \RightLabel{$\mathsf{4}$}
	\UnaryInfC{$\langle\langle \necv\!y \rangle\rangle \Yright y$}
    \RightLabel{$\necv\! \mathsf{R}$}
	\UnaryInfC{$\langle \necv\!y \rangle \Yright \necv\!y$}
    \RightLabel{$\cdot\mathsf{R}$}
	\BinaryInfC{$\langle \necv\!x \rangle \circ \langle\necv\!y \rangle \Yright \necv\!x \cdot \necv\!y$ }
    \RightLabel{$\mathsf{K}$}
	\UnaryInfC{$\langle \necv\!x \circ \necv\!y \rangle \Yright \necv\!x \cdot \necv\!y$}
    \RightLabel{$\cdot \mathsf{L}$}
    \UnaryInfC{$\langle \necv\!x \cdot \necv\!y \rangle \Yright \necv\!x \cdot \necv\!y$}
    \RightLabel{$\necv\!\mathsf{L}$}
    \UnaryInfC{$\langle\langle \necv( \necv\!x \cdot \necv\!y) \rangle\rangle \Yright \necv\!x \cdot \necv\!y$}
    \RightLabel{$(\mathsf{T})\!\!\downarrow_2$}
    \UnaryInfC{$ \necv( \necv\!x \cdot \necv\!y) \Yright \necv\!x \cdot \necv\!y$}
\end{bprooftree}
\begin{bprooftree}
	\def\fCenter{\mbox{\ $\Yright$\ }}
	\AxiomC{}
	\RightLabel{$\mathsf{init}$}
	\UnaryInfC{$x \Yright x$}	
	\RightLabel{$\necv\! \mathsf{L}$}
	\UnaryInfC{$\langle \necv\!x \rangle \Yright x$}
    \RightLabel{$\mathsf{4}$}
	\UnaryInfC{$\langle\langle \necv\!x \rangle\rangle \Yright x$}
    \RightLabel{$\necv\! \mathsf{R}$}
	\UnaryInfC{$\langle \necv\!x \rangle \Yright \necv\!x$}
	\AxiomC{}
    \RightLabel{$\mathsf{init}$}
	\UnaryInfC{$y \Yright y$}	
	\RightLabel{$\necv\!\mathsf{L}$}
	\UnaryInfC{$\langle \necv\!y \rangle \Yright y$}
    \RightLabel{$\mathsf{4}$}
	\UnaryInfC{$\langle\langle \necv\!y \rangle\rangle \Yright y$}
    \RightLabel{$\necv\! \mathsf{R}$}
	\UnaryInfC{$\langle \necv\!y \rangle \Yright \necv\!y$}
    \RightLabel{$\cdot\mathsf{R}$}
	\BinaryInfC{$\langle \necv\!x \rangle \circ \langle\necv\!y \rangle \Yright \necv\!x \cdot \necv\!y$ }
    \RightLabel{$\mathsf{K}$}
	\UnaryInfC{$\langle \necv\!x \circ \necv\!y \rangle \Yright \necv\!x \cdot \necv\!y$}
    \RightLabel{$\necv\! \mathsf{R}$}
    \UnaryInfC{$ \necv\!x \circ \necv\!y \Yright \necv(\necv\!x \cdot \necv\!y)$}
    \RightLabel{$\cdot\mathsf{L}$}
    \UnaryInfC{$ \necv\!x \cdot \necv\!y \Yright \necv(\necv\!x \cdot \necv\!y)$}
\end{bprooftree}\]}
It can likewise be seen that (\ref{eq: CN2}) has no derivable counterparts in our system. In particular, $\nvdash_{\mathsf{LMC}}\necv\!x \Yright \necv\!x \cdot \necv\!1$ and $\nvdash_{\mathsf{LMC}}\necv\!x \Yright \necv\!1 \cdot \necv\!x$. It is also worth noting that $\mathsf{LMC}$ does not derive the sequent translations of $\pos\!1\approx 1$ and $\pos(x \cdot y)\approx \pos\!x \vee (x\cdot \pos\!y)$, since it can be readily verified that $\nvdash_{\mathsf{LMC}} \pos\!1\Yright 1$ and $\nvdash_{\mathsf{LMC}}\pos(x \cdot y)\Yright\pos\!x \vee (x\cdot \pos\!y)$. 

We are now going to show that $\mathsf{LMC}$ is sound and complete with respect to the variety of closure $\ell$-monoids. First, we need to check soundness of individual inference rules.
\begin{lemma}\label{lem: sound-rules}
Let {\def\fCenter{\mbox{\ $\Yright$\ }}\AxiomC{$s_1, \dots, s_n$}\UnaryInfC{$s$}\DisplayProof} be an inference rule of $\mathsf{LMC}$. Then $ s_1^{\sharp}, \dots, s_n^{\sharp} \vDash_{\text{\scriptsize$\mathfrak{LMC}$}} s^{\sharp}$.
\end{lemma}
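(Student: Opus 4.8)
The plan is to fix a closure $\ell$-monoid $\mathbf{A}\in\mathfrak{LMC}$ together with a homomorphism $h\colon\mathbf{Fm}\to\mathbf{A}$, and to check, rule by rule, that whenever $h$ validates each premise inequation $s_i^{\sharp}$ (that is, $h(\Gamma_i^{\natural})\leq h(\varphi_i)$) it also validates the conclusion $s^{\sharp}$. The backbone of every case is the remark recorded in the text that $^{\natural}$ is a surjective homomorphism of $\mathbf{Struct}$ onto the $\{\vee,\necv,\bot\}$-free reduct of $\mathbf{Fm}$: for any one-hole context $\Gamma\{-\}$, the assignment $\Delta\mapsto\Gamma\{\Delta\}^{\natural}$ is the substitution of $\Delta^{\natural}$ into a fixed unary polynomial $p(-)$ built from $\cdot$, $\wedge$, and $\pos$ (with the remaining subterms of $\Gamma$ as parameters).

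First I would isolate a \emph{Context Lemma}, proved by induction on $p$: in $\mathbf{A}$ the polynomial $p$ is order-preserving ($a\leq b\Rightarrow p(a)\leq p(b)$) and join-preserving ($p(a\vee b)=p(a)\vee p(b)$). Monotonicity of the building blocks comes from (\ref{eq: DMJ}) (hence (\ref{eq: COMP})) for $\cdot$, from the lattice order for $\wedge$, and from axiom~(3) for $\pos$; join-preservation uses (\ref{eq: DMJ}) for $\cdot$, the distributivity of the lattice reduct for the $\wedge$-clause, and again axiom~(3) for $\pos$. In the same induction I would record the auxiliary identity $p(\bot)=\bot$, obtained from the absorptivity of $\bot$ under $\cdot$, from $\bot\wedge x=\bot$, and from the derived law~(v) $\pos\bot\approx\bot$.

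With the Context Lemma available, the rules split into three routine families. The rules whose two displayed structural subterms have \emph{identical} $^{\natural}$-translations — the associativity, unit, commutativity and idempotency rules $\circ\mathsf{A},\circ\,\epsilon,\epsilon\,\circ,\sqcap\mathsf{A},\sqcap\mathsf{E},\sqcap\mathsf{C},1\mathsf{L}$, together with the connective-introducing rules $\cdot\mathsf{L},\wedge\mathsf{L},\pos\mathsf{L}$ (where $(\varphi\circ\psi)^{\natural}=\varphi\cdot\psi=(\varphi\cdot\psi)^{\natural}$, and likewise for $\sqcap/\wedge$ and $\langle-\rangle/\pos$) — translate premise and conclusion to the same inequation, so the double-line rules hold in both directions at once. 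Next, the rules that supply an inequality between the displayed subterms, which the monotone half of the Context Lemma lifts through $\Gamma\{-\}$ and transitivity then chains with a premise: $\mathsf{cut}$ (inequality from its left premise), $\sqcap\mathsf{W}_1$ ($a\wedge b\leq a$), $\mathsf{T}$ (axiom~(1)), $\mathsf{4}$ (axiom~(2)), $\mathsf{K}$ (axiom~(5)), and $\necv\mathsf{L}$ (axiom~(6)). Finally, the context-free rules, each handled by one algebraic fact: $\mathsf{init},\cdot1,1\cdot,1\mathsf{R}$ by reflexivity and the unit law; $\cdot\mathsf{R}$ by monotonicity of $\cdot$; $\wedge\mathsf{R}$ by the greatest-lower-bound property; $\vee\mathsf{R}_1,\vee\mathsf{R}_2$ by the join upper bounds; $\pos\mathsf{R}$ by monotonicity of $\pos$; $\necv\mathsf{R}$ by the residuation law~(\ref{eq: R1}), available since $\langle\pos,\necv\rangle$ is a residuated pair by Lemma~\ref{lem: dereq1}; $\top\mathsf{R}$ by $x\leq\top$; and $\bot\mathsf{L}$ by the clause $p(\bot)=\bot$.

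The genuinely delicate case, and the main obstacle, is $\vee\mathsf{L}$: from $\Gamma\{\varphi\}^{\natural}\leq\chi$ and $\Gamma\{\psi\}^{\natural}\leq\chi$ I need $\Gamma\{\varphi\vee\psi\}^{\natural}\leq\chi$, which I obtain from the join-preserving half of the Context Lemma, $\Gamma\{\varphi\vee\psi\}^{\natural}=\Gamma\{\varphi\}^{\natural}\vee\Gamma\{\psi\}^{\natural}$, followed by the least-upper-bound property. The hard part is thus concentrated in the join-preservation clause, whose only non-immediate step is that the structural meet distributes over join — exactly the point at which the distributivity of the lattice reduct becomes indispensable, mirroring the Dunn--Minc role of the structural meet mentioned earlier. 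Once distributivity is invoked there, every remaining case is a direct application of a closure-$\ell$-monoid axiom, of the unit and bound laws, or of residuation.
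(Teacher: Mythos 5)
Your proposal is correct and follows essentially the same route as the paper: the paper likewise disposes of $\mathsf{cut}$, $\sqcap\mathsf{W}_1$, $\mathsf{K}$, and $\necv\mathsf{L}$ via order-preservation of the term operations induced by contexts, and handles $\vee\mathsf{L}$ by proving $\Gamma\{\varphi\}^{\natural}\vee\Gamma\{\psi\}^{\natural}\approx\Gamma\{\varphi\vee\psi\}^{\natural}$ by induction on $\mathrm{cp}_s(\Gamma)$, using (\ref{eq: DMJ}), lattice distributivity, and axiom~(3), exactly as in your Context Lemma. Your packaging is somewhat more systematic (and the explicit clause $p(\bot)=\bot$ for $\bot\mathsf{L}$ covers a detail the paper leaves as ``trivial''), but the underlying argument is the same.
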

\begin{proof}
The proof is trivial for almost all the rules displayed in Table \ref{tab: LMC}. The only rules for which some caution is required are $\mathsf{cut}$, $\mathsf{\sqcap\mathsf{W}_1}$, $\mathsf{K}$, $\vee\mathsf{L}$, and $\necv\!\mathsf{L}$. The reason is that, in these cases, the subterms appearing in the left-hand sides of sequents are permitted to occur at an arbitrary depth, yet the rule's structure alone does not guarantee soundness  (unlike, for instance, with $\circ\, \epsilon$, $\sqcap\mathsf{A}$, or $\pos\!\mathsf{L}$). 
\begin{itemize}
\item Let us start by considering the rules $\mathsf{cut}$, $\sqcap\mathsf{W}_1$, $\mathsf{K}$, and $\necv\!\mathsf{L}$. We first note that all fundamental operations of closure $\ell$-monoids are order-preserving. This implies that the interpretation of formula translations for structural terms results in term operations that are inherently order-preserving. Take now $\Gamma\{-\}$ to be a structural term with an empty argument place. For $\Delta\in\mathit{Struct}$ and $\varphi, \psi \in \mathit{Fm}$, suppose that $\mathfrak{LMC}$ satisfies $\Gamma\{\varphi\}^{\natural}\leq \psi$ and $\Delta^{\natural}\leq\varphi$. But then $\mathfrak{LMC}\models\Gamma\{\Delta\}^{\natural}\leq\Gamma\{\varphi\}^{\natural}$ whence, by transitivity, $\mathfrak{LMC}\models \Gamma\{\Delta\}^{\natural}\leq \psi$. This establishes the soundness of $\mathsf{cut}$. Analogous arguments apply to the remaining cases.
\item We now turn to $\vee\mathsf{L}$. Here, it is easy to see that order preservation is of no help in checking soundness for this rule. However, by examining the syntax of structural terms, we can observe that the counterpart of each structural operator is a join-distributive connective. Therefore, to prove that $\vee\mathsf{L}$ is sound, it is enough to show that the equation
\[E:=\Gamma\{ \varphi \}^{\natural}\vee \Gamma\{ \psi \}^{\natural} \approx \Gamma\{ \varphi \vee \psi \}^{\natural}\]
is satisfied by $\mathfrak{LMC}$. We proceed by induction on $\mathrm{cp}_s(\Gamma)$, recalling that the formulas, understood as structural terms, have complexity 0. In the base case our claim clearly holds. Suppose now that $\mathfrak{LMC}\models E$ for all the instances of $\vee\mathsf{L}$ with structural terms of complexity $\leq n$. For $\xi\in\{\varphi,\psi,\varphi\vee\psi\}$, let $\Gamma\{\xi\}:=\Delta \circ \Gamma'\{\xi \}$, where $\mathrm{cp}_s(\Gamma')=n$. Since 
\[ \Delta^{\natural} \cdot ( \Gamma'\{\varphi \}^{\natural} \vee \Gamma'\{\psi \}^{\natural}) \approx (\Delta^{\natural} \cdot\Gamma'\{\varphi \}^{\natural}) \vee (\Delta^{\natural} \cdot\Gamma'\{\psi \}^{\natural})  \]
is an instance of a defining equation of $\mathfrak{LMC}$ and, by induction hypothesis, $\mathfrak{LMC}\models \Gamma'\{ \varphi \}^{\natural}\vee \Gamma'\{ \psi \}^{\natural} \approx \Gamma'\{ \varphi \vee \psi \}^{\natural}$, we immediately obtain that
\[\mathfrak{LMC}\models (\Delta^{\natural} \cdot\Gamma'\{\varphi \}^{\natural}) \vee (\Delta^{\natural} \cdot\Gamma'\{\psi \}^{\natural}) \approx \Delta^{\natural} \cdot \Gamma'\{ \varphi \vee \psi \}^{\natural}. \]
We leave it to the reader to check the cases in which $\Gamma\{\xi\}$ is defined as $\Gamma'\{\xi \}\circ \Delta$, $\Delta \sqcap \Gamma'\{\xi \}$, $\Gamma'\{\xi \}\sqcap \Delta$, and $\langle\Gamma'\{\xi \}\rangle$.\qedhere
\end{itemize}
\end{proof}
\begin{corollary}
Any derivable rule of $\mathsf{LMC}$ is sound.
\end{corollary}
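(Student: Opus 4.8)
The plan is to argue by induction on the height of the derivation witnessing that a given rule is derivable, reducing everything to the single-step soundness already established in Lemma~\ref{lem: sound-rules} together with the structural properties of the equational consequence relation $\vDash_{\mathfrak{LMC}}$. Recall that a rule $\frac{s_1,\dots,s_n}{s}$ is \emph{derivable} in $\mathsf{LMC}$ precisely when there is a derivation of the sequent $s$ built from the primitive rules of Table~\ref{tab: LMC} in which, besides the $\mathsf{LMC}$-axioms, the sequents $s_1,\dots,s_n$ are admitted as extra leaves. Accordingly, to prove the rule sound I would establish $s_1^{\sharp},\dots,s_n^{\sharp}\vDash_{\mathfrak{LMC}} s^{\sharp}$ by induction on the height of this derivation.

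For the base case, a leaf of the derivation is either one of the premises $s_i$, in which case $s^{\sharp}=s_i^{\sharp}$ and the claim holds by reflexivity of $\vDash_{\mathfrak{LMC}}$, or it is the conclusion of a zero-premise rule, i.e. an axiom such as $\mathsf{init}$, $1\mathsf{R}$, $\bot\mathsf{L}$, or $\top\mathsf{R}$. In the latter situation Lemma~\ref{lem: sound-rules} gives $\vDash_{\mathfrak{LMC}} s^{\sharp}$, and hence $s_1^{\sharp},\dots,s_n^{\sharp}\vDash_{\mathfrak{LMC}} s^{\sharp}$ by monotonicity. For the inductive step, suppose that the last inference is an instance of a primitive rule $\mathsf{r}$ with premises $t_1,\dots,t_m$ and conclusion $s$, where each $t_k$ is the root of a strictly shorter subderivation from $s_1,\dots,s_n$. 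By the induction hypothesis, $s_1^{\sharp},\dots,s_n^{\sharp}\vDash_{\mathfrak{LMC}} t_k^{\sharp}$ for every $k$, whereas Lemma~\ref{lem: sound-rules} yields $t_1^{\sharp},\dots,t_m^{\sharp}\vDash_{\mathfrak{LMC}} s^{\sharp}$. An application of the cut (transitivity) property of $\vDash_{\mathfrak{LMC}}$ then delivers $s_1^{\sharp},\dots,s_n^{\sharp}\vDash_{\mathfrak{LMC}} s^{\sharp}$, closing the induction.

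The only point deserving explicit attention is that the argument rests on $\vDash_{\mathfrak{LMC}}$ being a genuine consequence relation: reflexive, monotone, and transitive (closed under cut). This is not a real obstacle, since these structural properties follow immediately from the definition of the equational consequence relation recalled in Section~\ref{subsec: algebra}. Indeed, if $\mathfrak{LMC}\models\Phi$ forces $\mathfrak{LMC}\models\psi_k$ for each $k$, and $\mathfrak{LMC}\models\{\psi_k\}_k$ in turn forces $\mathfrak{LMC}\models\chi$, then $\mathfrak{LMC}\models\Phi$ forces $\mathfrak{LMC}\models\chi$, which is exactly transitivity; reflexivity and monotonicity are equally direct. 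Once these properties are recorded, the induction runs mechanically and no genuinely hard step remains, which is why the statement is naturally phrased as a corollary of Lemma~\ref{lem: sound-rules}.
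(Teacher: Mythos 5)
Your proof is correct and is exactly the routine argument the paper has in mind: the corollary is stated without proof as an immediate consequence of Lemma~\ref{lem: sound-rules}, and the intended justification is precisely the induction you describe, using the reflexivity, monotonicity, and transitivity of $\vDash_{\mathfrak{LMC}}$ (which, given the paper's global definition of equational consequence, are indeed trivial to verify).
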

\begin{theorem}\label{theo: completeness}
Let $\Gamma \Yright \varphi$ be an $\mathsf{LMC}$-sequent. Then $\vdash_{\mathsf{LMC}}\Gamma \Yright \varphi$ iff $\mathfrak{LMC}\models (\Gamma \Yright \varphi)^{\sharp}$.
\end{theorem}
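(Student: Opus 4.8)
The plan is to prove the two implications separately. The left-to-right direction is \emph{soundness}, and it is the easy one: I would argue by induction on the height of a derivation of $\Gamma\Yright\varphi$. The base case $\mathsf{init}$ has translation $\varphi\leq\varphi$, which holds in every algebra; for the inductive step, each rule application is handled by Lemma~\ref{lem: sound-rules}, which already shows that the (in)equational translation of a rule's conclusion is an $\mathfrak{LMC}$-consequence of the translations of its premisses. Chaining these consequences along the derivation tree yields $\mathfrak{LMC}\models(\Gamma\Yright\varphi)^{\sharp}$.

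For the converse (\emph{completeness}) I would carry out a Lindenbaum--Tarski construction. Define $\varphi\equiv\psi$ iff $\vdash_{\mathsf{LMC}}\varphi\Yright\psi$ and $\vdash_{\mathsf{LMC}}\psi\Yright\varphi$. Reflexivity, symmetry, and transitivity of $\equiv$ follow from $\mathsf{init}$, the definition, and $\mathsf{cut}$; that $\equiv$ is a congruence on $\mathbf{Fm}$ follows for $\cdot,\wedge,\vee$ from the isomorphism rules of Proposition~\ref{prop: deriv-rules}, and for $\pos,\necv$ from the monotonicity rules $\varphi\Yright\psi\,/\,\pos\varphi\Yright\pos\psi$ and $\varphi\Yright\psi\,/\,\necv\varphi\Yright\necv\psi$, both derivable (the first by $\pos\mathsf{R}$ then $\pos\mathsf{L}$, the second by $\necv\mathsf{L}$ then $\necv\mathsf{R}$). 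Setting $\mathbf{F}:=\mathbf{Fm}/\!\equiv$, I would then observe that its lattice order coincides with provability, i.e. $[\varphi]\leq[\psi]$ iff $\vdash_{\mathsf{LMC}}\varphi\Yright\psi$, passing between $\varphi\Yright\psi$ and $\varphi\Yright\varphi\wedge\psi$ via $\wedge\mathsf{R}$, $\wedge\mathsf{L}_2$, and $\mathsf{cut}$.

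The central step is to show $\mathbf{F}\in\mathfrak{LMC}$. The sequent translations of axioms (1)--(7) of Definition~\ref{def: clm} are derivable by Lemma~\ref{lem: id-deriv}, so those identities hold in $\mathbf{F}$; it remains to verify the bounded-distributive-lattice and $\ell$-monoid axioms and the absorbency of $\bot$. The monoid laws and (\ref{eq: DMJ}) come from the $\circ$-rules ($\circ\mathsf{A}$, $\circ\,\epsilon$, $\epsilon\,\circ$) together with $\cdot\mathsf{L}$, $\cdot\mathsf{R}$, $1\mathsf{L}$, $1\mathsf{R}$, and $\vee\mathsf{L}$; the lattice laws come from the $\sqcap$-rules and the $\wedge/\vee$-rules; absorbency of $\bot$ is immediate from $\bot\mathsf{L}$. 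I expect this verification to be the main obstacle, and within it the genuinely delicate point to be lattice \emph{distributivity}, which---exactly as in the Dunn--Minc treatment of $\mathsf{R}^{+}$ recalled in the introduction---is obtained through the structural meet $\sqcap$ and its interaction with $\vee\mathsf{L}$.

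Finally I would record the \emph{collapse lemma}: $\vdash_{\mathsf{LMC}}\Gamma\Yright\Gamma^{\natural}$ for every $\Gamma\in\mathit{Struct}$, proved by a short induction on $\mathrm{cp}_s(\Gamma)$ (the $\circ$-step uses $\cdot\mathsf{R}$, the $\sqcap$-step uses $\sqcap\mathsf{W}_1$, $\sqcap\mathsf{W}_2$, and $\wedge\mathsf{R}$, the $\langle-\rangle$-step uses $\pos\mathsf{R}$, and the base cases use $\mathsf{init}$ and $1\mathsf{R}$). With these pieces in place, assume $\mathfrak{LMC}\models\Gamma^{\natural}\leq\varphi$. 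Since $\mathbf{F}\in\mathfrak{LMC}$, the inequation holds in $\mathbf{F}$ under the quotient valuation $x\mapsto[x]$, giving $[\Gamma^{\natural}]\leq[\varphi]$ and hence $\vdash_{\mathsf{LMC}}\Gamma^{\natural}\Yright\varphi$. Cutting this against the collapse lemma $\Gamma\Yright\Gamma^{\natural}$ then yields $\vdash_{\mathsf{LMC}}\Gamma\Yright\varphi$, completing the argument.
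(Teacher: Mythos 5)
Your proposal is correct and follows essentially the same route as the paper: soundness by induction on derivation height via Lemma~\ref{lem: sound-rules}, and completeness via the Lindenbaum--Tarski quotient $\mathbf{Fm}/\Theta(\mathsf{LMC})$, with the congruence property from the $\mathsf{iso}$ and modal monotonicity rules, membership in $\mathfrak{LMC}$ from Lemma~\ref{lem: id-deriv} plus the $\sqcap$-based distributivity derivations, and the final $\mathsf{cut}$ against $\Gamma\Yright\Gamma^{\natural}$. The only difference is that you spell out the induction behind $\vdash_{\mathsf{LMC}}\Gamma\Yright\Gamma^{\natural}$, which the paper simply asserts as obvious.
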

\begin{proof}
Drawing upon Lemma \ref{lem: sound-rules}, soundness is obtained via a straightforward induction on the height of derivations in $\mathsf{LMC}$. We establish completeness via the Lindenbaum-Tarski method. Let $\Theta (\mathsf{LMC})\subseteq \mathit{Fm}^2$ be defined by
\[ \Theta (\mathsf{LMC}) := \left\{ \langle \varphi, \psi \rangle \in \mathit{Fm}^2 \,|\, \text{$\vdash_{\mathsf{LMC}}\varphi \Yright \psi$ and $\vdash_{\mathsf{LMC}}\psi \Yright \varphi$} \right\}.\]
It is easy to see that $\Theta (\mathsf{LMC})$ is an equivalence relation on $\mathit{Fm}$: it is symmetric by design, while reflexivity and transitivity follow from the rules $\mathsf{init}$ and $\mathsf{cut}$, respectively. We now show that $\Theta (\mathsf{LMC})$ is a congruence on $\mathbf{Fm}$. For $\varphi_1,\varphi_2,\psi_1,\psi_2\in \mathit{Fm}$, suppose that $\langle \varphi_1, \psi_1 \rangle, \langle \varphi_2, \psi_2 \rangle \in \Theta (\mathsf{LMC})$. By using the rules $\cdot\mathsf{iso}$, $\wedge\mathsf{iso}$, and $\vee\mathsf{iso}$ from Proposition \ref{prop: deriv-rules}, we can derive in one step the sequents:
\[
\begin{array}{lll}
   \varphi_1 \cdot \varphi_2 \Yright \psi_1 \cdot \psi_2  &\phantom{A} &\psi_1 \cdot \psi_2 \Yright \varphi_1 \cdot\varphi_2 \\
    \varphi_1 \wedge \varphi_2 \Yright \psi_1 \wedge \psi_2  &\phantom{A}& \psi_1 \wedge \psi_2 \Yright \varphi_1 \wedge\varphi_2 \\
    \varphi_1 \vee \varphi_2 \Yright \psi_1 \vee \psi_2  &\phantom{A} &\psi_1 \vee \psi_2 \Yright \varphi_1 \vee\varphi_2 \\
\end{array}
\]
Therefore $\langle \varphi_1 \divideontimes \varphi_2, \psi_1 \divideontimes \psi_2\rangle \in \Theta (\mathsf{LMC})$, for $\divideontimes \in \{ \cdot, \wedge, \vee \}$. Suppose now that $\langle \varphi, \psi \rangle \in \Theta (\mathsf{LMC})$. Then $\langle \pos\!\varphi, \pos\!\psi \rangle, \langle \necv\!\varphi, \necv\!\psi \rangle \in \Theta (\mathsf{LMC})$, as shown by the following proofs:\small
\[\begin{bprooftree}
	\def\fCenter{\mbox{\ $\Yright$\ }}
	\AxiomC{$d$}
    \UnaryInfC{$\varphi \Yright \psi$}
    \RightLabel{$\pos\!\mathsf{R}$}
	\UnaryInfC{$\langle \varphi \rangle \Yright \pos\!\psi$}
    \RightLabel{$\pos\!\mathsf{L}$}
    \UnaryInfC{$ \pos\!\varphi \Yright \pos\!\psi$}
\end{bprooftree}
\begin{bprooftree}
	\def\fCenter{\mbox{\ $\Yright$\ }}
	\AxiomC{$d'$}
    \UnaryInfC{$\psi \Yright \varphi$}
    \RightLabel{$\pos\!\mathsf{R}$}
	\UnaryInfC{$\langle \psi \rangle \Yright \pos\!\varphi$}
    \RightLabel{$\pos\!\mathsf{L}$}
    \UnaryInfC{$ \pos\!\psi \Yright \pos\!\varphi$}
\end{bprooftree}
\begin{bprooftree}
	\def\fCenter{\mbox{\ $\Yright$\ }}
	\AxiomC{$d$}
    \RightLabel{}
    \UnaryInfC{$\varphi \Yright \psi$}
    \RightLabel{$\necv\mathsf{L}$}
	\UnaryInfC{$\langle \necv\!\varphi \rangle \Yright \psi$}
    \RightLabel{$\necv\mathsf{R}$}
    \UnaryInfC{$ \necv\!\varphi \Yright \necv\!\psi$}
\end{bprooftree}
\begin{bprooftree}
	\def\fCenter{\mbox{\ $\Yright$\ }}
	\AxiomC{$d'$}
    \RightLabel{}
    \UnaryInfC{$\psi \Yright \varphi$}
    \RightLabel{$\necv\mathsf{L}$}
	\UnaryInfC{$\langle \necv\!\psi \rangle \Yright \varphi$}
    \RightLabel{$\necv\mathsf{R}$}
    \UnaryInfC{$ \necv\!\psi \Yright \necv\!\varphi$}
\end{bprooftree}\]\normalsize
We conclude that $\Theta (\mathsf{LMC}) \in \mathrm{Con}(\mathbf{Fm})$.
We proceed to show that the quotient algebra $\mathbf{Fm}/\Theta (\mathsf{LMC})$ is a closure $\ell$-monoid. Let $\leq_{\mathsf{LMC}}$ be the binary relation on $\mathit{Fm}/\Theta (\mathsf{LMC}) $ defined by $[\varphi] \leq_{\mathsf{LMC}} [\psi]$ iff $\vdash_{\mathsf{LMC}} \varphi \Yright \psi$, for $\varphi, \psi \in \mathit{Fm}$. Clearly, $\leq_{\mathsf{LMC}}$ is a partial order (reflexifity and transitivity again follow from the rules $\mathsf{init}$ and $\mathsf{cut}$; antisymmetry holds by design). Most importantly, $\leq_{\mathsf{LMC}}$ is a lattice order. Indeed it is immediate to check that $[\varphi] \leq_{\mathsf{LMC}} [\psi]$ is equivalent to $ [\varphi] \wedge [\psi] = [\varphi]$. Therefore $\langle\mathit{Fm}/\Theta (\mathsf{LMC}), \wedge, \vee \rangle$ is a lattice. By constructing suitable derivations, it can be readily verified that, for all $\varphi, \psi, \chi \in \mathit{Fm}$:
\begin{align*}      &([\varphi]\cdot[\psi])\cdot[\chi]=[\varphi]\cdot([\psi]\cdot[\chi])\\
    &[\varphi]\cdot[1]=[1]\cdot[\varphi]=[\varphi]\\
    &[\varphi]\cdot[\bot]=[\bot]\cdot[\varphi]=[\bot]\\
    & \text{$[\bot] \leq_{\mathsf{LMC}} [\varphi]$ and $[\varphi] \leq_{\mathsf{LMC}} [\top]$}\\
    & [\varphi] \leq_{\mathsf{LMC}} [\psi] \Rightarrow [\varphi]\cdot[\chi] \leq_{\mathsf{LMC}} [\psi]\cdot[\chi]\\
    & [\varphi] \leq_{\mathsf{LMC}} [\psi] \Rightarrow [\chi]\cdot[\varphi] \leq_{\mathsf{LMC}}[\chi]\cdot[\psi]
\end{align*}
Hence $\langle\mathit{Fm}/\Theta (\mathsf{LMC}), \wedge, \vee, \cdot, [1], [\bot], [\top] \rangle$ is a bounded $\ell$-monoid with $[\bot]$ an absorbing element for multiplication. In order to prove distributive laws for lattice operations, the role of $\sqcap$ is crucial. We just construct derivations for $x \wedge (y \vee z)\Yright (x \wedge y) \vee (x \wedge z)$ and $ (x \vee y) \wedge (x \vee z) \Yright x \vee (y \wedge z)$.
\begin{center}
\begin{prooftree}
\small
	\def\fCenter{\mbox{\ $\Yright$\ }}
	\AxiomC{}
    \RightLabel{$\mathsf{init}$}
    \UnaryInfC{$x \Yright x$}
    \RightLabel{$\sqcap\mathsf{W}_1$}
    \UnaryInfC{$x \sqcap y \Yright x$}
    \AxiomC{}
    \RightLabel{$\mathsf{init}$}
    \UnaryInfC{$y \Yright y$}
    \RightLabel{$\sqcap\mathsf{W}_2$}
    \UnaryInfC{$x \sqcap y \Yright y$}
    \RightLabel{$\wedge \mathsf{R}$}
    \BinaryInfC{$x \sqcap y \Yright x \wedge y$}
    \RightLabel{$\vee\mathsf{R}_1$}
    \UnaryInfC{$x \sqcap y \Yright (x \wedge y) \vee (x \wedge z)$}
    \AxiomC{}
    \RightLabel{$\mathsf{init}$}
    \UnaryInfC{$x \Yright x$}
    \RightLabel{$\sqcap\mathsf{W}_1$}
    \UnaryInfC{$x \sqcap z \Yright x$}
    \AxiomC{}
    \RightLabel{$\mathsf{init}$}
    \UnaryInfC{$z \Yright z$}
    \RightLabel{$\sqcap\mathsf{W}_2$}
    \UnaryInfC{$x \sqcap z \Yright z$}
    \RightLabel{$\wedge \mathsf{R}$}
    \BinaryInfC{$x \sqcap z \Yright x \wedge z$}
    \RightLabel{$\vee\mathsf{R}_2$}
    \UnaryInfC{$x \sqcap y \Yright (x \wedge y) \vee (x \wedge z)$}
    \RightLabel{$\vee\mathsf{L}$}
    \BinaryInfC{$x \sqcap (y \vee z)\Yright (x \wedge y) \vee (x \wedge z)$}
    \RightLabel{$\wedge\mathsf{L}$}
    \UnaryInfC{$x \wedge (y \vee z)\Yright (x \wedge y) \vee (x \wedge z)$}
\end{prooftree}
\begin{prooftree}
\small
\def\fCenter{\mbox{\ $\Yright$\ }}
\AxiomC{}
\RightLabel{$\mathsf{init}$}
\UnaryInfC{$x \Yright x$}
\RightLabel{$\sqcap\mathsf{W}_2$}
\UnaryInfC{$(x \vee y) \sqcap x \Yright x$}
\RightLabel{$\vee\mathsf{R}_1$}
\UnaryInfC{$(x \vee y) \sqcap x \Yright x \vee (y \wedge z)$}
\AxiomC{}
\RightLabel{$\mathsf{init}$}
\UnaryInfC{$x \Yright x$}
\RightLabel{$\sqcap\mathsf{W}_1$}
\UnaryInfC{$x \sqcap z \Yright x$}
\RightLabel{$\vee\mathsf{R}_1$}
\UnaryInfC{$x \sqcap z \Yright x \vee (y \wedge z)$}
\AxiomC{}
\RightLabel{$\mathsf{init}$}
\UnaryInfC{$y \Yright y$}
\RightLabel{$\sqcap\mathsf{W}_1$}
\UnaryInfC{$y \sqcap z \Yright y$}
\AxiomC{}
\RightLabel{$\mathsf{init}$}
\UnaryInfC{$z \Yright z$}
\RightLabel{$\sqcap\mathsf{W}_2$}
\UnaryInfC{$y \sqcap z \Yright z$}
\RightLabel{$\wedge\mathsf{R}$}
\BinaryInfC{$y \sqcap z \Yright y \wedge z$}
\RightLabel{$\vee\mathsf{R}_2$}
\UnaryInfC{$y \sqcap z \Yright x \vee (y \wedge z)$}
\RightLabel{$\vee\mathsf{L}$}
\BinaryInfC{$ (x \vee y) \sqcap z \Yright x \vee (y \wedge z)$}
\RightLabel{$\vee\mathsf{L}$}
\BinaryInfC{$ (x \vee y) \sqcap (x \vee z) \Yright x \vee (y \wedge z)$}
\RightLabel{$\wedge\mathsf{L}$}
\UnaryInfC{$ (x \vee y) \wedge (x \vee z) \Yright x \vee (y \wedge z)$}
\end{prooftree}
\end{center}
Finally, by using Lemma \ref{lem: id-deriv}, we immediately see that operations $\pos [\varphi]:=[\pos\!\varphi]$ and $\necv [\varphi]:=[\necv\!\varphi]$ satisfy equations (1)-(7) in Def. \ref{def: clm}. Therefore $\mathbf{Fm}/\Theta (\mathsf{LMC})\in \mathfrak{LMC}$. Let now $\mathit{h}\colon\mathbf{Fm}\to\mathbf{Fm}/\Theta (\mathsf{LMC})$ be the natural homomorphism defined by $\varphi \mapsto [\varphi]$ and suppose that, for $\Gamma \Yright \varphi$ an $\mathsf{LMC}$-sequent, $\mathfrak{LMC}\models (\Gamma \Yright \varphi)^{\sharp}$. Therefore $\mathfrak{LMC}\models \Gamma^{\natural} \leq \varphi$ and, in particular, $h(\Gamma^{\natural}) \leq_{\mathsf{LMC}} h(\varphi)$ in $\mathbf{Fm}/\Theta (\mathsf{LMC})$. But then, by the definition of $\leq_{\mathsf{LMC}}$, we have $\vdash_{\mathsf{LMC}}\Gamma^{\natural} \Yright \varphi$. Since obviously $\vdash_{\mathsf{LMC}} \Gamma \Yright \Gamma^{\natural}$, by applying $\mathsf{cut}$ we get $\vdash_{\mathsf{LMC}} \Gamma \Yright \varphi$.
\end{proof}
\section{\texorpdfstring{Cut Elimination for $\mathsf{LMC}$}{Cut Elimination for LMC}}\label{sect: cutelim}
This section is devoted to proving a cut elimination result for $\mathsf{LMC}$. As the reader will have observed, in derivations of the form\small
\[
\begin{bprooftree}
    \AxiomC{$d_1$}
    \UnaryInfC{$\Delta \Yright \varphi$}
    \AxiomC{$d_2$}				
	\UnaryInfC{$\Gamma \{ \Pi\{\varphi\}\sqcap \Pi\{\varphi\}\} \Yright \psi $}
    \RightLabel{$\sqcap\mathsf{C}$}
    \UnaryInfC{$\Gamma \{ \Pi\{\varphi\}\} \Yright \psi $}
    \RightLabel{$\mathsf{cut}$}
	\BinaryInfC{$\Gamma\{ \Pi\{\Delta\}\}\Yright \psi$}
\end{bprooftree}
\]\normalsize
permuting the $\sqcap\mathsf{C}$ with $\mathsf{cut}$ does not result in a decrease in the height of $d_2$. A direct proof of cut elimination for $\mathsf{LMC}$ is therefore precluded. This comes as no surprise, since analogous situations arise (with respect to contraction rules for different connectives) in many well-known logical calculi. 
We follow the standard approach of introducing a mix rule—which subsumes $\mathsf{cut}$ as a special case—to simplify the structure of derivations by eliminating, in a single move, derivation steps that obstruct induction in cut elimination arguments. We then establish a mix elimination theorem for $\mathsf{LMC}$ (without $\mathsf{cut}$ and with the mix rule), from which cut elimination is obtained as a direct corollary.
\subsection{A (Weak) Mix Rule for \texorpdfstring{$\mathsf{LMC}$}{LMC}}
The form of a mix rule depends on the intrinsic features of the calculus for which it is designed. In our framework, two central aspects must be addressed. First, structural terms of $\mathsf{LMC}$ are not sets or multisets, but rather have a \textit{tree structure}. Second, as in the case of the substructural logic $\mathsf{FL_{gc}}$ \cite{Surarso-Ono1996}, the rule $\sqcap\mathsf{C}$ operates on structural terms, hence it is a \textit{global} contraction rule. 

For $n\geq 0$, let us write $\Gamma\lBrace \Delta \rBrace_n$ to denote a structural term $\Gamma$ where we distinguish a family of $n$ occurrences of $\Delta$. Note that each of these occurrences corresponds to a node in the syntactic tree associated with $\Gamma$. Clearly, $\Gamma\lBrace \Delta \rBrace_1$ is just an alternative writing for $\Gamma\{ \Delta \}$. The notations $\Gamma\{-\}$ and $\Gamma\lBrace-\rBrace_n$ can also be combined: we thus write $\Gamma\{\Pi\}\lBrace\Delta\rBrace_n$ to denote a structural term $\Gamma$ in which both a single occurrence of $\Pi$ and $n$ occurrences of $\Delta$ are distinguished. In the case where $\Pi$ contains an occurrence of $\Delta$, we write $\Gamma\{\Pi\{\underline{\Delta}\}\}\lBrace\Delta\rBrace_n$ if the selected $\Delta$ in $\Pi$ is included in $\lBrace\Delta\rBrace_n$, and $\Gamma\{\Pi\{\Delta\}\}\lBrace\Delta\rBrace_n$ otherwise. In this setting, assuming the choice of all occurrences of $\Delta$ to be fixed, it is clear that structural terms of the form $\Gamma\{\Pi\{\underline{\Delta}\}\}\lBrace\Delta\rBrace_n$ and $\Gamma\{\Pi\{\Delta\}\}\lBrace\Delta\rBrace_{n-1}$ coincide. 

We are now ready to define a mix rule for $\mathsf{LMC}$. 
\begin{definition}
Let us denote by $\mathsf{mix}$ the inference rule
\begin{prooftree}
	\AxiomC{$\Delta \Yright \varphi$}
	\AxiomC{$\Gamma \lBrace\varphi\rBrace_n\Yright \psi $}
	\BinaryInfC{$\Gamma\lBrace\Delta\rBrace _n\Yright \psi$}
\end{prooftree}
where $\Gamma\lBrace\Delta\rBrace _n$ is the structural term obtained by replacing, in $\Gamma \lBrace\varphi\rBrace_n$, each of the $n$ occurrences of $\varphi$ with an occurrence of $\Delta$. In an instance of $\mathsf{mix}$, $\varphi$ is called the $\mathsf{mix}$-formula. We define $\mathsf{LMC}_{\ltimes}$ as the system $(\mathsf{LMC}-\mathsf{cut})+\mathsf{mix}$.
\end{definition}
Note that $\mathsf{mix}$ is close in spirit to Kiriyama and Ono's \textit{weak-mix} rule for (the propositional fragment of) $\mathsf{FL_{ec}}$ \cite{Kiriyama-Ono1991}, in that it permits the simultaneous substitution of some, but not necessarily all, occurrences of the $\mathsf{mix}$-formula.
\begin{remark}\label{rem: cutmixeq}
Every instance of $\mathsf{cut}$ is nothing but an instance of $\mathsf{mix}$ where just one occurrence of the $\mathsf{mix}$-formula has been singled out. Conversely, every instance of $\mathsf{mix}$ over $n$ occurrences of the $\mathsf{mix}$-formula can be replaced by a sequence of $n$ instances of~$\mathsf{cut}$.
\end{remark}
\subsection{Mix Elimination for \texorpdfstring{$\mathsf{LMC}_{\ltimes}$}{LMC⧔}}
We establish a mix elimination theorem for $\mathsf{LMC}_{\ltimes}$ by adapting the method of Metcalfe \textit{et al.} \cite{MOG2009}, developed for cut-elimination in hypersequent systems for fuzzy logics and subsequently applied to the hypersequent calculus $\mathsf{CSemFL}$~\cite{MPT}. In what follows, for $d$ a derivation in $\mathsf{LMC}_{\ltimes}$, we denote by $h(d)$ its height. 
\begin{definition} The \textnormal{rank} of an instance of $\mathsf{mix}$
\begin{prooftree}
    \AxiomC{$d_1$}
	\UnaryInfC{$\Delta \Yright \varphi$}
    \AxiomC{$d_2$}
	\UnaryInfC{$\Gamma \lBrace\varphi\rBrace_n\Yright \psi $}
    \RightLabel{$\mathsf{mix}$}
	\BinaryInfC{$\Gamma\lBrace\Delta\rBrace _n\Yright \psi$}
\end{prooftree}
is the ordered triple $\langle \mathrm{cp}(\varphi),p(d_1),h(d_1)+h(d_2)\rangle$, where $p(d_1)=0$ if $d_1$ ends with an application of a rule that (read backwards) decomposes $\varphi$, and $p(d_1)=1$ otherwise. We regard ranks as elements of the partially ordered set $\langle \mathbb{N}^3, \leq_{\mathrm{lex}}\rangle$, where $\leq_{\mathrm{lex}}$ is the lexicographic order defined from three copies of $\mathbb{N}$.
\end{definition}
We write $\vdash^{\mathrm{mf}}_{\mathsf{LMC}_{\ltimes}}\!\!\Gamma\Yright\varphi$ to denote that an $\mathsf{LMC}_{\ltimes}$-sequent $\Gamma \Yright \varphi$ admits a $\mathsf{mix}$-free proof.
\begin{theorem}\label{theo: mixelim}
The rule $\mathsf{mix}$ can be eliminated from $\mathsf{LMC}_{\ltimes}$.
\end{theorem}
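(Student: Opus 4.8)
The strategy is a double induction on the rank of a topmost instance of $\mathsf{mix}$, following the Metcalfe--Olivetti--Gabbay technique cited in the excerpt. Concretely, I would consider a derivation in $\mathsf{LMC}_{\ltimes}$ containing at least one application of $\mathsf{mix}$, choose an \emph{uppermost} such application (so that its two immediate subderivations $d_1$ and $d_2$ are already $\mathsf{mix}$-free), and show that it can be replaced by a derivation whose $\mathsf{mix}$-applications all have strictly smaller rank in $\langle \mathbb{N}^3, \leq_{\mathrm{lex}}\rangle$. Since $\leq_{\mathrm{lex}}$ is a well-order on $\mathbb{N}^3$, iterating this procedure terminates and removes all mixes; formally this is an induction on rank with the permutation/reduction arguments constituting the inductive step. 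The triple $\langle \mathrm{cp}(\varphi), p(d_1), h(d_1)+h(d_2)\rangle$ is designed precisely so that the three standard moves each decrease \emph{some} coordinate.

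\textbf{The three cases.}
First I would handle the situation where the last rule of $d_1$ does \emph{not} principally decompose the mix-formula $\varphi$ (i.e. $p(d_1)=1$), and the symmetric situation governed by the last rule of $d_2$. Here one permutes the $\mathsf{mix}$ upward past the lowermost rule application, possibly duplicating it across the premises of a two-premise rule. Each resulting mix keeps the same mix-formula $\varphi$ (so $\mathrm{cp}(\varphi)$ is unchanged) but has strictly smaller combined height $h(d_1)+h(d_2)$, dropping the third coordinate; in the case where we push past a rule that turns $p(d_1)=1$ into $p(d_1)=0$ we instead drop the second coordinate. The delicate sub-case is $\sqcap\mathsf{C}$ in $d_2$: this is exactly the obstruction flagged in the text, and it is resolved by the \emph{weak} character of $\mathsf{mix}$, which allows substituting into an arbitrary number $n$ of occurrences at once. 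Permuting above $\sqcap\mathsf{C}$ we re-target the mix at the term $\Gamma\lBrace\varphi\rBrace_{n+1}$ where the two copies created by contraction are both selected; the single resulting mix again has smaller height, since $\sqcap\mathsf{C}$ is absorbed. Second, when $p(d_1)=0$ and $d_2$ ends by introducing $\varphi$ principally on the left, I carry out the \emph{principal} reductions: these replace the mix on $\varphi$ by one or two mixes on the immediate subformulas of $\varphi$, each of strictly smaller complexity $\mathrm{cp}$, dropping the first coordinate. One reduction must be supplied for each logical connective and modality: $\cdot$, $\wedge$, $\vee$, $\pos$, $\necv$, together with the constants $1,\bot,\top$.

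\textbf{Main obstacle.}
The genuinely non-routine reductions are the modal ones, because the left and right rules for $\pos$ and $\necv$ interact with the structural modality $\langle-\rangle$ and its structural rules $\mathsf{K},\mathsf{T},\mathsf{4}$. For a principal cut on $\pos\varphi$, with $d_1$ ending in $\pos\mathsf{R}$ (so $d_1$ derives $\langle\Gamma'\rangle \Yright \pos\varphi$ from $\Gamma'\Yright\varphi$) and $d_2$ ending in $\pos\mathsf{L}$ (decomposing an occurrence $\pos\varphi$ into $\langle\varphi\rangle$), I must push the lower-complexity mix on $\varphi$ through the structural material that $\mathsf{K}$, $\mathsf{T}$ and $\mathsf{4}$ may have inserted around the distinguished $\langle\varphi\rangle$; dually for $\necv$ via $\necv\mathsf{L},\necv\mathsf{R}$. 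The care required here is to verify that these structural rules permute correctly with a \emph{weak} mix acting on several occurrences simultaneously, and that the substitution of $\langle\Gamma'\rangle$ for each selected $\langle\varphi\rangle$ respects the bracket structure. Since $\pos\mathsf{R}$ and $\necv\mathsf{R}$ govern the \emph{outermost} bracket while $\mathsf{T},\mathsf{4},\mathsf{K}$ act on nested ones, I expect the bookkeeping of bracket depth—rather than any conceptual difficulty—to be where the proof is most laborious. Finally, $\mathsf{cut}$ elimination follows as the announced corollary, using Remark~\ref{rem: cutmixeq} to pass between $\mathsf{cut}$ and $\mathsf{mix}$.
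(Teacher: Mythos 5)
Your proposal is correct and follows essentially the same route as the paper's proof: induction on the lexicographic rank of an uppermost $\mathsf{mix}$ whose premisses are already $\mathsf{mix}$-free, with permutation steps decreasing the height (or the $p(d_1)$ component), principal reductions on each connective decreasing $\mathrm{cp}(\varphi)$, and the weak multi-occurrence $\mathsf{mix}$ neutralising $\sqcap\mathsf{C}$ by targeting both contracted copies at once. The only notable divergence is one of emphasis: the paper treats the $\pos$/$\necv$ reductions as routine and instead devotes most of its effort to the interaction between $\vee\mathsf{L}$ at the root of $d_1$ and the left logical rules at the root of $d_2$ (tracing back to the rule that decomposes the $\mathsf{mix}$-formula), which is precisely where the second rank component $p(d_1)$ is needed.
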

\begin{proof}
Our goal is to show that any derivation in $\mathsf{LMC}_{\ltimes}$ making use of $\mathsf{mix}$ can be transformed into a $\mathsf{mix}$-free one. To this end, it suffices to consider uppermost instances of $\mathsf{mix}$ in derivations. We prove by induction on the $\mathsf{mix}$ rank that, if $d_1$ is a $\mathsf{mix}$-free derivation of $\Delta\Yright \varphi$ and $d_2$ is a $\mathsf{mix}$-free derivation of $\Gamma \lBrace\varphi\rBrace_n\Yright \psi$, then it is possible to construct a $\mathsf{mix}$-free proof of $\Gamma \lBrace\Delta\rBrace_n\Yright \psi$.

As for the base case, suppose that $h(d_1)+h(d_2)=2$. We consider the following alternatives:
\begin{enumerate}
    \item If $d_1$ and $d_2$ are instances of $\mathsf{init}$, then $\varphi=\psi$ and our claim trivially holds.
    \item If $d_1$ is an instance of $\bot\mathsf{L}$ and $d_2$ is an instance of $\top\mathsf{R}$, then we have the following transformation:
\[
\begin{bprooftree}
    \AxiomC{}
    \RightLabel{$\bot\mathsf{L}$}
    \UnaryInfC{$\Delta\{\bot\}\Yright \varphi$}
    \AxiomC{}	
    \RightLabel{$\top\mathsf{R}$}
	\UnaryInfC{$\Gamma\lBrace\varphi\rBrace_n\Yright \top$}
    \RightLabel{$\mathsf{mix}$}
	\BinaryInfC{$\Gamma\lBrace\Delta\{\bot\}\rBrace_n\Yright \top$}
    \end{bprooftree}
    \text{\normalsize$\;\dashrightarrow$}
    \begin{bprooftree}
    \AxiomC{}
    \noLine
	\UnaryInfC{\phantom{$\Gamma\{ \Delta\{\bot\}\}^{\bullet}\Yright \top$}}
    \RightLabel{$\bot\mathsf{L}/\top\mathsf{R}$}
	\UnaryInfC{$\Gamma\lBrace\Delta\{\bot\}\rBrace_n\Yright \top$}
	\end{bprooftree}
\]
\item If $d_1$ is $1\mathsf{R}$ and $d_2$ is the initial sequent $1 \Yright 1$, then by applying $\mathsf{mix}$ we would obtain $\epsilon \Yright 1$ and so the whole derivation reduces to $1\mathsf{R}$. The same reasoning applies if: $d_1$ is the initial sequent $\bot \Yright \bot$ and $d_2$ is any instance of $\bot\mathsf{L}$; $d_1$ ($d_2$) is the initial sequent $\top\Yright\top$ and the endsequent of $d_2$ ($d_1$) is of the form $\Gamma\{\top\}\Yright\top$. 
\end{enumerate}
For the induction step, we distinguish two possible classes of cases:
\begin{itemize}
    \item[(i)] Either no new occurrence of the $\mathsf{mix}$-formula is introduced on the right side of the endsequent of $d_1$ or no new occurrence of the $\mathsf{mix}$-formula is introduced on the left side of the endsequent of $d_2$.
    \item[(ii)] A new occurrence of the $\mathsf{mix}$-formula is introduced both on the right side of the endsequent of $d_1$ and on the left side of the endsequent of $d_2$.
\end{itemize}
Let us begin with the cases falling under (i). Suppose that $d_1$ and $d_2$ end with applications of unary rules that, respectively, do not introduce any new occurrences of the $\mathsf{mix}$-formula on the right side of the endsequent of $d_1$ and the left side of the endsequent of $d_2$. In this case, we have the following transformation:
\begin{equation}\label{eq: one}
\begin{bprooftree}
\AxiomC{$d_1$}
\UnaryInfC{$\Delta' \Yright \varphi$}
\RightLabel{$\mathsf{r}_1$}
\UnaryInfC{$\Delta \Yright \varphi$}
\AxiomC{$d_2$}
\UnaryInfC{$\Gamma' \Yright \psi'$}
\RightLabel{$\mathsf{r}_2$}
\UnaryInfC{$\Gamma\lBrace \varphi \rBrace_n \Yright \psi$}
\RightLabel{$\mathsf{mix}$}
\BinaryInfC{$\Gamma\lBrace \Delta \rBrace_n \Yright \psi$}
\end{bprooftree}
\dashrightarrow
\begin{bprooftree}
\AxiomC{$d_1$}
\UnaryInfC{$\Delta' \Yright \varphi$}
\RightLabel{$\mathsf{r}_1$}
\UnaryInfC{$\Delta \Yright \varphi$}
\AxiomC{$d'_2$}
\UnaryInfC{$\Gamma'\lBrace \varphi \rBrace_m \Yright \psi'$}
\RightLabel{$\mathsf{mix}$}
\BinaryInfC{$\Gamma'\lBrace \Delta \rBrace_m \Yright \psi'$}
\RightLabel{$\mathsf{r}_2$}
\UnaryInfC{$\Gamma\lBrace \Delta \rBrace_n \Yright \psi$}
\end{bprooftree}
\end{equation}
Observe that, although $\mathsf{r_2}$ cannot introduce any new occurrence of $\varphi$, it may still remove one (this is the case where $\mathsf{r_2}=\sqcap\mathsf{C}$). Accordingly, in the right derivation pattern we require $m$ to be either $n$ or $n+1$. As $h(d_2')<h(d_2)$, the induction hypothesis (IH) applies, enabling us to conclude that $\vdash^{\mathrm{mf}}_{\mathsf{LMC}_{\ltimes}}\Gamma\lBrace \Delta \rBrace_n \Yright \psi$. 

If instead $\mathsf{r}_2$ introduces a new occurrence of $\varphi$ within $\Gamma$, then two cases must be considered. If $\mathsf{r}_2$ is $\sqcap\mathsf{W}_1$, then $\mathsf{mix}$ can be smoothly permuted with $\sqcap\mathsf{W}_1$ as illustrated in the following transformation schema:
\small
\begin{equation}\label{eq: two}
\begin{bprooftree}
\AxiomC{$d_1$}
\UnaryInfC{$\Delta' \Yright \varphi$}
\RightLabel{$\mathsf{r}_1$}
\UnaryInfC{$\Delta \Yright \varphi$}
\AxiomC{$d_2$}
\UnaryInfC{$\Gamma\{\Pi\} \Yright \psi$}
\RightLabel{$\sqcap\mathsf{W}_1$}
\UnaryInfC{$\Gamma\{\Pi\sqcap\underline{\varphi}\}\lBrace \varphi \rBrace_n \Yright \psi$}
\RightLabel{$\mathsf{mix}$}
\BinaryInfC{$\Gamma\{\Pi\sqcap\underline{\Delta}\}\lBrace \Delta \rBrace_n \Yright \psi$}
\end{bprooftree}
\dashrightarrow
\begin{bprooftree}
\AxiomC{$d_1$}
\UnaryInfC{$\Delta' \Yright \varphi$}
\RightLabel{$\mathsf{r}_1$}
\UnaryInfC{$\Delta \Yright \varphi$}
\AxiomC{$d'_2$}
\UnaryInfC{$\Gamma\{\Pi\}\lBrace \varphi \rBrace_{n-1}\Yright \psi$}
\RightLabel{$\mathsf{mix}$}
\BinaryInfC{$\Gamma\{\Pi\}\lBrace \Delta \rBrace_{n-1}\Yright \psi$}
\RightLabel{$\sqcap\mathsf{W}_1$}
\UnaryInfC{$\Gamma\{\Pi\sqcap\Delta\}\lBrace \Delta \rBrace_{n-1}\Yright \psi$}
\end{bprooftree}
\end{equation}\normalsize
Since $\Gamma\{\Pi\sqcap\Delta\}\lBrace \Delta \rBrace_{n-1}=\Gamma\{\Pi\sqcap\underline{\Delta}\}\lBrace \Delta \rBrace_{n}$ and $h(d'_2)<h(d_2)$, by the IH we have $\vdash^{\mathrm{mf}}_{\mathsf{LMC}_{\ltimes}}\Gamma\{\Pi\sqcap\underline{\Delta}\}\lBrace \Delta \rBrace_n \Yright \psi$.

If $\mathsf{r}_2$ is of $\cdot\,\mathsf{L}$, $\wedge\mathsf{L}$, $\pos\!\mathsf{L}$, or $\necv\!\mathsf{L}$, then we start from derivations of the form:
\begin{equation}\label{eq: three}
\begin{bprooftree}
\AxiomC{$d_1$}
\UnaryInfC{$\Delta' \Yright \varphi$}
\RightLabel{$\mathsf{r}_1$}
\UnaryInfC{$\Delta \Yright \varphi$}
\AxiomC{$d_2$}
\UnaryInfC{$\Gamma\{\Pi'\} \Yright \psi$}
\RightLabel{$\mathsf{r}_2$}
\UnaryInfC{$\Gamma\{\Pi\{\underline{\varphi}\}\}\lBrace \varphi \rBrace_n \Yright \psi$}
\RightLabel{$\mathsf{mix}$}
\BinaryInfC{$\Gamma\{\Pi\{\underline{\Delta}\}\}\lBrace \Delta \rBrace_n \Yright \psi$}
\end{bprooftree}
\end{equation}
First, we consider the penultimate step of $\mathsf{r}_2$ and replace the possible $n-1$ occurrences of $\varphi$ that appear in $\Gamma\{\Pi'\}$:
\begin{equation}\label{eq: four}
\begin{bprooftree}
\AxiomC{$d_1$}
\UnaryInfC{$\Delta' \Yright \varphi$}
\RightLabel{$\mathsf{r}_1$}
\UnaryInfC{$\Delta \Yright \varphi$}
\AxiomC{$d'_2$}
\UnaryInfC{$\Gamma\{\Pi'\}\lBrace \varphi \rBrace_{n-1} \Yright \psi$}
\RightLabel{$\mathsf{mix}$}
\BinaryInfC{$\Gamma\{\Pi'\}\lBrace \Delta \rBrace_{n-1} \Yright \psi$}
\end{bprooftree}
\end{equation}
Observe that, as $h(d_2')<h(d_2)$, $\vdash^{\mathrm{mf}}_{\mathsf{LMC}_{\ltimes}}\Gamma\{\Pi'\}\lBrace \Delta \rBrace_{n-1} \Yright \psi$ by the IH. We now move upwards along $d_1$ until the rule decomposing the $\mathsf{mix}$-formula $\varphi$ is reached, and, depending on the structure of $\Pi'$, we derive $\Gamma\{\Pi\{\underline{\Delta}\}\}\lBrace \Delta \rBrace_n \Yright \psi$ by using (one or more times) lower ranked applications of $\mathsf{mix}$, each one possibly followed by applications of other rules. To clarify this point, let us consider the case where $\mathsf{r}_2=\cdot\,\mathsf{L}$, $\Pi' := \varphi_1 \circ \varphi_2$, $\varphi:=\varphi_1\cdot\varphi_2$, and $d_1$ has the following structure:
\begin{equation}\label{eq: five}
\begin{bprooftree}
\AxiomC{$d_{11}$}
\UnaryInfC{$\Delta_1\{\xi_1\}\Yright\varphi_1$}
\AxiomC{$d_{12}$}
\UnaryInfC{$\Delta_2\Yright\varphi_2$}
\RightLabel{$\cdot\,\mathsf{R}$}
\BinaryInfC{$\Delta_1\{\xi_1\}\circ\Delta_2\Yright\varphi_1\cdot\varphi_2$}
\AxiomC{$d_{13}$}
\UnaryInfC{$\Delta_1\{\xi_2\}\Yright\varphi_1$}
\AxiomC{$d_{14}$}
\UnaryInfC{$\Delta_2\Yright\varphi_2$}
\RightLabel{$\cdot\,\mathsf{R}$}
\BinaryInfC{$\Delta_1\{\xi_2\}\circ\Delta_2\Yright\varphi_1\cdot\varphi_2$}
\RightLabel{$\vee\mathsf{L}$}
\BinaryInfC{$\Delta_1\{\xi_1\vee\xi_2\}\circ\Delta_2\Yright\varphi_1\cdot\varphi_2$}
\RightLabel{$\mathsf{r}_1$}
\UnaryInfC{$\Delta_1\{\xi_1\vee\xi_2\}\circ\Delta_2'\Yright\varphi_1\cdot\varphi_2$}
\end{bprooftree}
\end{equation}
Let us define $\Gamma':=\Gamma\{\varphi_1 \circ \varphi_2\}\lBrace \Delta_1\{\xi_1\vee\xi_2\}\circ\Delta_2' \rBrace_{n-1}$. We can now apply $\mathsf{mix}$ and replace $\varphi_1$ by $\Delta_1\{\xi_1\}$:
\begin{equation}\label{eq: six}
\begin{bprooftree}
\AxiomC{$d_{11}$}
\UnaryInfC{$\Delta_1\{\xi_1\} \Yright \varphi_1$}
\AxiomC{$\text{(\ref{eq: four})}$}
\UnaryInfC{$\Gamma'\{\underline{\varphi_1}\circ\varphi_2\}\lBrace \varphi \rBrace_{1} \Yright \psi$}
\RightLabel{$\mathsf{mix}$}
\BinaryInfC{$\Gamma'\{\underline{\Delta_1\{\xi_1\}}\circ\varphi_2\}\lBrace \Delta_1\{\xi_1\} \rBrace_{1} \Yright \psi$}
\end{bprooftree}
\end{equation}
Since $\mathrm{cp}(\varphi_1)<\mathrm{cp}(\varphi_1\cdot\varphi_2)$, we have that $\vdash^{\mathrm{mf}}_{\mathsf{LMC}_{\ltimes}}\Gamma'\{\underline{\Delta_1\{\xi_1\}}\circ\varphi_2\}\lBrace \Delta_1\{\xi_1\} \rBrace_{1} \Yright \psi$. Upon defining $\Gamma'':=\Gamma'\{\underline{\Delta_1\{\xi_1\}}\circ\varphi_2\}\lBrace \Delta_1\{\xi_1\} \rBrace_{1}$, we apply $\mathsf{mix}$ to the premisses $\Delta_2\Yright\varphi_2$ and $\Gamma'' \{\Delta_1\{\xi_1\}\circ\underline{\varphi_2}\}\lBrace \varphi_2 \rBrace_{1} \Yright \psi$ and derive $\Gamma'' \{\Delta_1\{\xi_1\}\circ\underline{\Delta_2}\}\lBrace \Delta_2 \rBrace_{1} \Yright \psi$, which admits a $\mathsf{mix}$-free proof by the IH. By the same procedure, using the derivations $d_{13}$ and $d_{14}$, it is possible to obtain $\vdash^{\mathrm{mf}}_{\mathsf{LMC}_{\ltimes}}\Gamma'' \{\Delta_1\{\xi_2\}\circ\underline{\Delta_2}\}\lBrace \Delta_2 \rBrace_{1} \Yright \psi$. Finally, by applying $\vee\mathsf{L}$ and then $\mathsf{r}_1$ to these last two sequents, it is not difficult to see that one obtains a proof of the conclusion of (\ref{eq: three})—for $\mathsf{r}_2=\cdot\,\mathsf{L}$, $\Pi' := \varphi_1 \circ \varphi_2$, $\varphi:=\varphi_1\cdot\varphi_2$, and $d_1$ as in (\ref{eq: five})—where only lower ranked instances of $\mathsf{mix}$ have been applied. The proof for $\mathsf{r}_1$/$\wedge\mathsf{L}$ is analogous. The cases $\mathsf{r}_1$/$\pos\!\mathsf{L}$, $\mathsf{r}_1$/$\necv\!\mathsf{L}$, and $\mathsf{r}_1$/$1\mathsf{L}$ are simpler and are left to the reader.

If $\mathsf{r}_2$ is either $\cdot\mathsf{R}$, $\wedge\mathsf{R}$, or $\vee\mathsf{L}$, then it suffices to permute $\mathsf{mix}$ with $\mathsf{r}_1$ as illustrated in the transformation schema below:
\begin{equation}\label{eq: seven}
\begin{bprooftree}
\AxiomC{$d_1$}
\UnaryInfC{$\Delta' \Yright \varphi$}
\RightLabel{$\mathsf{r}_1$}
\UnaryInfC{$\Delta \Yright \varphi$}
\AxiomC{$d_{21}$}
\UnaryInfC{$\Gamma' \Yright \psi'$}
\AxiomC{$d_{22}$}
\UnaryInfC{$\Gamma'' \Yright \psi''$}
\RightLabel{$\mathsf{r}_2$}
\BinaryInfC{$\Gamma\lBrace \varphi \rBrace_n \Yright \psi$}
\RightLabel{$\mathsf{mix}$}
\BinaryInfC{$\Gamma\lBrace \Delta \rBrace_n \Yright \psi$}
\end{bprooftree}\dashrightarrow
\begin{bprooftree}
\AxiomC{$d'_1$}
\UnaryInfC{$\Delta' \Yright \varphi$}
\AxiomC{$d_{21}$}
\UnaryInfC{$\Gamma' \Yright \psi'$}
\AxiomC{$d_{22}$}
\UnaryInfC{$\Gamma'' \Yright \psi''$}
\RightLabel{$\mathsf{r}_2$}
\BinaryInfC{$\Gamma\lBrace \varphi \rBrace_n \Yright \psi$}
\RightLabel{$\mathsf{mix}$}
\BinaryInfC{$\Gamma\lBrace \Delta' \rBrace_n \Yright \psi$}
\RightLabel{$(\mathsf{r}_1)\!\!\downarrow_n$}
\UnaryInfC{$\Gamma\lBrace \Delta \rBrace_n \Yright \psi$}
\end{bprooftree}
\end{equation}
Suppose now that $\Delta \Yright \varphi$ is obtained via an application of $\vee\mathsf{L}$. This is the only binary rule that does not decompose the $\mathsf{mix}$-formula in the right side of the endsequent of $d_1$. We start with a derivation
\begin{equation}\label{eq: eight}
\begin{bprooftree}
\AxiomC{$d_{11}$}
\UnaryInfC{$\Delta\{\xi_1\}\Yright\varphi$}
\AxiomC{$d_{12}$}
\UnaryInfC{$\Delta\{\xi_2\}\Yright\varphi$}
\RightLabel{$\vee\mathsf{L}$}
\BinaryInfC{$\Delta\{\xi_1\vee \xi_2\}\Yright\varphi$}
\AxiomC{$d_2$}
\UnaryInfC{$\Gamma\{\Pi'\}\Yright\psi$}
\RightLabel{$\mathsf{r}_2$}
\UnaryInfC{$\Gamma\{\Pi\{\underline{\varphi}\}\}\lBrace\varphi\rBrace_n\Yright\psi$}
\RightLabel{$\mathsf{mix}$}
\BinaryInfC{$\Gamma\{\Pi\{\underline{\Delta\{\xi_1\vee \xi_2\}}\}\}\lBrace\Delta\{\xi_1\vee \xi_2\}\rBrace_n\Yright\psi$}
\end{bprooftree}
\end{equation}
We first proceed as in (\ref{eq: four}) and replace the $n-1$ selected occurrences of $\varphi$ appearing in $\Gamma\{\Pi'\}$. Then we apply $\mathsf{r}_2$ and introduce $\varphi$. By the IH, we conclude that $\vdash^{\mathrm{mf}}_{\mathsf{LMC}_{\ltimes}}\Gamma\{\Pi\{\varphi\}\}\lBrace\Delta\{\xi_1 \vee \xi_2\}\rBrace_{n-1}\Yright\psi$. We now trace back along $d_{11}$ and $d_{12}$ until we reach the two instances of the rule $\mathsf{r}_{\varphi}$ decomposing $\varphi$. Consider, for instance, the derivation schema (\ref{eq: nine}). For convenience, and without loss of generality, we assume that $\mathsf{r}_{\varphi}$ is unary, and that no applications of $\vee\mathsf{L}$ occur in the intermediate derivation steps between $\Delta_{11}\Yright\varphi$ and $\Delta\{\xi_1\}\Yright\varphi$, or between $\Delta_{12}\Yright\varphi$ and $\Delta\{\xi_2\}\Yright\varphi$.
\begin{equation}\label{eq: nine}
\begin{bprooftree}
\AxiomC{$d_{11}$}
\UnaryInfC{$\Delta_{11}'\Yright\varphi_{11}'$}
\RightLabel{$\mathsf{r}_{\varphi}$}
\UnaryInfC{$\Delta_{11}\Yright\varphi$}
\UnaryInfC{\shortstack{some\\derivation\\steps}}
\UnaryInfC{$\Delta\{\xi_1\}\Yright\varphi$}
\AxiomC{$d_{12}$}
\UnaryInfC{$\Delta_{12}'\Yright\varphi_{12}'$}
\RightLabel{$\mathsf{r}_{\varphi}$}
\UnaryInfC{$\Delta_{12}\Yright\varphi$}
\UnaryInfC{\shortstack{some\\derivation\\steps}}
\UnaryInfC{$\Delta\{\xi_2\}\Yright\varphi$}
\RightLabel{$\vee\mathsf{L}$}
\BinaryInfC{$\Delta\{\xi_1\vee \xi_2\}\Yright\varphi$}
\AxiomC{$d'_2$}
\UnaryInfC{$\Gamma\{\Pi'\}\lBrace\varphi\rBrace_{n-1}\Yright\psi$}
\RightLabel{$\mathsf{mix}$}
\BinaryInfC{$\Gamma\{\Pi'\}\lBrace\Delta\{\xi_1 \vee \xi_2\}\rBrace_{n-1}\Yright\psi$}
\RightLabel{$\mathsf{r}_2$}
\UnaryInfC{$\Gamma\{\Pi\{\varphi\}\}\lBrace\Delta\{\xi_1 \vee \xi_2\}\rBrace_{n-1}\Yright\psi$}
\end{bprooftree}
\end{equation}
Let us call $d_{11}'$ and $d_{12}'$ the subderivations of $d_{11}$ and $d_{12}$ that end with the sequents $\Delta_{11}\Yright\varphi$ and $\Delta_{12}\Yright\varphi$, respectively. Setting $\Gamma':=\Gamma\{\Pi\{\varphi\}\}\lBrace\Delta\{\xi_1 \vee \xi_2\}\rBrace_{n-1}$, we construct the following derivation: 
\begin{equation*}
\begin{bprooftree}
\AxiomC{$d'_{11}$}
\UnaryInfC{$\Delta_{11}'\Yright\varphi_{11}'$}
\RightLabel{$\mathsf{r}_{11}$}
\UnaryInfC{$\Delta_{11}\Yright\varphi$} 
\AxiomC{$\text{(\ref{eq: nine})}$}
\UnaryInfC{$\Gamma'\{\Pi\{\underline{\varphi}\}\}\lBrace \varphi\rBrace_{1}\Yright\psi$}
\RightLabel{$\mathsf{mix}$}
\BinaryInfC{$\Gamma'\{\Pi\{\underline{\Delta_{11}}\}\}\lBrace \Delta_{11}\rBrace_{1}\Yright\psi$}
\UnaryInfC{\shortstack{some\\derivation\\steps}}
\UnaryInfC{$\Gamma'\{\Pi\{\Delta\{\xi_1\}\}\}\Yright\psi$}
\AxiomC{$d'_{12}$}
\UnaryInfC{$\Delta_{12}'\Yright\varphi_{12}'$}
\RightLabel{$\mathsf{r}_{12}$}
\UnaryInfC{$\Delta_{12}\Yright\varphi$} 
\AxiomC{$\text{(\ref{eq: nine})}$}
\UnaryInfC{$\Gamma'\{\Pi\{\underline{\varphi}\}\}\lBrace \varphi\rBrace_{1}\Yright\psi$}
\RightLabel{$\mathsf{mix}$}
\BinaryInfC{$\Gamma'\{\Pi\{\underline{\Delta_{12}}\}\}\lBrace \Delta_{12}\rBrace_{1}\Yright\psi$}
\UnaryInfC{\shortstack{some\\derivation\\steps}}
\UnaryInfC{$\Gamma'\{\Pi\{\Delta\{\xi_2\}\}\}\Yright\psi$}
\RightLabel{$\vee\mathsf{L}$}
\BinaryInfC{$\Gamma'\{\Pi\{\Delta\{\xi_1\vee\xi_2\}\}\}\Yright\psi$}
\end{bprooftree}
\end{equation*}
Note that $p(d_{11}')=p(d_{12}')=0$, whereas in (\ref{eq: eight}) we have $p(d_1)=1$; hence, the IH applies yielding $\vdash^{\mathrm{mf}}_{\mathsf{LMC}_{\ltimes}}\Gamma'\{\Pi\{\underline{\Delta_{11}}\}\}\lBrace \Delta_{11}\rBrace_{1}\Yright\psi$ and $\vdash^{\mathrm{mf}}_{\mathsf{LMC}_{\ltimes}}\!\!\Gamma'\{\Pi\{\underline{\Delta_{12}}\}\}\lBrace \Delta_{12}\rBrace_{1}\Yright\psi$. Therefore $\vdash^{\mathrm{mf}}_{\mathsf{LMC}_{\ltimes}}\Gamma'\{\Pi\{\Delta\{\xi_1\vee\xi_2\}\}\}\Yright\psi$. It is immediate to check that the latter sequent is precisely the conclusion of (\ref{eq: eight}). We leave to the reader the cases in which $\mathsf{r}_1 = \vee\mathsf{L}$ and $\mathsf{r}_2$ is a binary rule: the proofs for the $\vee\mathsf{L}$/$\wedge\mathsf{R}$ and $\vee\mathsf{L}$/$\cdot\,\mathsf{R}$ combinations proceed similarly to (\ref{eq: seven}), the only difference being that the IH must be applied twice. As for the $\vee\mathsf{L}$/$\vee\mathsf{L}$ case, one must first replace the $n-1$ occurrences of the $\mathsf{mix}$-formula on the left sides of the two premisses of $\vee\mathsf{L}$ in the last step of $d_2$; the proof then proceeds as above.

If a new occurrence of the $\mathsf{mix}$-formula is introduced on the right side of the endsequent of $d_1$, and nothing happens on the left side of the endsequent of $d_2$, the proof is straightforward as it suffices to permute $\mathsf{mix}$ with (possibly iterated applications) of $\mathsf{r}_1$.

As for (ii), we restrict our attention to cases involving logical rules. Instances where $d_2$ ends with an application of $\sqcap\mathsf{W}_1$ are treated as in (\ref{eq: two}).

Let us consider the following instance of $\mathsf{mix}$:
\begin{equation}\label{eq: cdot1}
\begin{bprooftree}
\AxiomC{$d_{11}$}
\UnaryInfC{$\Delta_1 \Yright \varphi$}
\AxiomC{$d_{12}$}
\UnaryInfC{$\Delta_2 \Yright \psi$}
\RightLabel{$\cdot\,\mathsf{R}$}
\BinaryInfC{$\Delta_1 \circ \Delta_2 \Yright \varphi \cdot\psi$}
\AxiomC{$d_2$}
\UnaryInfC{$\Gamma\{\varphi \circ \psi\}\Yright \chi$}
\RightLabel{$\cdot \, \mathsf{L}$}
\UnaryInfC{$\Gamma\{\underline{\varphi \cdot \psi}\}\lBrace\varphi\cdot\psi\rBrace_n\Yright \chi$}
\RightLabel{$\mathsf{mix}$}
\BinaryInfC{$\Gamma\{\underline{\Delta_1 \circ \Delta_2}\}\lBrace\Delta_1 \circ \Delta_2\rBrace_n\Yright \chi$}
\end{bprooftree}
\end{equation}
In order to apply the IH, just like in some of the previous cases, we first replace possible occurrences of $\varphi\cdot\psi$ that appear in $\Gamma\{\varphi \circ \psi\}$:
\begin{equation}\label{eq: cdot2}
\begin{bprooftree}
\AxiomC{$d_{11}$}
\UnaryInfC{$\Delta_1 \Yright \varphi$}
\AxiomC{$d_{12}$}
\UnaryInfC{$\Delta_2 \Yright \psi$}
\RightLabel{$\cdot\,\mathsf{R}$}
\BinaryInfC{$\Delta_1 \circ \Delta_2 \Yright \varphi \cdot\psi$}
\AxiomC{$d'_2$}
\UnaryInfC{$\Gamma\{\varphi \circ \psi\}\lBrace\varphi \cdot \psi\rBrace_{n-1}\Yright \chi$}
\RightLabel{$\mathsf{mix}$}
\BinaryInfC{$\Gamma\{\varphi \circ \psi\}\lBrace\Delta_1 \circ \Delta_2\rBrace_{n-1}\Yright \chi$}
\end{bprooftree}
\end{equation}
Since $h(d_2')<h(d_2)$, by the IH $\vdash^{\mathrm{mf}}_{\mathsf{LMC}_{\ltimes}} \Gamma\{\varphi \circ \psi\}\lBrace\Delta_1 \circ \Delta_2\rBrace_{n-1}\Yright \chi$. We now proceed with the replacement of $\varphi$ by $\Delta_1$ and $\psi$ by $\Delta_2$ in $\varphi \circ \psi$. Let us define $\Gamma':=\Gamma\{\varphi \circ \psi\}\lBrace\Delta_1 \circ \Delta_2\rBrace_{n-1}$. We have:
\begin{equation}\label{eq: cdot3}
\begin{bprooftree}
\AxiomC{$d'_{11}$}
\UnaryInfC{$\Delta_1 \Yright \varphi$}
\AxiomC{$\text{(\ref{eq: cdot2})}$}
\UnaryInfC{$\Gamma'\{\underline{\varphi} \circ \psi\}\lBrace\varphi\rBrace_1\Yright \chi$}
\RightLabel{$\mathsf{mix}$}
\BinaryInfC{$\Gamma'\{\underline{\Delta_1} \circ \psi\}\lBrace\Delta_1\rBrace_1\Yright \chi$}
\end{bprooftree}
\end{equation}
where $\vdash^{\mathrm{mf}}_{\mathsf{LMC}_{\ltimes}}\Gamma'\{\underline{\Delta_1} \circ \psi\}\lBrace\Delta_1\rBrace_1\Yright \chi$, since $\mathrm{cp}(\varphi)<\mathrm{cp}(\varphi\cdot \psi)$ and (\ref{eq: cdot2}) applies a lower ranked instance of $\mathsf{mix}$. Let us now define $\Gamma'':= \Gamma'\{\underline{\Delta_1} \circ \psi\}\lBrace\Delta_1\rBrace_m$. We replace $\psi$ by $\Delta_2$:
\begin{equation}\label{eq: cdot4}
\begin{bprooftree}
\AxiomC{$d'_{12}$}
\UnaryInfC{$\Delta_2 \Yright \psi$}
\AxiomC{$\text{(\ref{eq: cdot3})}$}
\UnaryInfC{$\Gamma''\{\Delta_1 \circ \underline{\psi}\}\lBrace\psi\rBrace_1\Yright \chi$}
\RightLabel{$\mathsf{mix}$}
\BinaryInfC{$\Gamma''\{\Delta_1 \circ \underline{\Delta_2}\}\lBrace\Delta_2\rBrace_1\Yright \chi$}
\end{bprooftree}
\end{equation}
As before, since $\mathrm{cp}(\psi)<\mathrm{cp}(\varphi\cdot \psi)$ and (\ref{eq: cdot3}) applies a lower ranked instance of $\mathsf{mix}$, we have $\vdash^{\mathrm{mf}}_{\mathsf{LMC}_{\ltimes}}\Gamma''\{\Delta_1 \circ \underline{\Delta_2}\}\lBrace\Delta_2\rBrace_1\Yright \chi$. It is immediate that $\Gamma''\{\Delta_1 \circ \underline{\Delta_2}\}\lBrace\Delta_2\rBrace_1$ coincides with $\Gamma\{\underline{\Delta_1 \circ \Delta_2}\}\lBrace\Delta_1 \circ \Delta_2\rBrace_n$ in (\ref{eq: cdot1}), whence $\vdash^{\mathrm{mf}}_{\mathsf{LMC}_{\ltimes}}\Gamma\{\underline{\Delta_1 \circ \Delta_2}\}\lBrace\Delta_1 \circ \Delta_2\rBrace_n \Yright \chi$. The treatment of the case $\wedge\mathsf{R}$/$\wedge\mathsf{L}$ is similar. Let us now consider the derivation:
\begin{equation}\label{eq: vee1}
\begin{bprooftree}
\AxiomC{$d_1$}
\UnaryInfC{$\Delta\Yright\varphi$}
\RightLabel{$\vee\mathsf{R}_1$}
\UnaryInfC{$\Delta \Yright \varphi\vee \psi$}
\AxiomC{$d_{21}$}
\UnaryInfC{$\Gamma\{\varphi\}\Yright\chi$}
\AxiomC{$d_{22}$}
\UnaryInfC{$\Gamma\{\psi\}\Yright\chi$}
\RightLabel{$\vee\mathsf{L}$}
\BinaryInfC{$\Gamma\{\underline{\varphi\vee \psi}\}\lBrace\varphi\vee \psi\rBrace_n \Yright \chi$}
\RightLabel{$\mathsf{mix}$}
\BinaryInfC{$\Gamma\{\underline{\Delta}\}\lBrace\Delta\rBrace_n \Yright \chi$}
\end{bprooftree}
\end{equation}
Here too, we begin by replacing selected occurrences of the $\mathsf{mix}$-formula that appear in $\Gamma\{\varphi\}$ (or, equivalently, in $\Gamma\{\psi\}$):
\begin{equation}\label{eq: vee2}
\begin{bprooftree}
\AxiomC{$d_1$}
\UnaryInfC{$\Delta\Yright\varphi$}
\RightLabel{$\vee\mathsf{R}_1$}
\UnaryInfC{$\Delta \Yright \varphi\vee \psi$}
\AxiomC{$d_{21}'$}
\UnaryInfC{$\Gamma\{\varphi\}\lBrace\varphi\vee \psi\rBrace_{n-1}\Yright\chi$}
\RightLabel{$\mathsf{mix}$}
\BinaryInfC{$\Gamma\{\varphi\}\lBrace\Delta\rBrace_{n-1} \Yright \chi$}
\end{bprooftree}
\end{equation}
By the IH, $\vdash^{\mathrm{mf}}_{\mathsf{LMC}_{\ltimes}}\Gamma\{\varphi\}\lBrace\Delta\rBrace_{n-1} \Yright \chi$. Upon defining $\Gamma':=\Gamma\{\varphi\}\lBrace\Delta\rBrace_{n-1}$, we construct the derivation:
\[
\begin{bprooftree}
\AxiomC{$d'_1$}
\UnaryInfC{$\Delta\Yright\varphi$}
\AxiomC{$\text{(\ref{eq: vee2})}$}
\UnaryInfC{$\Gamma'\{\underline{\varphi}\}\lBrace\varphi\rBrace_{1} \Yright \chi$}
\RightLabel{$\mathsf{mix}$}
\BinaryInfC{$\Gamma'\{\underline{\Delta}\}\lBrace\Delta\rBrace_{1} \Yright \chi$}
\end{bprooftree}
\]
So we have $\vdash^{\mathrm{mf}}_{\mathsf{LMC}_{\ltimes}}\Gamma'\{\underline{\Delta}\}\lBrace\Delta\rBrace_{1} \Yright \chi$, and since $\Gamma'\{\underline{\Delta}\}\lBrace\Delta\rBrace_{1}$ coincides with $\Gamma\{\underline{\Delta}\}\lBrace\Delta\rBrace_n$ in (\ref{eq: vee1}), we obtain 
$\vdash^{\mathrm{mf}}_{\mathsf{LMC}_{\ltimes}}\Gamma\{\underline{\Delta}\}\lBrace\Delta\rBrace_n \Yright \chi$. The case $\vee\mathsf{R}_2$/$\vee\mathsf{L}$ follows exactly the same pattern. The proof for the $\pos\!\mathsf{R}$/$\pos\!\mathsf{L}$ and $\necv\!\mathsf{R}$/$\necv\!\mathsf{L}$ combinations follow by analogous, yet simpler, reasoning, and are thus left to the reader.\normalsize
\end{proof}
\begin{corollary}
The rule $\mathsf{cut}$ can be eliminated from $\mathsf{LMC}$.
\end{corollary}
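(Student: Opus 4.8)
The plan is to derive this as an immediate consequence of the mix elimination theorem (Theorem \ref{theo: mixelim}), exploiting the tight correspondence between $\mathsf{cut}$ and $\mathsf{mix}$ recorded in Remark \ref{rem: cutmixeq}. First I would observe that every $\mathsf{LMC}$-derivation is automatically an $\mathsf{LMC}_{\ltimes}$-derivation: by Remark \ref{rem: cutmixeq}, each application of $\mathsf{cut}$ is just the special case of $\mathsf{mix}$ in which exactly one occurrence of the $\mathsf{mix}$-formula is singled out (i.e. $n=1$). Hence, given $\vdash_{\mathsf{LMC}} \Gamma \Yright \varphi$, we obtain a (possibly $\mathsf{mix}$-using) derivation of $\Gamma \Yright \varphi$ in $\mathsf{LMC}_{\ltimes}$.

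Next I would apply Theorem \ref{theo: mixelim} to this derivation, yielding $\vdash^{\mathrm{mf}}_{\mathsf{LMC}_{\ltimes}} \Gamma \Yright \varphi$, that is, a $\mathsf{mix}$-free proof in $\mathsf{LMC}_{\ltimes}$. Finally, since $\mathsf{LMC}_{\ltimes} = (\mathsf{LMC}-\mathsf{cut})+\mathsf{mix}$, a $\mathsf{mix}$-free $\mathsf{LMC}_{\ltimes}$-derivation uses only the rules of $\mathsf{LMC}-\mathsf{cut}$; it is therefore precisely a cut-free derivation in $\mathsf{LMC}$. This shows that every $\mathsf{LMC}$-provable sequent admits a cut-free proof, which is the claim.

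There is no genuine obstacle at this stage: all the combinatorial difficulty—permuting $\mathsf{mix}$ upward through contraction, the \emph{global} $\sqcap\mathsf{C}$ rule and the modal rules, together with the lexicographic induction on the $\mathsf{mix}$ rank—has already been absorbed into Theorem \ref{theo: mixelim}. The only points requiring care are purely bookkeeping: confirming both directions of the $\mathsf{cut}$/$\mathsf{mix}$ equivalence of Remark \ref{rem: cutmixeq} (a single $\mathsf{cut}$ is a one-occurrence $\mathsf{mix}$, and an $n$-occurrence $\mathsf{mix}$ unfolds into $n$ successive $\mathsf{cut}$s), so that provability in $\mathsf{LMC}$ and in $\mathsf{LMC}_{\ltimes}$ coincide, and noting that ``$\mathsf{mix}$-free in $\mathsf{LMC}_{\ltimes}$'' and ``cut-free in $\mathsf{LMC}$'' denote exactly the same class of derivations.
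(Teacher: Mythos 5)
Your proposal is correct and follows exactly the paper's argument: the paper also derives the corollary immediately from Remark \ref{rem: cutmixeq} (a $\mathsf{cut}$ is a one-occurrence $\mathsf{mix}$) and Theorem \ref{theo: mixelim}, with your write-up merely making the bookkeeping explicit.
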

\begin{proof}
    Immediate from Remark \ref{rem: cutmixeq} and Theorem \ref{theo: mixelim}.
\end{proof}
\section{Conclusion}\label{sect: conc}
$\mathsf{LMC}$ is not the first substructural logic to feature a residuated pair of modal operators. The fundamental role played by adjunction in modal logic is well established, as evidenced by Dunn’s Gaggle Theory~\cite{Dunn91,Dunn1993,Dunn1995} and related display calculi (see, e.g.,~\cite{Gore1998}), as well as by the extensive body of work on categorial type logics developed over the past three decades~\cite{ArecesBernardi04,ArBerMo03,KM97,Moortgat96,Moortgat97,MR2012}. It is to the latter research programme that we owe significant advances in the proof-theoretic treatment of residuated or Galois-connected unary modalities, beginning with Moortgat’s (non-associative) modal Lambek calculus $\mathsf{NL}(\pos)$~\cite{Moortgat96}, from which we took both the logical and the structural modal rules for $\mathsf{LMC}$.

However, as noted in~\cite{MenniSmith2014}, despite such a large number of contributions, the development of Gentzen-style systems for \emph{positive}, \emph{implication-free} structures equipped with adjoint pairs of unary operators is a relatively recent research direction. Inspired by Kashima’s nested sequents for tense logic~\cite{Kashima1994}, Sadrzadeh and Dyckhoff~\cite{SADY2009} initiated this line of inquiry by introducing a tree-style sequent calculus for bounded distributive lattices endowed with \emph{families} of residuated pairs of the form $\langle \posv,\nec\rangle$ (backward dimond/forward box). In the present paper, although we focus on structures equipped with a single residuated pair, we work within a richer algebraic framework that includes a (non-commutative) monoid multiplication and imposes stronger conditions on the modal operators. In developing both our algebraic model for f-properties and~$\mathsf{LMC}$, we have deliberately chosen to begin with a small set of fundamental operations. From the viewpoint of rule apparatus, however, the calculus is not strictly minimal due to the inclusion of~$\mathsf{K}$, $\mathsf{T}$, and~$\mathsf{4}$. Finally, while our system satisfies the \emph{subformula property}, it does not satisfy the \emph{substructure property} because of rule~$\mathsf{T}$.

As for future work, several directions remain to be explored. 

First, we intend to establish decidability and complexity results for~$\mathsf{LMC}$, where, as for decidability, the crucial point is clearly to handle the issues arising from the structural non-analyticity of~$\mathsf{T}$. Next, in order to develop more effective tools for reasoning about the structure of f-properties, we plan to extend our research in two parallel directions which, we hope, will eventually converge.
\begin{enumerate}
    \item We aim to provide a complete axiomatisation of~$\wp(\Sigma^{*})$ as a bounded~$\ell$-monoid, together with a structural proof-theoretic counterpart.
    \item We intend to study expansions of closure~$\ell$-monoids obtained by adding operations such as the divisions of residuated lattices or the Kleene star. Besides offering a coherent further development of our algebraic theory of f-properties, we believe that the proof-theoretic counterparts of these enriched structures are intrinsically interesting in their own right. In particular, it would be worthwhile, both for~$\mathsf{LMC}$ and for such extensions thereof, to address the problem of obtaining cut-free completeness results, as achieved in~\cite{Kozak09} for the Distributive Full Lambek Calculus, and in~\cite{FSM2025} for Infinitary Action Logic.
\end{enumerate}
In pursuing these objectives, if necessary, we will explore alternative proof-theoretic frameworks such as nested sequents~\cite{SADY2009,LP2024} or deep inference~\cite{Br2009} Moreover, we believe it would be fruitful to relate~$\mathsf{LMC}$ to other proof-theoretic approaches to the analysis of concurrent systems, in particular those developed in~\cite{AM2025,AMM2025,Guglielmi94,Guglielmi95}.

From a strictly algebraic perspective, we first intend to develop further the study of the variety of closure~$\ell$-monoids by establishing representation and structure theorems. Moreover, as the reader will have noticed, the term “closure~$\ell$-monoid” is clearly inspired by McKinsey and Tarski’s \emph{closure algebras}~\cite{McKT44,McKT46}, which provide the algebraic semantics for~$\mathsf{S4}$. In addition, our terminological choice reflects the fact that the prefix-closure operation plays a central and distinctive role in our theory, and “closure~$\ell$-monoid” is meant to emphasise this. However, one might argue that, in view of the residuation between $\pos$ and $\necv$, a more appropriate name for our structures would be “monadic~$\ell$-monoids”, by analogy with Halmos’ \emph{monadic Boolean algebras}~\cite{Halmos54-56}, the algebraic semantics of~$\mathsf{S5}$. Indeed, letting~$\nu$ be the similarity type of closure~$\ell$-monoids, these structures in fact seem to be m-$\nu$-lattices in the sense of~\cite{CMT2024}; thus it would be interesting to investigate how $\mathfrak{LMC}$ relates to the general theory of one-variable fragments of first-order logics.

\subsubsection*{Acknowledgements} 
This paper benefited greatly from the insightful and thorough comments of George Metcalfe and Simon Santschi, to whom we express our sincere gratitude. We thank Augusto Antonio Basilico for providing valuable remarks that helped improve this work. We are also indebted to Nick Galatos, Peter Jipsen, Francesco Paoli, and Mario Piazza for their constructive observations. Preliminary versions of this paper were presented at the University of Cagliari and at the Scuola Normale Superiore of Pisa, and we wish to thank all those who attended these talks for their helpful feedback.

\bibliographystyle{splncs04}
\bibliography{bibliography}
\end{document}